\pgfplotsset{compat=newest}
\title{Faster Low-Rank Approximation and Kernel Ridge Regression
via the Block-Nystr\"om Method}
\author{Sachin Garg\thanks{Computer Science and Engineering, University of Michigan, \texttt{sachg@umich.edu}}\and  Micha{\l} Derezi{\'n}ski\thanks{Computer Science and Engineering, University of Michigan, \texttt{derezin@umich.edu}} }
\date{}
\newcommand{\calh}{\mathcal{H}}
\newcommand{\calc}{\mathcal{C}}
\newcommand{\cals}{\mathcal{S}}
\newcommand{\calz}{\mathcal{Z}}
\newcommand{\calp}{\mathcal{P}}
\newcommand{\call}{\mathcal{L}}
\newcommand{\calx}{\mathcal{X}}
\def\nnz{{\mathrm{nnz}}}
\def\qed{\hfill$\blacksquare$\medskip}
\def\W{\mathbf W}
\def\T{\mathbf T}
\def\R{\mathbf R}
\def\K{\mathbf K}
\newcommand{\BlackBox}{\rule{1.5ex}{1.5ex}}  
\DeclareMathOperator*{\argmin}{\mathop{\mathrm{argmin}}}
\DeclareMathOperator*{\diag}{\mathop{\mathrm{diag}}}
\def\x{\mathbf x}
\def\y{\mathbf y}
\def\a{\mathbf a}
\def\b{\mathbf b}
\def\w{\mathbf w}
\def\v{\mathbf v}
\def\e{\mathbf e}
\def\u{\mathbf u}
\def\ee{\mathrm{e}}
\def\B{\mathbf B}
\def\A{\mathbf A}
\def\C{\mathbf C}
\def\M{\mathbf M}
\def\S{\mathbf S}
\def\Z{\mathbf Z}
\def\I{\mathbf I}
\def\A{\mathbf A}
\def\P{\mathbf P}
\def\Q{\mathbf Q}
\def\E{\mathbb E}
\def\R{\mathbb R} 
\def\tr{\mathrm{tr}}
\def\cS{{\mathcal{S}}}
\let\origtop\top
\renewcommand\top{{\scriptscriptstyle{\origtop}}} 
\definecolor{silver}{cmyk}{0,0,0,0.3}
\definecolor{yellow}{cmyk}{0,0,0.9,0.0}
\definecolor{reddishyellow}{cmyk}{0,0.22,1.0,0.0}
\definecolor{black}{cmyk}{0,0,0.0,1.0}
\definecolor{darkYellow}{cmyk}{0.2,0.4,1.0,0}
\definecolor{darkSilver}{cmyk}{0,0,0,0.1}
\definecolor{grey}{cmyk}{0,0,0,0.5}
\definecolor{darkgreen}{cmyk}{0.6,0,0.8,0}
\newcommand{\Green}[1]{{\color{darkgreen}  {#1}}}
\newcommand{\Blue}[1]{\color{blue}{#1}\color{black}}
\newcommand{\Brown}[1]{{\color{brown}{#1}\color{black}}}
\newenvironment{proof}{\par\noindent{\bf Proof\ }}{\hfill\BlackBox\\[2mm]}
\newtheorem{theorem}{Theorem}
\newtheorem{example}{Example}
\newtheorem{lemma}{Lemma}
\newtheorem{proposition}{Proposition}
\newtheorem{remark}{Remark}
\newtheorem{corollary}{Corollary}
\newtheorem{definition}{Definition}
\newtheorem{conjecture}[theorem]{Conjecture}
\newtheorem{claim}[theorem]{Claim}
\newtheorem{assumption}{Assumption}
\def\qbar{\bar{q}}
\def\Bb{\overline{\B}}
\def\Csize{m}
\def\DB{\mathcal{D}_{\text{\small bless}}}
\def\DA{\mathcal{D}_{\text{\small dpp}}}
\def\ie{i.e.,\xspace}
\def\eg{e.g.,\xspace}
\def\whp{w.h.p.,\xspace}
\def\nystrom{Nystr\"om\xspace}
\def\dppvfx{\textsc{DPP-VFX}\xspace}
\newcommand{\wh}[1]{\widehat{#1}}
\newcommand{\sizeS}[1]{|S_{#1}|}
\newcommand\rnumber{\mathop{\mbox{$r$-$\mathit{number}$}}}
\newcommand{\HRule}{\rule{\linewidth}{0.5mm}}
\newcommand{\Hrule}{\rule{\linewidth}{0.3mm}}
\newcommand{\red}[1]{\textcolor{red}{#1}}
\newcommand\es{{\mathcal{E}}}
\newcommand\ep{{\mathcal{E}^{'}}}
\newcommand\eone{{\mathcal{E}_1}}
\newcommand\etwo{{\mathcal{E}_2}}
\newcommand\ethr{{\mathcal{E}_3}}
\newcommand\ind{{\boldsymbol{1}}}
\newcommand{\nsq}[1]{\left\lVert#1\right\rVert^2}
\newcommand{\Qm}{\Q_{-i}}
\newcommand{\Qmj}{\Q_{-ij}}
\newcommand{\Qmr}{\Q_{-ir}}
\newcommand{\Qmjk}{\Q_{-ij_{k}}}
\newcommand{\norm}[1]{\left\lVert#1\right\rVert}
\newcommand{\Answer}[1]{\textcolor{magenta}{#1}}
\begin{document}

\maketitle

\begin{abstract}%
  The Nystr\"om method is a popular low-rank approximation technique for large matrices that arise in kernel methods and convex optimization. Yet, when the data exhibits heavy-tailed spectral decay, the effective dimension of the problem often becomes so large that even the Nystr\"om method may be outside of our computational budget. To address this, we propose Block-Nystr\"om, an algorithm that injects a block-diagonal structure into the Nystr\"om method, thereby significantly reducing its computational cost while recovering strong approximation guarantees. We show that Block-Nystr\"om can be used to construct improved preconditioners for second-order optimization, as well as to efficiently solve kernel ridge regression for statistical learning over Hilbert spaces. Our key technical insight is that, within the same computational budget, combining several smaller Nystr\"om approximations leads to stronger tail estimates of the input spectrum than using one larger approximation. Along the way, we provide a recursive preconditioning scheme for efficiently inverting the Block-Nystr\"om matrix, and provide new statistical learning bounds for a broad class of approximate kernel ridge regression solvers.
\end{abstract}

\section{Introduction}

Fast algorithms for approximating large positive semidefinite (psd) matrices are central to many problems in computational learning. For instance, when minimizing a convex training loss, approximating the psd Hessian matrix is a key step in preconditioning first-order optimization methods \citep[e.g.,][]{qu2016sdna} and designing Newton-type algorithms \citep[e.g.,][]{erdogdu2015convergence}. Moreover, for methods such as Gaussian processes \citep{williams2006gaussian} and kernel ridge regression \citep[KRR, e.g.,][]{bach2013sharp}, approximating the psd kernel matrix is an essential step for making these techniques scalable to large datasets.

The Nystr\"om method has proven to be one of the most successful techniques for psd matrix approximation \citep{williams2000using}, leading to numerous Nystr\"om-preconditioned optimization algorithms \cite[e.g.,][]{avron2017faster,frangella2023randomized,frangella2024promise}, as well as learning algorithms based on Nystr\"om kernel approximations \cite[e.g.,][]{rudi2015less,rudi2018fast,burt2019rates}. In the most basic version of the method, given an $n\times n$ psd matrix $\A$, we select a subset $S\subseteq\{1,...,n\}$ of $m$ coordinates (landmarks), and use them to construct $\hat\A = \C\W^{-1}\C^\top$, where $\C=\A_{:,S}$ is the submatrix of the columns of $\A$ indexed by $S$, whereas $\W=\A_{S,S}$ is the principal submatrix of $\A$ indexed by $S$. Given a subset $S$, this approximate decomposition can be obtained by evaluating $nm$ entries of $\A$ and inverting the principal submatrix $\W$ at the cost of $O(m^3)$ operations. Essentially, this method projects $\A$ onto a subspace defined by the landmarks, thus it is crucial to ensure that the subset $S$ is small but representative of $\A$'s structure.

One of the key advantages of the Nystr\"om method is that it combines good practical performance with strong theoretical guarantees. To ensure the latter, a long line of works \citep[including][]{ridge-leverage-scores,musco2017recursive,rudi2018fast} have developed a randomized landmark selection technique called approximate ridge leverage score sampling. Given some $\lambda>0$, we define $\lambda$-ridge leverage scores of $\A$ as the diagonal entries of $\A(\A+\lambda\I)^{-1}$, and their sum is called the $\lambda$-effective dimension, $d_\lambda(\A) := \tr(\A(\A+\lambda\I)^{-1})\leq n$. It is known that to obtain a strong Nystr\"om approximation of $\A$ along the top part of its spectrum above the $\lambda$ threshold, we must sample $m=\tilde O(d_\lambda(\A))$ landmarks according to its $\lambda$-ridge leverage scores (we use $\tilde O$ to hide logarithmic factors). Formally, this gives the following guarantee for $\hat \A$ as an approximation of $\A$:
\begin{align}
 \text{($\lambda$-regularized $\alpha$-approximation)}\quad   \alpha^{-1}(\A+\lambda\I)\preceq \hat\A+\lambda\I\preceq \A+\lambda\I,\qquad\qquad\label{eq:approx}
\end{align}
where $\preceq$ denotes the Loewner ordering and the above sampling scheme achieves $\alpha=2$. Such $\alpha$-approximations have several downstream applications, including where $\alpha$ is the approximation factor in the statistical risk of Nystr\"om-approximated KRR (Section \ref{s:app-hilbert}), or where $\alpha$ represents the condition number after preconditioning an $\ell_2$-regularized Hessian matrix (Section~\ref{s:app-precond}).

While Nystr\"om approximations are efficient, they still carry a substantial computational cost when the $\lambda$-effective dimension is large (i.e., when the regularizer $\lambda$ is small or the spectrum of $\A$ is heavy-tailed). Obtaining a ridge leverage score sample of size $m=\tilde O(d_\lambda(\A))$ takes up to $\tilde O(nm^2)$ operations \citep{musco2017recursive}, while the cost of applying and/or inverting the matrix $\hat\A+\lambda\I$ requires additional $\tilde O(nm+m^3)$ time. This leads to quadratic or even cubic dependence on the landmark sample size $m$, which is a challenge for learning tasks with large effective dimension, such as KRR in the unattainable setting \citep{lin2020convergences} and other problems with heavy-tailed spectral distributions. This leads to our main motivating question:
\begin{quote}
    \textit{How to construct a $\lambda$-regularized psd matrix approximation when $\lambda$-effective dimension is so large that the classical Nystr\"om approximation is outside of our time budget?}
\end{quote}

To address this question, we propose Block-Nystr\"om, a cost-effective alternative to the Nystr\"om method, which, roughly speaking, sparsifies the Nystr\"om principal submatrix $\W=\A_{S,S}$ into a block-diagonal structure. This not only reduces the inversion cost of $\W$, but also decreases the effective dimension associated with the method, thereby reducing the cost of ridge leverage score sampling. We show that Block-Nystr\"om achieves a better approximation factor $\alpha$ than the classical Nystr\"om under the same computational budget when the effective dimension is large and the spectral decay is heavy-tailed. We summarize Block-Nystr\"om in the following theorem.

\begin{theorem}[Block-Nystr\"om]\label{t:block-nystrom}
    Given a psd matrix $\A\in\R^{n\times n}$, $\lambda>0$, and $\alpha\geq 1$, we can construct a matrix data structure $\hat\A$ of rank $m=\tilde O(\alpha d_{\alpha^2\lambda}(\A))$ in time $\tilde O\big((nm^2+m^3)/\alpha^2 + nm\big)$ such that:
   
    \begin{enumerate}
        \item $\hat\A$ is a $\lambda$-regularized $O(\alpha)$-approximation of $\A$ in the sense of \eqref{eq:approx};
        \item We can apply $\hat\A$ to a vector in $\tilde O(nm)$ time;
        \item We can apply $(\hat\A+\lambda\I)^{-1}$ to a vector in $\tilde O\big(nm\cdot\alpha^{o(1)}\big)$ time.
    \end{enumerate}
   
\end{theorem}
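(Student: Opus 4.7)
The plan is to build Block-Nystr\"om by sampling $b = \Theta(\alpha)$ independent landmark subsets $S_1,\ldots,S_b$, each of size $k = \tilde O(d_{\alpha^2\lambda}(\A))$, via approximate ridge leverage score sampling at the coarser regularization level $\alpha^2\lambda$, and combining their Nystr\"om contributions into
$$
\hat\A \;=\; \tfrac{1}{b}\,\C\W_B^{-1}\C^{\top} \;=\; \tfrac{1}{b}\sum_{i=1}^{b}\A_{:,S_i}\A_{S_i,S_i}^{-1}\A_{S_i,:},
$$
where $\C=[\A_{:,S_1}\;\cdots\;\A_{:,S_b}]$ and $\W_B=\mathrm{blockdiag}(\A_{S_i,S_i})$. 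The $1/b$ rescaling ensures $\hat\A\preceq\A$ since each $\hat\A_i:=\A_{:,S_i}\A_{S_i,S_i}^{-1}\A_{S_i,:}\preceq\A$ by the Schur complement property. This yields rank $m=bk=\tilde O(\alpha\,d_{\alpha^2\lambda})$. The construction cost decomposes into $b$ per-block leverage-score samplings (each $\tilde O(nk^2)$ via the recursive leverage-score estimator of \citet{musco2017recursive}) plus $b$ block inversions (each $O(k^3)$) and $O(nm)$ storage of $\C$, which --- when the sampling distribution is amortized across blocks --- totals $\tilde O((nm^2+m^3)/\alpha^2+nm)$.

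For Part 1, the classical ridge-leverage-score analysis \citep{ridge-leverage-scores,musco2017recursive} gives that each block satisfies $\hat\A_i\preceq\A$ and $\A+\alpha^2\lambda\I\preceq 2(\hat\A_i+\alpha^2\lambda\I)$ with high probability. The upper side of \eqref{eq:approx}, $\hat\A+\lambda\I\preceq\A+\lambda\I$, then follows from the rescaling and Loewner monotonicity. For the lower side $\hat\A+\lambda\I\succeq\Omega(\alpha^{-1})(\A+\lambda\I)$, I would split the spectrum of $\A$ at a threshold $\tau=\Theta(\alpha\lambda)$: on the tail eigenspace ($\sigma_j<\tau$) we have $(\A+\lambda\I)\preceq O(\alpha)\lambda\I$, so $\hat\A+\lambda\I\succeq\lambda\I$ already delivers the factor-$\alpha$ bound; on the head eigenspace ($\sigma_j\geq\tau$) the $\Theta(\alpha)$ independent block approximations combine via a matrix Chernoff / Bernstein bound for independent psd sums to concentrate $\hat\A$ around a constant fraction of $\A$ on that subspace. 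This is the paper's key technical point: the aggregation of independent small blocks produces sharper head/tail estimates than a single Nystr\"om of comparable total landmark budget could provide.

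For Parts 2 and 3 the block structure pays off directly. Applying $\hat\A$ to a vector costs $\tilde O(nk+k^2)$ per block for a total of $\tilde O(nm)$. Applying $(\hat\A+\lambda\I)^{-1}$ cannot be reduced to independent block solves because the $\lambda\I$ term couples the blocks, so I would use a novel recursive preconditioning scheme in which a Block-Nystr\"om at a coarser regularization level serves as preconditioner for the current solve, recursed for $O(\log\log\alpha)$ levels until the effective condition number drops to $O(1)$. A careful accounting of the per-level iteration counts in the resulting telescoping cost yields the $\alpha^{o(1)}$ overhead claimed by the theorem, on top of the $\tilde O(nm)$ per-iteration cost.

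The main obstacle is Part 1 --- bridging the gap between the coarse sampling regularization $\alpha^2\lambda$ and the fine target $\lambda$ with only an $O(\alpha)$ loss. A naive Loewner sum of per-block $2$-approximations at level $\alpha^2\lambda$ yields only a $2b$-approximation still at the coarse level, which would translate to an $O(\alpha^2)$-approximation at level $\lambda$ rather than the desired $O(\alpha)$. Closing this gap requires genuinely exploiting the independence of the $b$ blocks via matrix concentration on the head subspace, combined with the spectral split that delegates the tail directions to the $\lambda\I$ regularizer. A secondary challenge is orchestrating the recursive preconditioner in Part 3 so that its per-level iteration counts telescope to $\alpha^{o(1)}$, which requires Block-Nystr\"om's approximation factor to degrade only mildly as the regularization level is coarsened up the recursion tree.
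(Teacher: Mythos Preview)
Your overall plan matches the paper's: average $q=\Theta(\alpha)$ independent Nystr\"om blocks sampled at the coarse level $\alpha^2\lambda$, then invoke matrix Chernoff to upgrade the approximation from level $\alpha^2\lambda$ to level $\lambda$ with only an $O(\alpha)$ loss. You also correctly diagnose that the per-block high-probability guarantee alone gives only an $O(\alpha^2)$-approximation.

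The gap is in how you actually close that $\alpha^2\to\alpha$ gap. Your head/tail split at $\tau=\Theta(\alpha\lambda)$ asserts that on the head, matrix Chernoff ``concentrates $\hat\A$ around a constant fraction of $\A$.'' This fails on the intermediate range $\sigma_j\in[\alpha\lambda,\alpha^2\lambda]$: there the expectation satisfies $\e_j^\top\E[\hat\A_i]\e_j\le \sigma_j^2/(\sigma_j+\alpha^2\lambda)=\Theta(\sigma_j^2/\alpha^2\lambda)$, which at $\sigma_j=\alpha\lambda$ is only $\Theta(\lambda)=\sigma_j/\alpha$, not a constant fraction of $\sigma_j$. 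Chernoff concentrates around the \emph{expectation}, and the per-block bound $\|\A-\hat\A_i\|\le\alpha^2\lambda$ says nothing about the expectation in this range (averaging a deterministic inequality does not improve it). The ingredient you are missing is a nontrivial lower bound on the expected projection, namely $\E[\P_i]\succeq\tfrac12\A(\A+\alpha^2\lambda\I)^{-1}$ (Lemma~\ref{rls-projection}). With this, the paper does not split head and tail at all; it works with $\Z_i=(\A+\lambda\I)^{-1/2}(\hat\A_i+\lambda\I)(\A+\lambda\I)^{-1/2}\preceq\I$ and shows by a one-variable minimization over $\sigma_j$ that $\mu_{\min}:=\lambda_{\min}(\E\Z_i)\ge\tfrac18\sqrt{\lambda/(\alpha^2\lambda)}=\Theta(1/\alpha)$. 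The minimizer is exactly $\sigma_j=\sqrt{\lambda\cdot\alpha^2\lambda}=\alpha\lambda$, which is why your threshold is the right location but for the wrong reason; then Chernoff with $q=\Theta(\alpha\log n)$ finishes.

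For Part~3, the recursion depth is $O(\log\alpha/\log c)$ for a slowly growing $c$, not $O(\log\log\alpha)$. More substantively, proving that $\M_{j+1}:=\hat\A_{[q/c^{j+1}]}+c^{2(j+1)}\lambda\I$ is an $O(c^2)$-preconditioner for $\M_j$ again hinges on the expectation: the paper sandwiches both $\M_j$ and $\M_{j+1}$ to within $(1+\theta)$ of $\E\hat\A_i+c^{2j}\lambda\I$ via a separate Chernoff argument (Theorem~\ref{t:exp_nys_con}), whose variance bound $\|\Z_i\|=O(\sqrt{\lambda'/\lambda})$ itself relies on Lemma~\ref{rls-projection}. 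So the expected-projection lower bound is the load-bearing lemma throughout, and it does not appear in your proposal.
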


\paragraph{Construction} Let $S = (S_1, S_2, ... , S_q)$ be a collection of $m=qb$ coordinates drawn according to approximate $\alpha^2\lambda$-ridge leverage score sampling, split up into $q=\tilde O(\alpha)$ blocks of size $b = \tilde O(d_{\alpha^2\lambda}(\A))$. We define the Block-Nystr\"om approximation as follows:

\begin{align}
\hat\A=\C\W^{-1}\C^\top\quad\text{for}\quad 
\C=\A_{:,S}\quad\text{and}\quad\W=q\diag\big(\A_{S_1,S_1},...,\A_{S_q,S_q}\big).\label{eq:block-nystrom}
\end{align}

Note that Block-Nystr\"om can be viewed as an extension of the classical Nystr\"om approximation (obtained by setting $q=1$). In fact, Theorem \ref{t:block-nystrom} recovers the standard Nystr\"om guarantee (including all of the time complexities up to logarithmic factors) for $\alpha=1$. In that case, we obtain an $O(1)$-approximation at the cost of $\tilde O(nm^2/\alpha^2)=\tilde O(nd_\lambda(\A)^2)$. As we relax the approximation guarantee by increasing~$\alpha$, the effective dimension decreases to $d_{\alpha^2\lambda}(\A)$, thus reducing this dominant cost. 

Can a similar relaxed approximation guarantee be achieved with the classical Nystr\"om method? Yes, but it will not reduce the computational cost as substantially as Block-Nystr\"om. A simple calculation shows that a Nystr\"om approximation constructed from $\tilde O(d_{\alpha\lambda}(\A))$ landmarks drawn according to $\alpha\lambda$-ridge leverage scores also achieves an $O(\alpha)$-approximation. However, the cost will be higher than Block-Nystr\"om, since $d_{\alpha\lambda}(\A)\geq d_{\alpha^2\lambda}(\A)$, where the gap depends on the spectral profile of $\A$. As a concrete example, suppose that the spectrum of $\A$ exhibits polynomial decay, $\lambda_i(\A) =\Theta(i^{-1/\gamma})$, as is typical in kernel methods. Then, we have $d_{\alpha\lambda}(\A) = \Omega(\alpha^{\gamma} d_{\alpha^2\lambda}(\A))$, so Block-Nystr\"om's $\tilde O(nd_{\alpha^2\lambda}(\A)^2)$ runtime is faster than classical Nystr\"om by a factor of $\tilde\Omega(\alpha^{2\gamma})$, which gets more significant the more heavy-tailed the spectrum of $\A$ is. A detailed cost comparison is deferred to the discussion of the concrete applications of Block-Nystr\"om.

\paragraph{Key technical insights} Why does Block-Nystr\"om offer a computational advantage over classical Nystr\"om? To offer some intuition, we discuss two key technical insights that unlocked our analysis.
\begin{enumerate}
    \item \textit{Optimizing bias-variance trade-off}. The Block-Nystr\"om construction \eqref{eq:block-nystrom} can be equivalently formulated as an average of smaller classical Nystr\"om approximations, $\hat\A_{[q]} = \frac1q\sum_{i=1}^q\hat\A_i$, where $\hat\A_i = \C_i\W_i^{-1}\C_i^\top$ with $\C_i=\A_{:,S_i}$ and $\W_i=\A_{S_i,S_i}$. Thus, $\hat\A_{[q]}$ should behave similarly to the expectation $\E[\hat\A_1]$. While taking expectation over the landmark samples does not substantially improve the approximation of the top part of $\A$'s spectrum (bias dominates variance), it dramatically improves the tail estimates (variance dominates bias). Thus, via careful matrix concentration analysis, we can optimize this bias-variance trade-off to choose the optimal amount of averaging. This also explains why Block-Nystr\"om makes the biggest gains for heavy-tailed spectra (e.g., polynomial decay), where tail estimates matter the most. 
   
    \item \textit{Fast inversion via recursive preconditioning}. In most applications of the Nystr\"om method, quickly computing the regularized inverse $(\hat\A+\lambda\I)^{-1}$ is crucial. This is usually done via the Woodbury formula: $(\C\W^{-1}\C^\top+\lambda\I)^{-1} = \frac1\lambda(\I-\C(\C^\top\C+\lambda\W)^{-1}\C^\top)$, which requires computing and inverting the matrix $(\C^\top\C+\lambda\W)^{-1}$. Unfortunately, adding $\C^\top\C$ eliminates the block-diagonal structure of $\lambda\W$, thereby seemingly erasing the computational benefits of Block-Nystr\"om. We propose an alternative approach, inspired by the recursive preconditioning scheme of \cite{derezinski2025approaching}, where we use an iterative solver to implicitly compute the inverse $(\hat\A_{[q]}+\lambda\I)^{-1}$. We precondition this solver by a smaller Block-Nystr\"om matrix with increased regularization, $\hat\A_{[q/c]}+c^2\lambda\I$ for $c=\tilde O(1)$, and repeat the procedure recursively until reaching the single Nystr\"om matrix that can be inverted via Woodbury. We show that this scheme implements inversion in almost-linear time.
\end{enumerate}
\subsection{Application: Preconditioning convex optimization}
\label{s:app-precond}
One of the primary settings where the Nystr\"om method has received considerable attention is convex optimization. Here, the task is to minimize a convex function $g:\R^n\rightarrow\R$ by utilizing its gradient and Hessian information. Prior works have used Nystr\"om and other psd matrix approximations of the Hessian $\nabla^2g(\x)$ to extract second-order information from the function by constructing a preconditioner for a first-order method, both in the general convex setting \citep{erdogdu2015convergence,ye2020nesterov,frangella2024promise,sun2025sapphire} as well as when $g$ is a quadratic \citep{avron2017faster,frangella2023randomized,dmy24}. In many cases, these preconditioners are treated as a black box, so that one can simply apply our Block-Nystr\"om method in place of existing approaches, and use guarantee \eqref{eq:approx} to ensure fast linear convergence for strongly convex objectives.

To illustrate this, we apply our Theorem \ref{t:block-nystrom} to a convergence result for a Nesterov-accelerated approximate Newton method \citep[Theorem 2 in][]{ye2020nesterov}, which implies that Block-Nystr\"om can be effectively used to achieve fast linear convergence in a neighborhood around the solution.

\begin{corollary}[Block-Nystr\"om preconditioning]\label{c:opt}
    Let $g(\x) = \psi(\x) + \lambda\|\x\|^2$, where $\psi:\R^n\rightarrow\R$ is convex with an $L$-Lipschitz-continuous Hessian, and if $L>0$, suppose that $\x_0$ is in a sufficiently small neighborhood around $\x^*=\argmin_\x g(\x)$. Also, let $\hat\A$ be the Block-Nystr\"om approximation of $\nabla^2\psi(\x_0)$ with regularizer $\lambda$ and approximation factor $\alpha\geq 1$. Then, for any $\epsilon>0$, after $\tilde O(\sqrt\alpha\log1/\epsilon)$ applications of $(\hat\A+\lambda\I)^{-1}$ and of $\nabla\psi$, we can obtain $\hat\x$ such that:
    \begin{align*}
        g(\hat\x)- g(\x^*)\leq \epsilon\cdot\big[g(\x_0)-g(\x^*)\big].
    \end{align*}
\end{corollary}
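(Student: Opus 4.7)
The plan is to reduce the statement to a direct application of the accelerated approximate Newton convergence theorem of \citet{ye2020nesterov}, using Theorem~\ref{t:block-nystrom} only as a black box to supply the preconditioner quality guarantee. Throughout, write $\A = \nabla^2\psi(\x_0)$, so that $\nabla^2 g(\x_0) = \A + 2\lambda\I$, and let $\M = \hat\A + \lambda\I$ denote the preconditioner produced by Block-Nystr\"om.

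First, I would invoke Theorem~\ref{t:block-nystrom}(1) on $\A$ with regularizer $\lambda$ and approximation factor $\alpha$ to obtain the Loewner sandwich
\begin{align*}
O(\alpha)^{-1}\bigl(\A + \lambda\I\bigr) \preceq \M \preceq \A + \lambda\I .
\end{align*}
Chaining this with the trivial inequalities $\A + \lambda\I \preceq \A + 2\lambda\I = \nabla^2 g(\x_0) \preceq 2(\A + \lambda\I)$, which hold because $\A \succeq 0$, yields $\M \preceq \nabla^2 g(\x_0) \preceq O(\alpha)\,\M$. In other words, $\M$ is a $\kappa$-spectral approximation of $\nabla^2 g(\x_0)$ with $\kappa = O(\alpha)$. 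This $\kappa$ is the condition number that will drive the iteration count.

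Next, I would apply Theorem~2 of \citet{ye2020nesterov}, which analyses the Nesterov-accelerated inexact Newton scheme with an arbitrary fixed preconditioner $\M$ satisfying a $\kappa$-spectral approximation of the Hessian along the trajectory. For a strongly convex objective (here, strong convexity $\lambda$ comes from the $\lambda\|\x\|^2$ term) whose smooth part has $L$-Lipschitz Hessian, and whose starting point lies in a neighborhood of $\x^*$ of radius depending only on $L$ and $\lambda$ (all of $\R^n$ when $L=0$), their theorem guarantees that after $K = \tilde O(\sqrt\kappa\,\log(1/\epsilon))$ iterations the iterate $\hat\x$ satisfies $g(\hat\x) - g(\x^*) \leq \epsilon\,[g(\x_0) - g(\x^*)]$. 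Each iteration uses one evaluation of $\nabla\psi$ (the $\lambda\|\x\|^2$ gradient is trivial) together with a constant number of applications of $\M^{-1}$, matching the oracle accounting in the statement. Substituting $\kappa = O(\alpha)$ delivers the claimed $\tilde O(\sqrt\alpha\,\log(1/\epsilon))$ iteration count.

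The one substantive obstacle is transferring the spectral equivalence from $\x_0$, where Block-Nystr\"om was built, to the iterates $\x_k$ actually encountered. Since $\M$ only approximates $\nabla^2\psi(\x_0)$, keeping $\M$ an $O(\alpha)$-preconditioner of $\nabla^2 g(\x_k)$ requires that $\|\nabla^2\psi(\x_k) - \nabla^2\psi(\x_0)\|$ stay small relative to $\lambda$ along the trajectory; this is exactly where the $L$-Lipschitz Hessian hypothesis and the localization of $\x_0$ near $\x^*$ enter. These are precisely the hypotheses already baked into the neighborhood assumption of Ye et al.'s theorem, so once we verify that its preconditioner and smoothness hypotheses hold with $\kappa = O(\alpha)$, the corollary follows immediately by substitution.
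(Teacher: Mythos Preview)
Your proposal is correct and follows essentially the same approach as the paper: the paper simply states that the corollary follows by combining Theorem~\ref{t:block-nystrom} with Theorem~2 of \citet{ye2020nesterov}, and you have spelled out exactly that reduction, including the Loewner chaining $\M \preceq \nabla^2 g(\x_0) \preceq O(\alpha)\,\M$ and the role of the Lipschitz-Hessian/neighborhood assumption. The paper does not provide any further detail beyond what you have written.
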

To provide a concrete computational comparison between Block-Nystr\"om and other preconditioners, 
 in Section \ref{s:quadratic-main} we study the task of minimizing a quadratic function $g(\x) = \frac12\x^\top\A\x - \x^\top\b$ under heavy-tailed spectrum. We illustrate this here on the class of input matrices that follow the spiked covariance model \citep{johnstone2001distribution}, i.e., consisting of non-isotropic low-rank signal distorted by isotropic noise. Following \cite{dy24}, we model this by allowing some top-$k$ part of $\A$'s spectrum (the signal) to be arbitrarily ill-conditioned, while requiring the remaining tail (the noise) to have condition number $O(1)$. We show that a careful implementation of Block-Nystr\"om preconditioning achieves better time complexity for solving this quadratic minimization than the previous best known Nystr\"om-based preconditioner \citep{dmy24} when the number of large eigenvalues is $O(n^{0.82})$, improving the runtime from $\tilde O(n^{2.065})$ to $\tilde O(n^{2.044})$. Similar gains can be obtained for matrices with other heavy-tailed spectral decay profiles.
\begin{corollary}\label{c:quadratic}
    Given a strongly convex quadratic $g(\x) = \frac12\x^\top\A\x - \x^\top\b$ such that $\A\in\R^{n\times n}$ has at most $O(n^{0.82})$ eigenvalues larger than $O(1)$ times its smallest eigenvalue, there is an algorithm that with high probability computes $\hat\x$ such that $g(\hat\x)\leq (1+\epsilon)\cdot \min_{\x\in\R^n}g(\x)$ in time $\tilde O(n^{2.044}\log1/\epsilon)$.
\end{corollary}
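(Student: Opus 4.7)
The plan is to plug the Block-Nystr\"om construction of Theorem \ref{t:block-nystrom} directly into a preconditioned Krylov solver for $\A\x=\b$, and then to optimize the free parameter $\alpha$ against the assumed spiked spectrum. Concretely, I would set the regularizer $\lambda=\Theta(\sigma_{\min}(\A))$, so that $\A\preceq \A+\lambda\I\preceq 2\A$, and hence the guarantee $\alpha^{-1}(\A+\lambda\I)\preceq \hat\A+\lambda\I\preceq \A+\lambda\I$ from \eqref{eq:approx} promotes $\M:=\hat\A+\lambda\I$ to an $O(\alpha)$-spectral-approximation of $\A$ itself. Running preconditioned conjugate gradient on $\A\x=\b$ with preconditioner $\M$ (a quadratic-specific analogue of Corollary \ref{c:opt}, which also side-steps the Lipschitz-Hessian neighborhood hypothesis) then converges geometrically in the $\A$-norm, yielding $g(\hat\x)-\min g \leq \epsilon(g(\x_0)-\min g)$ after $\tilde O(\sqrt\alpha\log 1/\epsilon)$ iterations; each iteration costs one dense matrix--vector product with $\A$ at $O(n^2)$ plus one application of $\M^{-1}$ at $\tilde O(nm\cdot\alpha^{o(1)})$ by part 3 of Theorem \ref{t:block-nystrom}.

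Next I would evaluate the effective dimension entering Theorem \ref{t:block-nystrom}. Under the spiked-tail hypothesis that all but $k=O(n^{0.82})$ eigenvalues of $\A$ are $O(\sigma_{\min}(\A))$, each of the $k$ large eigenvalues contributes at most $1$ to $d_{\alpha^2\lambda}(\A)$, while each tail eigenvalue $\sigma_i$ contributes at most $\sigma_i/(\alpha^2\lambda)=O(1/\alpha^2)$; hence $d_{\alpha^2\lambda}(\A)=O(k+n/\alpha^2)$ and the Block-Nystr\"om rank is $m=\tilde O(\alpha k + n/\alpha)$. Summing the preprocessing cost $\tilde O((nm^2+m^3)/\alpha^2+nm)$ of Theorem \ref{t:block-nystrom} with the per-iteration cost $\tilde O(\sqrt\alpha(n^2+nm)\log 1/\epsilon)$ leaves $\alpha$ as a single scalar to tune. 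At $k=n^{0.82}$, the iteration term is minimized by equating $\alpha k\asymp n/\alpha$, giving $\alpha=\sqrt{n/k}=n^{0.09}$, $m=n^{0.91}$, and a per-iteration cost of $\tilde O(\sqrt\alpha\cdot n^2)=\tilde O(n^{2.045})$, which is within the target $\tilde O(n^{2.044})$ up to polylogarithmic slack.

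The hardest step will be reconciling the preprocessing cost with this iteration budget: at face value $nm^2/\alpha^2 = n\cdot d_{\alpha^2\lambda}(\A)^2$ is already $\Theta(n^{2.64})$ at the above parameters, so a naive implementation of Block-Nystr\"om is too slow. The fix is to carry out the dominant products inside Theorem \ref{t:block-nystrom}'s construction --- the Gram matrices $\C^\top\C$, their analogues inside the recursive preconditioning scheme, and the ridge leverage score estimation --- via fast rectangular matrix multiplication at the current best exponents $\omega(1,1,s)$, and then to jointly re-balance $\alpha$ and $m$ (rather than minimizing each separately) so that the sub-$n^2$ overhead of rectangular MM is absorbed into the iteration term. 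Executing this joint optimization, deferred in full to Section \ref{s:quadratic-main}, sharpens the earlier $\tilde O(n^{2.065})$ runtime of \cite{dmy24} down to the claimed $\tilde O(n^{2.044}\log 1/\epsilon)$.
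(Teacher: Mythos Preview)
Your high-level plan --- set $\lambda=\Theta(\lambda_{\min}(\A))$, build the Block-Nystr\"om preconditioner with $\alpha=\sqrt{n/k}$, and run a preconditioned Krylov method --- is exactly what the paper does, and your iteration-cost accounting $\tilde O(\sqrt\alpha\cdot n^2)=\tilde O(n^{2.045})$ is correct. You also correctly isolate the real obstacle: the generic preprocessing bound $\tilde O(nm^2/\alpha^2)=\tilde O(n\, d_{\alpha^2\lambda}(\A)^2)$ from Theorem~\ref{t:block-nystrom} is $\Theta(n^{2.64})$ at these parameters and blows the budget.

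Where your proposal diverges from the paper, and where it has a genuine gap, is the fix for that bottleneck. You propose to recover the budget via fast \emph{rectangular} matrix multiplication inside the construction and then ``jointly re-balance $\alpha$ and $m$'', with details deferred. But the dominant preprocessing cost is the \emph{ridge leverage score estimation} for the psd matrix $\A$, and even with fast (square) matrix multiplication the recursive sampler of \cite{musco2017recursive} costs $\tilde O(nk^{\omega-1})\approx n^{2.125}$ here --- still above $n^{2.044}$ --- and you give no argument that rectangular exponents would close this particular gap. The paper does \emph{not} use rectangular matrix multiplication. Instead, it introduces a new specialized leverage-score result (Lemma~\ref{l:rls-square-main}): under the flat-tail assumption, the $\bar\lambda$-ridge leverage scores of $\A$ are dominated up to $O(1)$ by the $\tilde\lambda$-ridge leverage scores of $\A^2$; and because $\A^2=\A\cdot\A$ comes with an explicit square-root factorization, its leverage scores can be estimated in $\tilde O(\nnz(\A)+k^\omega)=\tilde O(n^2)$ time via sparse subspace embeddings and Johnson--Lindenstrauss. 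With this lemma the total preprocessing cost drops to $\tilde O(n^2+\alpha k^\omega)$, yielding the final runtime $\tilde O\big(n^2(n/k)^{1/4}+k^\omega(n/k)^{1/2}\big)$, which balances at $k\approx n^{0.82}$ to give $\tilde O(n^{2.044})$. The reduction-to-$\A^2$ idea is the missing ingredient in your proposal; ``rectangular matrix multiplication'' is neither what the paper does nor, as far as you have argued, sufficient on its own.
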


\subsection{Application: Least Squares Regression over Hilbert spaces}
\label{s:app-hilbert}
Another significant application of the Nystr\"om method is in speeding up kernel ridge regression for learning over Hilbert spaces. In this model, we receive $n$ i.i.d.~samples $\{(\x_i,y_i)\}_{i=1}^n$ from an unknown probability measure $\rho$ over $\mathcal H\times \R$ where $\mathcal H$ is a Hilbert space, and our goal is to approximate the expected risk minimizer under square loss, $\inf_{f\in\mathcal H}\E_{\rho}[(f(\x)-y)^2]$, where $f(\x):=\langle f,\x\rangle_{\mathcal H}$. When the minimizing hypothesis $f_{\mathcal H}$ lies in $\mathcal H$ (the attainable case), this gives rise to a very well understood regression framework, which can be effectively learned via kernel ridge regression:
\begin{align*}
    \hat f = \sum_{i=1}^n w_i\x_i,\quad\text{where}\quad \w = (\K+n\lambda^*\I)^{-1}\y,\quad \K=\big[\langle\x_i,\x_j\rangle_{\mathcal H}\big]_{ij}.
\end{align*}
Optimal learning rates for kernel ridge regression have been derived in the attainable case \citep{caponnetto2007optimal}, in particular ensuring that the effective dimension under optimal regularizer $n\lambda^*$ is always bounded by $O(\sqrt n)$. This enables using efficient Nystr\"om-based approximations of $\K$ which recover the optimal rate with a low computational cost \citep{rudi2015less,rudi2017falkon,rudi2018fast}.

However, the situation is different in the considerably harder unattainable case \citep{lin2020convergences,lin2020optimal}, where the minimizing hypothesis does not lie in the Hilbert space, but rather, in a larger function space, as controlled by a smoothness parameter $\zeta\in(0,1)$ (see Assumption \ref{as:3}). 
In the unattainable case (i.e., $\zeta<1/2$), the effective dimension is often much larger (even close to~$n$),
making the Nystr\"om method far more expensive if we wish to recover the convergence rate of full KRR. Thus, it is natural to ask how much we can reduce our computational budget by relaxing the desired convergence rate so that we can use a coarser $\alpha$-approximation of $\K$. We do this by considering $n\lambda^{*}$-regularized $\alpha$-approximation of $\K$, where $\alpha=O(n^\theta)$ for small enough $\theta>0$.

Unfortunately, existing risk analysis for KRR applies only to projection-based kernel approximations, such as the classical Nystr\"om method. Block-Nystr\"om is no longer a strict projection, so to apply it, we must first extend the KRR risk analysis in the unattainable case to a broader class of kernel $\alpha$-approximations. We do this in the following result, which is stated here informally for Block-Nystr\"om, but applies much more broadly and thus should be of independent interest.

\begin{theorem}[Block-Nystr\"om KRR, informal Corollary \ref{c:avg_nys_risk}]
    Given Hilbert space $\mathcal H$ and probability measure $\rho$ over $\mathcal H\times \R$ with smoothness $\zeta\in(0,1/2)$ (Assumption~\ref{as:3}) and capacity $\gamma\in(0,1)$ (Assumption~\ref{as:4}), let $\{(\x_i,y_i)\}_{i=1}^n$ be i.i.d. samples from $\rho$. Let $\hat\K$ be the Block-Nystr\"om approximation (Theorem~\ref{t:block-nystrom}) of $\K=[\langle\x_i,\x_j\rangle_{\mathcal H}]_{i,j}$ with the optimal KRR regularizer $n\lambda^*$ and some $\alpha\geq 1$. Then, the estimate $\hat f=\sum_i\hat w_i\x_i$ constructed from $\hat\w=(\hat\K+n\lambda^*\I)^{-1}\y$ achieves risk bound:
    \begin{align*}
        \|\hat f - f_{\mathcal H}\|_\rho\ \lesssim\ \alpha\cdot n^{-\frac{\zeta}{\min\{1,2\zeta+\gamma\}}}.
    \end{align*}
\end{theorem}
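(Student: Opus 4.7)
The strategy is to reduce the analysis of Block-Nystr\"om KRR to a generic $\alpha$-approximation analysis for KRR, using only the two-sided Loewner inequality of Theorem~\ref{t:block-nystrom} as a black-box. Lifting into the population operator framework, let $T = \mathbb{E}_\rho[\x\otimes\x]$ denote the covariance operator on $\mathcal H$, let $T_n$ denote its empirical counterpart, and let $\hat T$ denote the operator on $\mathcal H$ induced by the Block-Nystr\"om matrix $\hat\K$ through the sampling operator. Setting $\lambda = n\lambda^*$, Theorem~\ref{t:block-nystrom} gives $\alpha^{-1}(T_n+\lambda I)\preceq \hat T + \lambda I \preceq T_n + \lambda I$, and therefore $(\hat T+\lambda I)^{-1}\preceq \alpha\,(T_n+\lambda I)^{-1}$. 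This single inequality will be the workhorse for transferring standard KRR bounds to the Block-Nystr\"om estimator at the cost of a multiplicative factor of $\alpha$.

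The core step is a bias-variance decomposition of $\hat f - f_{\mathcal H}$ into three operator-theoretic pieces: an approximation bias $(\hat T+\lambda I)^{-1}\hat T f_{\mathcal H} - f_{\mathcal H}$ coming from regularization plus spectral approximation; a sample-noise piece of the form $(\hat T+\lambda I)^{-1}(S_n^* y/n - T_n f_{\mathcal H})$; and a coupling term between $T_n$ and $T$. Into each piece I would insert $(\hat T+\lambda I)^{-1}\preceq \alpha\,(T_n+\lambda I)^{-1}\preceq \alpha(1+o(1))(T+\lambda I)^{-1}$, where the second step uses the standard concentration of $T_n$ to $T$ in the $(T+\lambda I)^{-1/2}$-norm, valid provided the effective dimension $d_\lambda(T)$ is $\ll n\lambda$. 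Under the $\zeta$-smoothness assumption (Assumption~\ref{as:3}), the bias piece is bounded by $\alpha\|\lambda(T+\lambda I)^{-1}f_{\mathcal H}\|_\rho \lesssim \alpha\lambda^{\zeta}$, while under the $\gamma$-capacity assumption (Assumption~\ref{as:4}) the noise piece reduces to $\alpha\sqrt{d_\lambda(T)/n}\lesssim \alpha\sqrt{\lambda^{-\gamma}/n}$. Choosing $\lambda=n\lambda^*$ to balance these two expressions recovers the standard optimal KRR rate $n^{-\zeta/\min\{1,2\zeta+\gamma\}}$, with the multiplicative $\alpha$ surviving untouched.

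\textbf{The main obstacle}, and the reason this is not a one-line corollary of prior Nystr\"om-KRR results, is that the existing unattainable-case analyses (for example \citet{rudi2015less} and the variants in \citet{lin2020convergences}) crucially exploit that the approximation operator is an \emph{orthogonal projection}, so that $\hat T = P T_n P$ for some $P$ with $P^2=P$ and $\|P\|=1$; these identities are used to control cross terms in the bias decomposition. Block-Nystr\"om is only a positive combination of such projections and is not itself a projection, so those identities are unavailable. The plan is therefore to rework the bias argument using only the two-sided Loewner inequality: wherever the older proofs used $P^2=P$, I would substitute a Loewner comparison $(\hat T+\lambda I)^{-1}\hat T\preceq \alpha(T+\lambda I)^{-1}T$ together with trace inequalities such as $\tr((\hat T+\lambda I)^{-1}T)\leq \alpha\,d_\lambda(T)$. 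This extension of the KRR risk bound to a general class of $\alpha$-approximations is exactly the broader statement promised by the paper; once it is established, the Block-Nystr\"om corollary follows immediately by plugging in the guarantee from Theorem~\ref{t:block-nystrom}.
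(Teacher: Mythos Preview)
Your high-level strategy---replace projection identities by the two-sided Loewner guarantee \eqref{eq:approx} and rerun a bias/variance decomposition---is exactly the paper's strategy, and you correctly identify that the non-projection nature of $\hat{\mathcal P}_{[q]}$ is the obstacle. Two points are worth flagging where your route and the paper's diverge.

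First, in the paper's proof of Theorem~\ref{t:main_stat_risk} (Theorem~\ref{t:main_stat_risk_2} in the appendix), the $\alpha$ factor enters \emph{only} through the bias term, not the variance. The variance piece $\mathcal T_v$ is bounded by $1$ (times the sample-variance quantity) using only $0\preceq\hat{\mathcal P}\preceq\mathcal I$: one shows $\bar{\mathcal P}^{1/2}\mathcal C_n\bar{\mathcal P}^{1/2}+\lambda\mathcal I\succeq \bar{\mathcal P}^{1/2}\mathcal C_{n\lambda}\bar{\mathcal P}^{1/2}$ and then uses a pseudoinverse identity to get $\|\mathcal C_{n\lambda}^{1/2}\bar{\mathcal P}^{1/2}(\bar{\mathcal P}^{1/2}\mathcal C_n\bar{\mathcal P}^{1/2}+\lambda\mathcal I)^{-1}\bar{\mathcal P}^{1/2}\mathcal C_{n\lambda}^{1/2}\|\leq 1$. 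The Loewner $\alpha$-approximation is invoked only at the very end of the bias calculation, where everything has been reduced to the scalar quantity $\|\K_{n\lambda}^{1/2}(\hat\K_{[q]}+n\lambda\I)^{-1}\K_{n\lambda}^{1/2}\|$. Your plan of inserting $(\hat T+\lambda I)^{-1}\preceq\alpha(T_n+\lambda I)^{-1}$ into both pieces would still give the stated rate after fixing $\lambda=\lambda^*$, but it is coarser and hides this structural point.

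Second, and more substantively, your decomposition writes terms like $(\hat T+\lambda I)^{-1}\hat T f_{\mathcal H}$, but in the unattainable regime $\zeta<1/2$ the target $f_{\mathcal H}$ does \emph{not} lie in $\mathcal H$, so this expression is not well-defined as an element of $\mathcal H$. The paper circumvents this by introducing the intermediate $\omega=\mathcal C_\lambda^{-1}\mathcal S_\rho^* f_{\mathcal H}\in\mathcal H$, first splitting off the ``true bias'' $\|\mathcal S_\rho\omega-f_{\mathcal H}\|_\rho\leq R\lambda^\zeta$ (which uses only the source condition), and then analyzing $\hat\omega_{[q]}-\omega$ entirely inside $\mathcal H$. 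The bias term $\mathcal T_b$ is then massaged---via the push-through identity, Cordes' inequality, and $\|\mathcal C_\lambda^{1/2}\mathcal C_{n\lambda}^{-1/2}\|\leq 2$---down to the matrix-level quantity above. Your sketch would need this rerouting through $\omega$ to be made rigorous in the unattainable case; once you do that, you are essentially reproducing the paper's argument.
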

\begin{remark}
    The above convergence rate matches the rate achieved by full KRR (i.e., without kernel approximation) obtained by \cite{lin2020optimal}, up to the $\alpha$ factor. The same guarantee can also be obtained for an $\alpha$-approximation  based on the classical Nystr\"om method.
\end{remark}
In Section \ref{s:hilbert}, we use this analysis to evaluate the computational gain of relaxing the convergence rate exponent by setting $\alpha = n^{\theta}$ for $\theta>0$ small enough to still ensure convergence. We show that, similarly as in the general psd matrix approximation task, Block-Nystr\"om achieves a better computational gain than $n\lambda^*$-regularized classical Nystr\"om when $\theta$ is small but positive, across all values of smoothness and capacity in the unattainable setting.

\subsection{Background and further related work}

The Nystr\"om method \citep{nystrom1930praktische} was first used in a machine learning context by \cite{williams2000using} to approximate kernel matrices arising in Gaussian process regression. Initially, uniform sampling was used for landmark selection, with other sampling schemes considered by \cite{kumar2009sampling} among others, until \cite{ridge-leverage-scores} showed that sampling landmarks according to ridge leverage scores leads to strong approximation guarantees in terms of the effective dimension. Learning guarantees for Nystr\"om-based KRR in the fixed design model were considered by \cite{bach2013sharp}, and then adapted to Hilbert spaces by \cite{rudi2015less,rudi2017falkon}. Fast ridge leverage score sampling schemes were designed for this setting by \cite{rudi2018fast}. Recent works have considered extensions of Nystr\"om to the unattainable setting \citep[e.g.,][]{lin2020optimal} and distributed architectures \citep[e.g.,][]{li2023optimal}. Other approximate KRR solvers have also been studied, including random features \citep[e.g.,][]{rudi2017generalization} 
and stochastic optimization \citep[e.g.,][]{lin2018optimal,abedsoltan2023toward}.

The Nystr\"om method has also been studied in randomized linear algebra as a low-rank approximation method \citep[e.g.,][]{halko2011finding}, with many generalizations such as using sketching instead of landmark sampling \citep{gittens2013revisiting}. One of the key applications in this context is preconditioning iterative algorithms for solving quadratic minimization to high precision \citep[e.g.,][]{avron2017faster,frangella2023randomized,diaz2023robust,abedsoltan2024nystrom,dmy24}, as well as for other convex optimization settings \citep[e.g.,][]{frangella2024promise}.

\section{Preliminaries}
 
 Let $\lambda_i$ denote the $i^{th}$ eigenvalue of $\A$, with $\lambda_1/\lambda_n := \kappa$ denoting the condition number of $\A$. We use $\preceq$ to denote the Loewner ordering and define $\|\x\|_{\A}=\sqrt{\x^\top\A\x}$.  $\S\in\R^{s\times n}$ is a sub-sampling matrix for $\{p_i\}_{i=1}^n$ if its rows are standard basis vectors $\e_i^\top$ sampled i.i.d. proportionally to $p_i$'s.
  
\begin{definition}[$\lambda$-ridge leverage scores of $\A$]\label{d:ridge_lev}
    Given a psd matrix $\A$ and $\lambda>0$, we define the $i$th $\lambda$-ridge leverage score $\ell_i(\A,\lambda)$ as $[\A(\A+\lambda\I)^{-1}]_{i,i}$. Also we define $d_\lambda(\A):=\tr(\A(\A+\lambda\I)^{-1})$ as the $\lambda$-effective dimension. We say that $\tilde \ell_1,...,\tilde\ell_n$ are $O(1)$-approximate $\lambda$-ridge leverage scores of $\A$  if $\tilde \ell_i \geq \ell_i(\A,\lambda)/T$ for all $i$ and $T=O(1)$, and $\sum_i\tilde\ell_i = O(d_\lambda(\A))$.
\end{definition}
In our results, we rely on two different algorithms for computing approximate ridge leverage scores.

\begin{lemma}[Adapted from Theorem 7 of \cite{musco2017recursive}]\label{l:rls_musco}
There exists an algorithm that provides $\tilde \ell_i(\A,\lambda)$ i.e., $O(1)$-approximate $\lambda$-ridge leverage scores of $\A$ in time $\tilde O(nd_\lambda(\A)^2)$. 

\end{lemma}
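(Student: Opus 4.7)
The plan is to invoke the recursive ridge leverage score (RLS) sampling scheme of Musco and Musco essentially verbatim, and simply verify that the running time comes out to the claimed $\tilde O(n\, d_\lambda(\A)^2)$. I would organize the argument around the two standard ingredients: (a) a \emph{single-step estimator} that, given a good sample from an approximate RLS distribution, produces $O(1)$-approximate RLS of $\A$ via a Nystr\"om-type formula; and (b) a \emph{recursive halving} of the problem so that the RLS distribution needed at level $k$ is bootstrapped from the one at level $k+1$.

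First I would set up the single-step estimator. Given a subset $S\subseteq [n]$ with sampling matrix $\S$ drawn proportionally to an overestimate $\tilde\ell_i$ of the RLS, and sampling $\tilde O(d_\lambda(\A))$ indices, I would show that the intermediate matrix $\Z := \A\S^\top(\S\A\S^\top + \lambda\I)^{-1}\S\A$ satisfies a spectral approximation of the form
\begin{equation*}
\tfrac12(\A(\A+\lambda\I)^{-1})_{ii} \;\le\; (\Z(\Z+\lambda\I)^{-1})_{ii} \;\le\; 2(\A(\A+\lambda\I)^{-1})_{ii},
\end{equation*}
with high probability, via a standard matrix Chernoff bound on $\S$ applied in the regularized Loewner sense. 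This gives $O(1)$-approximate $\lambda$-RLS of $\A$, whose sum is $O(d_\lambda(\A))$ by linearity of trace. Since $\Z$ has rank $|S|=\tilde O(d_\lambda(\A))$, evaluating all $n$ diagonal entries of $\Z(\Z+\lambda\I)^{-1}$ reduces (via Woodbury) to forming an $n\times |S|$ matrix and inverting an $|S|\times |S|$ matrix, at cost $\tilde O(n\, d_\lambda(\A)^2)$.

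Next I would set up the recursion. Define a chain $\A^{(0)}=\A$ and $\A^{(k)}$ obtained by uniformly subsampling the columns of $\A^{(k-1)}$ by a factor of $2$, stopping once $\A^{(K)}$ has only $\tilde O(d_\lambda(\A))$ columns; there are $K=O(\log n)$ levels. Uniform subsampling contracts the effective dimension by a constant factor in expectation, so computing \emph{exact} RLS at the bottom level $K$ costs $\tilde O(d_\lambda(\A)^3)$, which is absorbed. Moving up one level, the key claim is that the RLS of $\A^{(k)}$ multiplied by $2$ form valid overestimates for $\A^{(k-1)}$ on the sampled coordinates, so we can sample $\tilde O(d_\lambda(\A))$ columns of $\A^{(k-1)}$ from them, apply the single-step estimator, and obtain $O(1)$-approximate RLS of $\A^{(k-1)}$. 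Summing costs over the $O(\log n)$ levels yields the claimed $\tilde O(n\, d_\lambda(\A)^2)$ total time.

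The main obstacle, and the only part that is not a mechanical bookkeeping exercise, is verifying the matrix-concentration step in the single-step estimator: one must control the deviation $\S^\top\S - \I$ in the $(\A+\lambda\I)^{-1/2}$-weighted inner product when rows are drawn from an $O(1)$-approximate RLS distribution whose \emph{only} guarantee is that the total mass is $O(d_\lambda(\A))$ and each entry overestimates the truth by at most a constant. I would handle this with the standard intrinsic-dimension matrix Bernstein bound, using $d_\lambda(\A)$ in place of the ambient dimension, which forces the sample size $\tilde O(d_\lambda(\A))$ and produces the $\log$ factors hidden in $\tilde O$. Everything else — Woodbury, trace identities, and the $\log n$ recursion depth — follows routinely.
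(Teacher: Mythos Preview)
The paper does not provide its own proof of this lemma; it is quoted directly from Theorem~7 of \cite{musco2017recursive} as a black-box result, so there is nothing in the paper to compare your proposal against. Your sketch is essentially a reconstruction of the Musco--Musco recursive RLS scheme itself, and the high-level structure you describe (uniform halving down to a base case of size $\tilde O(d_\lambda)$, then climbing back up using approximate RLS from the smaller problem to sample $\tilde O(d_\lambda)$ landmarks and re-estimate all $n$ scores at each level, with a Woodbury-based $\tilde O(n\,d_\lambda(\A)^2)$ cost per level over $O(\log n)$ levels) is faithful to the cited reference and gives the claimed runtime.

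If you intend to expand this into a self-contained argument, two details should be tightened. First, since $\A$ here is a psd kernel matrix, the recursion subsamples \emph{points} (principal submatrices), not columns only. Second, the single-step estimator is not $(\Z(\Z+\lambda\I)^{-1})_{ii}$ for your $\Z$; rather, once the sample $\S$ yields a Nystr\"om $\hat\A$ with $\|\A-\hat\A\|\le\lambda$ (as in Lemma~\ref{l:musco-approx}), one shows $\A(\A+\lambda\I)^{-1}\approx_{O(1)}\A(\hat\A+\lambda\I)^{-1}$ in the Loewner order and reads off the diagonal via Woodbury. The two-sided per-coordinate inequality you display does not follow from a matrix Chernoff bound on $\S^\top\S-\I$ in the way you suggest; the concentration step controls a Loewner-order spectral approximation, from which the diagonal comparison is then deduced. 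None of this affects the runtime claim.
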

If we additionally assume that $\A_{ii} =O(1)$ for all $i$, which is common in kernel literature, then it is possible to generate a leverage score sample even faster than what is guaranteed in Lemma~\ref{l:rls_musco}. 
\begin{lemma}[Adapted from Theorem 1 of \cite{rudi2018fast}]\label{l:rls_bless}
   Let $\A_{ii} =O(1)$ for all $i$. Given $\lambda>0$, we can generate a sub-sampling matrix $\S \in \R^{s\times n}$ with $s=\tilde O(d_\lambda(\A))$, sampled i.i.d. from $O(1)$-approximate $\lambda$-ridge leverage scores of $\A$ in time $\tilde O\big(\min(\frac{1}{\lambda},n)d_{\lambda}(\A)^2\big)$.

\end{lemma}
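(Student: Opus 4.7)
The plan is to apply Theorem 1 of \cite{rudi2018fast}, which analyzes the BLESS algorithm, and to translate its runtime into our notation. BLESS runs a bottom-up recursion over geometrically decreasing regularizers $\lambda_0 > \lambda_1 > \cdots > \lambda_T = \lambda$, with $\lambda_0 = \Theta(\|\A\|)$ and $\lambda_t = \lambda_0/2^t$, so the number of levels is $T = O(\log(\|\A\|/\lambda)) = \tilde O(1)$. At each level $t$, the algorithm maintains a dictionary $\mathcal{D}_t$ of size $\tilde O(d_{\lambda_t}(\A))$ drawn from approximate $\lambda_t$-ridge leverage scores, and uses the Nystr\"om sketch built from $\mathcal{D}_{t-1}$ to estimate the $\lambda_t$-scores needed at the next level.

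The $\min(1/\lambda, n)$ factor arises from the normalization $\A_{ii} = O(1)$. Under this assumption, every individual ridge leverage score satisfies $\ell_i(\A,\lambda) \leq \A_{ii}/\lambda = O(1/\lambda)$ (via the operator inequality $\A(\A+\lambda\I)^{-1} \preceq \A/\lambda$), and the trace bound gives $d_\lambda(\A) \leq \tr(\A)/\lambda = O(n/\lambda)$. Plugging these sharper score and effective-dimension bounds into the per-level cost accounting of BLESS lets one replace the worst-case factor of $n$ (which normally comes from having to scan all coordinates for their estimated scores) by $\tilde O(\min(n, 1/\lambda_t))$ at level $t$, since a coordinate pool of that size already captures essentially all of the $\lambda_t$-leverage mass via a matrix-Bernstein concentration argument on the Nystr\"om-based leverage-score estimator.

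Each Nystr\"om-based leverage-score evaluation on a point costs $O(|\mathcal{D}_{t-1}|^2) = \tilde O(d_\lambda(\A)^2)$ (since $d_{\lambda_{t-1}}(\A) \leq d_\lambda(\A)$), so the per-level cost is $\tilde O(\min(n, 1/\lambda_t)\, d_\lambda(\A)^2)$. Summing over the $T = \tilde O(1)$ levels yields the total runtime $\tilde O(\min(n, 1/\lambda)\, d_\lambda(\A)^2)$. Correctness of the output, namely that the returned sample is drawn i.i.d. proportionally to $O(1)$-approximate $\lambda$-ridge leverage scores, follows by inductively tracking the approximation factor across levels using the analysis of the Nystr\"om-based leverage-score estimator of \cite{ridge-leverage-scores}, with a union bound over the $\tilde O(1)$ levels keeping the overall constant at $O(1)$.

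The main obstacle is the clean bookkeeping of approximation factors through the recursion, since each Nystr\"om-based estimation step introduces a multiplicative distortion that must be controlled uniformly across levels and across candidate points. This is already handled inside the BLESS analysis of \cite{rudi2018fast}; our contribution is only to re-express the final complexity in the form $\min(1/\lambda, n)$, which is the natural parametrization for the heavy-tailed spectrum regime that the Block-Nystr\"om runtime of Theorem~\ref{t:block-nystrom} depends on.
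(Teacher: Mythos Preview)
The paper does not actually prove this lemma: it is stated in the Preliminaries as a direct adaptation of Theorem~1 of \cite{rudi2018fast}, with no accompanying argument. So there is no ``paper's own proof'' to compare against; the authors treat this as a black-box result.

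Your sketch is a reasonable and essentially correct summary of how BLESS works and where the $\min(1/\lambda,n)$ factor comes from. The key points---geometric schedule of regularizers, bounded-diagonal assumption giving $\ell_i(\A,\lambda)\leq \A_{ii}/\lambda$ so that a uniform pool of size $\tilde O(1/\lambda_t)$ suffices at level $t$, per-point Nystr\"om score estimation cost $\tilde O(d_{\lambda_{t-1}}(\A)^2)\leq\tilde O(d_\lambda(\A)^2)$, and a geometric sum over levels---are all right. One small looseness: you write that the sum over levels is handled because $T=\tilde O(1)$, but the cleaner reason the total stays $\tilde O(\min(n,1/\lambda)\,d_\lambda(\A)^2)$ is that $\sum_t \min(n,1/\lambda_t)$ is a geometric series dominated by its last term, not just that there are logarithmically many levels. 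Also, calling $T=O(\log(\|\A\|/\lambda))=\tilde O(1)$ implicitly assumes $\lambda$ is at most polynomially small in $n$, which is fine in the regimes the paper cares about but worth making explicit if you were writing this out in full.

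In short: your proposal goes beyond what the paper itself does here, and is a faithful high-level account of the cited result.
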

Ridge leverage score sampling provides the following standard Nystr\"om approximation guarantee.
\begin{lemma}[Theorem 3 of \cite{musco2017recursive}]\label{l:musco-approx}
   If $\S$ is a sub-sampling matrix corresponding to $\tilde O(d_\lambda(\A))$ i.i.d. samples according to $O(1)$-approximate $\lambda$-ridge leverage scores of $\A$, then the Nystr\"om matrix  $\hat\A = \A\S^\top(\S\A\S^\top)^{-1}\S\A$ with high probability satisfies $\|\hat\A-\A\|\leq \lambda$.
\end{lemma}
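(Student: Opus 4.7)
The plan is to reduce the Nystr\"om approximation guarantee to a spectral-concentration statement about a sum of independent rank-one psd matrices, and then apply a matrix Chernoff bound whose parameters are both controlled by the ridge leverage scores.

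First, I would rewrite the claim in factored form. Writing $\A = \B^2$ via the symmetric square root, we have $\hat\A = \B\Pi\B$, where $\Pi$ is the orthogonal projection onto the column span of $\B\S^\top$. Since $\A - \hat\A = \B(\I - \Pi)\B \succeq 0$, the claim $\|\A - \hat\A\| \leq \lambda$ reduces to the upper bound $\B(\I - \Pi)\B \preceq \lambda\I$. Using the augmented matrix $[\B\ ;\ \sqrt{\lambda}\I]$ and a Woodbury manipulation applied to the stacked system, this bound is equivalent to a multiplicative ridge spectral approximation of the form
\begin{equation*}
\tfrac12(\A + \lambda\I) \;\preceq\; \B\Sbt^\top\Sbt\B + \lambda\I \;\preceq\; \tfrac32(\A + \lambda\I),
\end{equation*}
where $\Sbt$ is the rescaled sampling matrix whose $s$ rows are $\frac{1}{\sqrt{sp_{i_j}}}\e_{i_j}^\top$, so that $\E[\Sbt^\top\Sbt] = \I$.

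Next, I would prove this spectral approximation via matrix Chernoff. Conjugating by $(\A+\lambda\I)^{-1/2}$, the central object becomes a sum of i.i.d.\ rank-one psd matrices
\begin{equation*}
\X_j \;:=\; \tfrac{1}{sp_{i_j}}\,(\A + \lambda\I)^{-1/2}\A_{:,i_j}\A_{i_j,:}(\A + \lambda\I)^{-1/2},
\end{equation*}
with expectation $\E[\sum_j \X_j] = (\A+\lambda\I)^{-1/2}\A(\A+\lambda\I)^{-1/2}$ whose eigenvalues are the ridge filter values $\lambda_i(\A)/(\lambda_i(\A)+\lambda)$. The per-term operator norm obeys $\|\X_j\| \leq \ell_{i_j}(\A,\lambda)/(sp_{i_j})$, which is $O(1/s)$ whenever the sampling probabilities satisfy $p_i \geq \ell_i(\A,\lambda)/\bigl(T\cdot\sum_j\tilde\ell_j\bigr)$, as guaranteed by $O(1)$-approximate $\lambda$-ridge leverage score sampling with $\sum_i\tilde\ell_i = O(d_\lambda(\A))$. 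A parallel calculation bounds the matrix variance by $O(d_\lambda(\A)/s)$. With $s = \tilde\Theta(d_\lambda(\A))$, the intrinsic-dimension matrix Chernoff inequality yields the $(1\pm\tfrac12)$-spectral approximation above with high probability; reversing the reduction gives $\|\A - \hat\A\| \leq \lambda$.

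The main obstacle is the first step: the algebraic equivalence between the Nystr\"om operator-norm error and the ridge-regularized spectral approximation is not entirely routine and requires careful handling of the residual projection $\I - \Pi$ via the augmented matrix $[\B\ ;\ \sqrt{\lambda}\I]$ and a Woodbury/Schur manipulation to transfer the error of the unregularized projection onto a statement about $\B\Sbt^\top\Sbt\B+\lambda\I$. Once this reduction is established, the matrix-Chernoff step is standard, and the sample complexity $\tilde O(d_\lambda(\A))$ emerges from the trace identity $\sum_i\ell_i(\A,\lambda) = d_\lambda(\A)$ combined with the uniform per-term bound provided by the $O(1)$-approximation of the ridge leverage scores.
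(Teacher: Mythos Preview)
The paper does not prove this lemma; it is quoted verbatim as Theorem~3 of \cite{musco2017recursive} and used as a black box in the preliminaries. Your outline is essentially the standard proof from that reference (and the earlier Alaoui--Mahoney and Cohen--Musco--Musco papers): reduce the Nystr\"om error bound $\|\A-\hat\A\|\leq\lambda$ to a one-sided ridge spectral approximation $\B\Sbt^\top\Sbt\B+\lambda\I\succeq\tfrac12(\A+\lambda\I)$ via the identity $\A-\hat\A=\B(\I-\Pi)\B$, then establish the spectral approximation by matrix Chernoff using the per-term norm bound $\|\X_j\|=O(d_\lambda(\A)/s)$ that approximate leverage-score sampling provides.

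Two minor remarks on the write-up. First, only the lower half of your two-sided spectral approximation is actually needed: for any unit $\v$ in the range of $\I-\Pi$ one has $\Sbt\B\v=0$, so the lower bound alone gives $\v^\top\A\v\leq\lambda$ directly, and the augmented-matrix/Woodbury route, while correct, is heavier than necessary. Second, the multiplicative matrix Chernoff inequality uses the uniform per-term bound together with $\mu_{\max}(\E\sum_j\X_j)\leq 1$; a separate matrix-variance bound is what Bernstein would require, not Chernoff. Neither point is a gap---your proposal is correct and matches the argument in the cited source.
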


\section{Analysis of Block-Nystr\"om}
In this section, we give the analysis of Block-Nystr\"om method. First, we provide the approximation guarantees by optimizing bias-variance trade-offs in Nystr\"om, and then we describe our recursive preconditioning algorithm for inverting the regularized Block-Nystr\"om matrix. We conclude the section with the proof of Theorem \ref{t:block-nystrom}.

\subsection{Optimizing bias-variance trade-offs}

First, let us discuss how the classical Nystr\"om guarantees relate to the $\lambda$-regularized $\alpha$-approximation condition \eqref{eq:approx}. Note that, given an $n\times n$ psd matrix $\A$ and a sub-sampling matrix $\S\in\R^{s\times n}$, the Nystr\"om approximation can be defined as $\hat\A=\A^{1/2}\P\A^{1/2}$, where $\P=\A^{1/2}\S^\top(\S\A\S^\top)^\dagger\S\A^{1/2}$ is an orthogonal projection matrix onto the row-span of $\S\A^{1/2}$. This implies that $\hat\A\preceq \A$. Moreover, if $\S$ is produced by $\lambda$-ridge leverage score sampling of $s=\tilde O(d_\lambda(\A))$ landmarks, then standard guarantees (e.g., Lemma \ref{l:musco-approx}) give $\|\A-\hat\A\|\leq \lambda$. In particular, this implies that $\A+\lambda\I\preceq \hat\A+2\lambda\I\preceq 2(\hat\A+\lambda\I)$. Together, these properties yield \eqref{eq:approx} with $\alpha=2$.

What if we wish to construct a smaller Nystr\"om approximation with rank $\tilde O(d_{\lambda'}(\A))$ for some $\lambda'>\lambda$, and still use it with a small regularizer $\lambda$? In that case, we can trivially observe that $\A+\lambda\I\preceq (1+\frac{\lambda'}{\lambda})(\hat\A+\lambda\I)$, which implies guarantee \eqref{eq:approx} with $\alpha=1+\lambda'/\lambda$. This is not particularly satisfying, and we could not really hope to do much better, since $\hat\A$ only has $\tilde O(d_{\lambda'}(\A))$ non-zero eigenvalues, which are associated primarily with the eigenvalues of $\A$ above the $\lambda'$ threshold. Thus, it contains little information about $\A$'s spectrum in the $[\lambda,\lambda']$ interval.

However, if we could use the expectation $\E[\hat\A]=\A^{1/2}\E[\P]\A^{1/2}$ as the approximation of~$\A$, then the story would change. Even though $\P$ is a low-rank projection, its expectation $\E[\P]$ is the same rank as $\A$, so it carries some information about all directions in $\A$'s spectrum. Expected projection matrices arise in many contexts, and a number of works have studied their properties under various sketching and sub-sampling matrices $\S$ \citep[e.g.,][]{rodomanov2020randomized,randomized-newton,derezinski2024sharp}, including for $\lambda'$-ridge leverage score sampling \cite[][]{askotch}. From these results, we can show that if the sample size is at least  $\tilde O(d_{\lambda'}(\A))$, then:
\begin{align}
    \text{(Lemma \ref{rls-projection})}\qquad\E[\P]\succeq \frac12 \A(\A+\lambda'\I)^{-1}.\label{eq:rls-proj}
\end{align}
This characterization suggests that $\E[\hat\A]$ provides meaningful estimates even for the spectral tail of~$\A$. Indeed, we show that given a Nystr\"om approximation $\hat\A$ based on $\lambda'$-ridge leverage score sampling, for any $\lambda\leq \lambda'$ we have:
\begin{align}
         (\E[\hat\A] +\lambda\I) \succeq \sqrt{\lambda/4\lambda'}\cdot(\A+\lambda\I).
\end{align}
So, $\E[\hat\A]$ is a non-trivial $O(\sqrt{\lambda'/\lambda})$-approximation of $\A$ for any $\lambda<\lambda'$. This is compared to the $O(\lambda'/\lambda)$-approximation achieved by $\hat\A$, which is essentially vacuous in the tail of the spectrum. 

Naturally, $\E[\hat\A]$ is not a practical approximation strategy, but it provides the motivation for our Block-Nystr\"om method. Our approach is to approximate $\E[\hat\A]$ by drawing several independent copies $\hat\A_1$, ..., $\hat\A_q$, and averaging them together, to obtain $\hat\A_{[q]}=\frac1q\sum_{i=1}^q\hat\A_i$.  This reduces the inherent variance of $\hat\A$ in the tail of the spectrum and, with large enough $q$, should lead to an $O(\sqrt{\lambda'/\lambda})$-approximation, matching the guarantee we obtained for $\E[\hat\A]$. 

Fortunately, via a matrix Chernoff argument, we are able to show that a relatively moderate number of Nystr\"om copies is sufficient to recover the improved guarantee, showing that $\hat\A_{[q]}$ is with high probability an $\tilde O(\sqrt{\lambda'/\lambda} + \lambda'/(q\lambda))$-approximation. Here, the first term can be viewed as the contribution of the bias, whereas the second term is the variance which goes down with $q$. 

Optimizing over this bias-variance error decomposition, we conclude that to obtain an $O(\alpha)$-approximation, one can choose $\lambda'=\alpha^2\lambda$, and then set $q = \tilde O(\sqrt{\lambda'/\lambda})=\tilde O(\alpha)$. We summarize this in the following result, which is proven in Appendix \ref{a:concentration}.

\begin{theorem}[Spectral approximation with Block-Nystr\"om]\label{t:Nys_spectral_approximation}
Given $\lambda>0$ and $\alpha \geq 1$, let $\{p_i\}_{i=1}^n$ denote the $O(1)$-approximate $\alpha^2\lambda$-ridge leverage scores of $\A$. Let $\hat\A_{[q]}=\frac{1}{q}\sum_{i=1}^{q}{\hat\A_i}$ where $\hat\A_i$'s are iid Nystr\"om approximations of $\A$ sampled i.i.d.~from $\{p_i\}_{i=1}^n$, each with $O(d_{\alpha^2\lambda}(\A)\log^3(n/\delta))$ landmarks.
    If $q \geq 32\alpha\log(n/\delta)$, then with probability $1-\delta$, the matrix $\hat\A_{[q]}$ is a $\lambda$-regularized $16\alpha$-approximation of $\A$, i.e.,
    \begin{align*}
        (64\alpha)^{-1}(\A+\lambda\I)\preceq\hat\A_{[q]}+\lambda\I\preceq\A+\lambda\I.
    \end{align*}

\end{theorem}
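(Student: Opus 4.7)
The plan is to separate the argument into a bias step (controlling $\E[\hat\A_1]$) and a variance step (concentrating the average $\hat\A_{[q]}$ around its mean via matrix concentration). The upper bound is immediate: writing each Nystr\"om approximation as $\hat\A_i = \A^{1/2}\P_i\A^{1/2}$ for an orthogonal projection $\P_i$ yields $\hat\A_i \preceq \A$, and averaging preserves this, so $\hat\A_{[q]} + \lambda\I \preceq \A + \lambda\I$.

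For the lower bound, I would start from the expected projection bound \eqref{eq:rls-proj} (Lemma \ref{rls-projection}), which, because each $\hat\A_i$ uses $\tilde O(d_{\alpha^2\lambda}(\A))$ landmarks from $O(1)$-approximate $\alpha^2\lambda$-ridge leverage scores, implies
\begin{align*}
\E[\hat\A_i] \ =\ \A^{1/2}\,\E[\P_i]\,\A^{1/2}\ \succeq\ \tfrac{1}{2}\,\A^2(\A+\alpha^2\lambda\I)^{-1}.
\end{align*}
A direct eigenvalue-by-eigenvalue comparison of $\mu^2/(2(\mu+\alpha^2\lambda)) + \lambda$ against $c\,(\mu+\lambda)$, valid for every eigenvalue $\mu \geq 0$ of $\A$, then yields $\E[\hat\A_i] + \lambda\I \succeq \tfrac{c}{\alpha}(\A + \lambda\I)$ for a small absolute constant $c$. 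The worst case of this pointwise inequality occurs near $\mu \sim \alpha\lambda$ (the ``knee'' of the regularized spectrum), which is precisely the regime where the crude bound $\|\A - \hat\A_i\| \leq \alpha^2\lambda$ from Lemma \ref{l:musco-approx} alone would only give an $O(1/\alpha^2)$ factor, and where we instead need the full strength of \eqref{eq:rls-proj} to recover the $1/\alpha$ scaling.

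For the variance step, I would conjugate by $\B = (\A+\lambda\I)^{-1/2}$ and apply a matrix Chernoff inequality to the i.i.d.~random psd matrices $\X_i = \B(\hat\A_i + \lambda\I)\B$. The upper bound gives $\X_i \preceq \I$, and the bias step gives $\E[\X_i] \succeq \tfrac{c}{\alpha}\I$, so matrix Chernoff with deviation $\tfrac{1}{2}$ yields
\begin{align*}
\Pr\!\left[\,\mu_{\min}\!\left(\tfrac{1}{q}{\textstyle\sum_{i=1}^q}\X_i\right)\, \leq\, \tfrac{c}{2\alpha}\,\right]\ \leq\ n\exp\!\left(-\Omega(q/\alpha)\right),
\end{align*}
which is at most $\delta$ as soon as $q \geq 32\alpha\log(n/\delta)$. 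Conjugating back by $(\A+\lambda\I)^{1/2}$ recovers $\hat\A_{[q]} + \lambda\I \succeq \tfrac{1}{64\alpha}(\A+\lambda\I)$ once the $O(1)$ slack from the leverage-score approximation and from the rounded Chernoff constants is absorbed. The main obstacle is the bias step: extracting the $1/\alpha$ (rather than $1/\alpha^2$) factor from \eqref{eq:rls-proj} through careful eigenvalue accounting near the knee $\mu \sim \alpha\lambda$; once that expected spectral ratio is in hand, the matrix Chernoff step is routine because the summands are deterministically bounded by $\I$ and have mean of order $1/\alpha$.
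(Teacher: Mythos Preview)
Your proposal is correct and follows essentially the same route as the paper's proof: conjugate by $(\A+\lambda\I)^{-1/2}$ to define $\X_i=\Z_i\preceq\I$, use Lemma~\ref{rls-projection} to lower-bound $\E[\Z_i]$, minimize the resulting eigenvalue ratio (the paper does the calculus explicitly and finds the minimum exactly at $\mu=\sqrt{\lambda\cdot\alpha^2\lambda}=\alpha\lambda$, matching your ``knee'' intuition), and finish with matrix Chernoff at deviation $\tfrac12$. The only difference is cosmetic: the paper writes out the rational function $\frac{x^2+\lambda x+\lambda\lambda'}{x^2+\lambda x+\lambda' x+\lambda\lambda'}$ and evaluates it at the minimizer to get the sharp constant $\tfrac{1}{8}\sqrt{\lambda/\lambda'}$, whereas you leave that constant implicit.
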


\subsection{Fast inversion via recursive preconditioning}

The computational benefits in the preprocessing cost of Block-Nystr\"om are easy to verify: The cost of constructing a single Nystr\"om approximation $\hat\A=\C\W^{-1}\C^\top$ scales cubically with the number of landmarks, because we must invert the matrix $\W$, but the cost of constructing many small Nystr\"om approximations grows only linearly with the number of landmarks. These benefits become even more significant when we account for the cost of ridge leverage score sampling. However, it is much less straightforward to retain that speed-up if we need to also quickly invert the approximation, as is the case in kernel ridge regression, for example. Remarkably, we are able to show that this can still be done efficiently, by significantly departing from standard block-inversion arguments.

Our task is to quickly apply $(\hat\A_{[q]}+\lambda\I)^{-1}$ to any given vector $\v \in \R^n$. 
A standard approach that can be applied to most low-rank approximations is to use the Woodbury inversion formula, which implies that for any matrix $\hat\A=\C\W^{-1}\C^\top$ we have $(\hat\A+\lambda\I)^{-1} = \frac1\lambda(\I-\C(\C^\top\C+\lambda\W)^{-1}\C^\top)$. Thus, it suffices to pre-compute the matrix $(\C^\top\C+\lambda\W)^{-1}$ in $O(nm^2+m^3)$ time, where $m$ is the number of landmarks, so that we can apply the inverse to a vector $\v$ in $O(nm)$ operations. This can be further improved by using an iterative solver for applying the inverse $(\C^\top\C+\lambda\W)^{-1}$, reducing the preprocessing cost to $\tilde O(nm+m^3)$ \citep{dmy24}. In principle, this strategy could still be applied to invert the Block-Nystr\"om approximation by relying on decomposition \eqref{eq:block-nystrom}. However, unfortunately the matrix $\C^\top\C+\lambda\W$ does not retain the block-diagonal structure of $\W$, so the preprocessing cost of this approach must scale cubically with the number of landmarks, thus erasing most of the computational benefit of Block-Nystr\"om.

We apply a different strategy, called recursive preconditioning, which has been used extensively for solving Laplacian systems (e.g., see \cite{js_talg} and references therein). This approach has also recently been adapted for solving general dense linear systems by \cite{derezinski2025approaching}.
We specialize it for Block-Nystr\"om, and show that one can still almost-match the fast $O(nm)$ inversion time for Block-Nystr\"om without incurring any additional preprocessing cost. We achieve this by recursively reducing the task of applying $(\hat\A_{[q]}+\lambda\I)^{-1}$ to the cheaper task of applying $(\hat\A_1+\tilde\lambda\I)^{-1}$ for a carefully chosen $\tilde\lambda$, where $\hat\A_1$ is one of the small Nystr\"om components in $\hat\A_{[q]}$ (proof in Appendix~\ref{a:concentration}).

\begin{theorem}[Recursive solver for Block-Nystr\"om]\label{t:linear_system_solve}
Given $\lambda >0$, consider some $\alpha \geq 1$ and $q = O \big(\alpha\log(n/\delta)/\theta^2\big)$, and let $\hat\A_{i}$ for $1\leq i \leq q$ be Nystr\"om approximations of $\A$ sampled from $O(1)$-approximate $\alpha^2\lambda$-ridge leverage scores of $\A$, each of rank $ O(d_{\alpha^2\lambda}(\A)\log^3(n/\delta))$. Denote $\hat\A_{[q]} =  \frac{1}{q}\sum_{i=1}^{q}{\hat\A_i}$. Then, conditioned on an event with probability $1-\delta$, for any $\v \in \R^n$, $\epsilon>0$ and $0<\phi<1$ we can find $\u \in \R^n$ such that:
\begin{align*}
    \Big\|\u-\big(\hat\A_{[q]} +\lambda\I\big)^{-1}\v\Big\|
    \leq \epsilon\cdot \|\v\| 
\end{align*}
in time $\tilde O\big(c\cdot \alpha^{1+\phi}nd_{\alpha^2\lambda}(\A)\log1/\epsilon\big)$ with a preprocessing cost of $\tilde O(\alpha(nd_{\alpha^2\lambda}(\A)+d_{\alpha^2\lambda}(\A)^3))$, for any $c$ such that $c \geq \big\lceil\max\big\{\log^{2/\phi}(20c^2(1+\theta)^4),2\big\} \big\rceil.$
\end{theorem}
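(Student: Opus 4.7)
The plan is to approximate $(\hat\A_{[q]}+\lambda\I)^{-1}$ by stacking $L := \lceil\log_c q\rceil$ nested levels of preconditioned iteration, bottoming out at a direct Woodbury solve on a single Nystr\"om matrix. For $\ell = 0, 1, \ldots, L$, I would define intermediate operators $\B_\ell := \hat\A_{[q_\ell]}+\lambda_\ell\I$ with $q_\ell := q/c^\ell$, $\lambda_\ell := c^{2\ell}\lambda$, and $\alpha_\ell := \alpha/c^\ell$, where $\hat\A_{[q_\ell]}$ is the average of the first $q_\ell$ of the $q$ Nystr\"om components drawn once during preprocessing. The key invariant $\alpha_\ell^2\lambda_\ell = \alpha^2\lambda$ means that the single $\alpha^2\lambda$-ridge leverage score sample drawn at preprocessing simultaneously certifies every $\B_\ell$ as a valid Block-Nystr\"om approximation at its own parameters $(\alpha_\ell, \lambda_\ell)$. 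A union bound of Theorem~\ref{t:Nys_spectral_approximation} across $\ell = 0,\ldots,L$ therefore yields, with probability $1-\delta$,
\[
(64\alpha_\ell)^{-1}(\A+\lambda_\ell\I) \preceq \B_\ell \preceq \A+\lambda_\ell\I \qquad \text{for all $\ell$ simultaneously.}
\]

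Next, I would extract the relative condition number between consecutive levels from these sandwich bounds. Combining $\B_\ell \preceq \A+\lambda_{\ell+1}\I \preceq 64\alpha_{\ell+1}\B_{\ell+1}$ with $\B_{\ell+1} \preceq c^2(\A+\lambda_\ell\I) \preceq 64c^2\alpha_\ell \B_\ell$ yields $\kappa_\ell := \kappa(\B_{\ell+1}^{-1/2}\B_\ell\B_{\ell+1}^{-1/2}) \leq 64^2 c^2 \alpha_\ell\alpha_{\ell+1}$. I would then solve $\B_\ell\u=\v$ by preconditioned conjugate gradients (or a Nesterov-accelerated preconditioned Richardson scheme) with $\B_{\ell+1}$ as preconditioner, applied recursively by calling the procedure at level $\ell+1$ to a target precision $\epsilon_\ell = \epsilon/\mathrm{poly}(N_0\cdots N_{\ell-1}, L)$, where $N_\ell = \tilde O(\sqrt{\kappa_\ell}\log(1/\epsilon_\ell))$ is the outer iteration count. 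The base case is handled by the Woodbury identity: at level $L$, $\B_L = \hat\A_1 + \lambda_L\I$ is a single regularized Nystr\"om matrix of rank $\tilde O(d_{\alpha^2\lambda})$, and precomputing the factorization of $\C_1^\top\C_1 + \lambda_L\W_1$ costs $\tilde O(nd_{\alpha^2\lambda}+d_{\alpha^2\lambda}^3)$ per component. Done once for all $q = \tilde O(\alpha)$ components, this gives the stated preprocessing cost of $\tilde O(\alpha(nd_{\alpha^2\lambda}+d_{\alpha^2\lambda}^3))$, and then $\tilde O(nd_{\alpha^2\lambda})$ per application of $\B_L^{-1}$.

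The central obstacle is to verify that the $L$-deep recursion collapses to the advertised $\tilde O(c\alpha^{1+\phi}nd_{\alpha^2\lambda}\log(1/\epsilon))$ runtime. Two compounding effects must be controlled simultaneously: the propagation of numerical errors across $L$ nested approximate solves (each contributing a $(1+\theta)^{O(1)}$ slack factor to the effective condition number), and the multiplicative blow-up of iteration counts. The self-referential hypothesis $c \geq \lceil \log^{2/\phi}(20c^2(1+\theta)^4) \rceil$ is calibrated precisely so that across $L = \log_c \alpha$ levels, the compounded per-level log factors $\log(1/\epsilon_\ell) = O(\log(n/\epsilon) + \log(N_0\cdots N_{\ell-1}))$ multiply to at most $\alpha^\phi$. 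Combining this with the telescoping cost recursion $T_\ell = N_\ell\big(\tilde O(nq_\ell b) + T_{\ell+1}\big)$ — in which the per-level matvec cost $nq_\ell b = n\alpha b/c^\ell$ decays geometrically down the ladder while the per-level condition numbers $\kappa_\ell = \tilde\Theta(c\,\alpha^2/c^{2\ell})$ redistribute the outer iteration work — is what ultimately produces the $c\alpha^{1+\phi}$ prefactor. Carrying out this bookkeeping tightly, while also union-bounding the spectral-approximation events of Theorem~\ref{t:Nys_spectral_approximation} at every level and tracking the approximate-inverse tolerances passed between levels, is where the bulk of the technical work lies.
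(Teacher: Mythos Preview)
Your recursive ladder has the right shape, but the condition-number bound you extract between consecutive levels is too weak, and this breaks the runtime claim. By routing through the sandwich $\B_\ell \preceq \A+\lambda_\ell\I \preceq 64\alpha_\ell\B_\ell$ from Theorem~\ref{t:Nys_spectral_approximation}, you get $\kappa_\ell \leq 64^2 c^2\alpha_\ell\alpha_{\ell+1} = \Theta(\alpha^2/c^{2\ell-1})$, which is \emph{not} constant in $\ell$. In particular $\sqrt{\kappa_0}=\Theta(\sqrt c\,\alpha)$, so already the outermost level costs $N_0\cdot nq_0 b = \Theta(\sqrt c\,\alpha)\cdot n\alpha b = \Theta(n\alpha^2 b)$ matvec work, exceeding the target $\tilde O(c\,\alpha^{1+\phi}nb)$ regardless of how the inner levels telescope. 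The ``redistribution'' you describe cannot rescue this: the product $\prod_{j<\ell}N_j\cdot q_\ell b$ is $\Theta(\alpha^{\ell+1}/c^{\ell^2/2})$, which for small $\ell$ is $\Theta(\alpha^2)$ and only later decays.

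The missing idea is that consecutive $\B_\ell$ must be compared \emph{directly}, not via $\A+\lambda_\ell\I$. The paper does this through a separate concentration result (its Theorem~\ref{t:exp_nys_con}) showing that $\hat\A_{[q_\ell]}+\lambda_\ell\I \approx_{1+\theta} \E[\hat\A_1]+\lambda_\ell\I$ whenever $q_\ell$ exceeds $\tilde O(\sqrt{\lambda'/\lambda_\ell}/\theta^2)$. Since the expectation is the same at every level and only the regularizer changes by $c^2$, this yields the \emph{uniform} bound $\kappa(\B_{\ell+1}^{-1/2}\B_\ell\B_{\ell+1}^{-1/2})\le c^2(1+\theta)^4$ for all $\ell$. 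With constant per-level condition number, the recursion unrolls to $\sum_\ell \big(4c(1+\theta)^2\log(20c^2(1+\theta)^4)\big)^\ell\cdot nq_\ell b$, and the self-referential hypothesis on $c$ is exactly what forces $\big(4(1+\theta)^2\log(20c^2(1+\theta)^4)\big)^L \le \alpha^\phi$ over $L=\lceil\log_c\alpha\rceil$ levels. Note that the parameter $\theta$ in the statement (and in the constraint on $c$) is precisely this concentration accuracy; in your version it has no role, which is a signal that the intended argument is different.
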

\begin{remark}
Choosing $\theta=1/2$, $\phi = \frac1{\sqrt{\log c}}$ and sufficiently large $c = O(\log n)$, we get  runtime $\tilde O(\alpha^{1+o(1)}nd_{\alpha^2\lambda}(\A)) = \tilde O(nm\cdot \alpha^{o(1)})$, where $m=\tilde O(\alpha d_{\alpha\lambda}(\A))$ is the total number of landmarks across all $q$ Nystr\"om matrices. Similarly, the preprocessing cost can be written as $\tilde O(nm + m^3/\alpha^2)$.
\end{remark}
\paragraph{Proof Sketch.} For any integer $k\geq 0$, consider $q_k = \tilde O(\frac{\alpha}{\theta^2c^k})$ and $\M_k = \hat\A_{[q_k]} + c^{2k}\lambda\I$. Observe that we need to approximate $\M_0^{-1}\v$. The main idea here is to solve the linear system $\M_0\u=\v$ to $\epsilon$ accuracy by preconditioning $\M_0$ with $\M_1$, which in turn leads to solving the linear system $\M_1\u=\v$ for some $\v \in \R^n$. In fact, by solving the linear system $\M_1\u=\v$ to a sufficient accuracy, one can obtain an $\epsilon$-approximate solution to $\M_0\u=\v$ (see Lemma \ref{l:pcg}). Now recursing on $k$, and letting $k = O\big(\log(\alpha)/\log(c)\big)$, one reaches $\M_{k+1}$ with $q_{k+1} = \tilde O(1).$ At that stage, we precondition $\M_{k+1}$ with just $\hat\A_1 + \lambda\I$. 

The first main argument of the proof justifies preconditioning of $\M_{j}$ with $\M_{j+1}$ by proving that for any $0 \leq j \leq k$, we have $\kappa(\M_{j}^{-1/2}\M_{j+1}\M_{j}^{-1/2})\leq c^2(1+\theta)^4$. We obtain this
by relying on matrix concentration arguments to show that
$\kappa\big(\M_{j}^{-1/2}\E\M_{j}\M_{j}^{-1/2}\big) \leq (1+\theta)^2$ and also observing 
that $\kappa\big((\E\M_{j})^{-1/2}\E\M_{j+1}(\E\M_{j})^{-1/2}\big) \leq c^2$ for each $j$. By design, there is a trade-off associated with choosing $c$, with large $c$ leading to computational gains at the expense of quality of preconditioning and vice-a-versa. Furthermore, as each recursive step leads to approximately solving a linear system, the approximation error can accumulate multiplicatively with the depth of the recursion. Despite all that, we show that one can pick $c$ carefully enough such that the overall runtime complexity is $\tilde O(c\cdot q^{1+\phi}nd_{\alpha^2\lambda}(\A))$ for any $\phi>0$. We finish the proof using a recent result (Lemma~\ref{l:dmy_Nystrom_solve}) from \cite{dmy24} stating that $(\hat\A_1 + \alpha^2\lambda\I)^{-1}\u=\v$ can be solved appproximately in time $\tilde O(nd_{\alpha^2\lambda}(\A))$ after an $\tilde O(nd_{\alpha^2\lambda}(\A)+d_{\alpha^2\lambda}(\A)^3)$ preprocessing cost.

\subsection{Proof of Theorem \ref{t:block-nystrom}}
Let $q=\tilde O(\alpha)$ and $\hat\A_{[q]}+\lambda\I= \frac{1}{q}\sum_{i=1}^{q}{\hat\A_i+\lambda\I}$, where each $\hat\A_i$ is a Nystr\"om approximation of $\A$ of rank $\tilde O(d_{\alpha^2\lambda}(\A))$ sampled from $O(1)$-approximate $\alpha^2\lambda$-ridge leverage scores of $\A$. Let $m=\tilde O(\alpha d_{\alpha^2\lambda}(\A))$ be the number of landmarks across all $\hat\A_i$'s. The ridge leverage scores can be approximated using recursive techniques from \cite{musco2017recursive}, resulting in cost $\tilde O(nd_{\alpha^2\lambda}(\A)^2) = \tilde O(nm/\alpha^2)$. Furthermore, as each individual $\hat\A_i$ has the decomposition $\C_i\W_i^{-1}\C_i^\top$ where $\C_i \in \R^{n \times \tilde O(d_{\alpha^2\lambda}(\A))}$ and $\W_i \in \R^{ \tilde O(d_{\alpha^2\lambda}(\A)) \times  \tilde O(d_{\alpha^2\lambda}(\A))}$, we compute $\C_i$ and $\W_i^{-1}$ for all $q$ Nystr\"om approximations costing $\tilde O(\alpha(nd_{\alpha^2\lambda}(\A) + d_{\alpha^2\lambda}(\A)^3)) = \tilde O(nm + m^3/\alpha^2)$. The total construction cost is therefore $\tilde O\big((nm+m^3)/\alpha^2 + nm\big)$.
In Theorem  \ref{t:Nys_spectral_approximation} we have shown that $\hat\A_{[q]}+\lambda\I$ forms an $O(\alpha)$-approximation of $\A+\lambda\I$ in the sense of (\ref{eq:approx}). Secondly, for any vector $\v \in \R^n$, multiplication of $\hat\A_i$ with $\v$ can be carried out sequentially from right to left by exploiting the decomposition of $\hat\A_i$, resulting in total cost $q\cdot \tilde O(nd_{\alpha^2\lambda}(\A))=\tilde O(nm)$. Furthermore, as shown in Theorem~\ref{t:linear_system_solve}, $\big(\hat\A_{[q]} +\lambda\I\big)^{-1}\v$ can be approximated to high precision in time $\tilde O\big(nm\cdot\alpha^{o(1)}\big)$. 

\section{Block-Nystr\"om Preconditioner for Quadratic\\ Minimization}
\label{s:quadratic-main}

In this section, we present an application of Block-Nystr\"om for minimizing a quadratic function $g(\x) = \frac12\x^\top\A\x-\x^\top\b$, given an $n\times n$ psd matrix $\A$ and a vector $\b\in\R^n$, specifically using Block-Nystr\"om as a preconditioner for minimizing $g(\x)$, concluding with the proof of Corollary~\ref{c:quadratic}. For a more direct comparison with recent related works, we will state the time complexities in this section using the fast matrix multiplication exponent $\omega\approx2.372$.

We consider the setting of Corollary $\ref{c:quadratic}$. Let $\A$ have at most $k$ eigenvalues larger than $O(1)$ times its smallest eigenvalue. Our first step is to approximate $\bar\lambda$-ridge leverage scores of $\A$, where $\bar\lambda=\frac{1}{k}\sum_{i>k}{\lambda_i(\A)}$. The recursive sampling method from \cite{musco2017recursive} provides $O(1)$-approximations for all $\bar\lambda$-ridge leverage scores in time $\tilde O(nk^{\omega-1})$, however, we show that if matrix $\A$ has flat-tailed spectrum, then the cost of leverage score approximation can be further optimized to $\tilde O(\nnz(\A)+k^\omega)$.

\begin{lemma}\label{l:rls-square-main}
    Given an $n\times n$ psd matrix $\A$ with at most $k$ eigenvalues larger than $O(1)$ times its smallest eigenvalue, consider $\bar\lambda=\frac1k\sum_{i>k}\lambda_i(\A)$. Then, $d_{\bar\lambda(\A)}\leq 2k$, and we can compute $O(1)$-approximations of all $\bar\lambda$-ridge leverage scores of $\A$ in time $\tilde O(\nnz(\A) + k^\omega)$.
\end{lemma}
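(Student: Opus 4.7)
The plan is to prove the two claims separately.

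For the bound $d_{\bar\lambda}(\A)\le 2k$, I would split $d_{\bar\lambda}(\A)=\sum_{i=1}^{n}\frac{\lambda_i(\A)}{\lambda_i(\A)+\bar\lambda}$ into the top-$k$ and tail contributions. The top-$k$ terms are each at most $1$, giving a contribution of at most $k$. For the tail ($i>k$), using the trivial bound $\frac{\lambda_i(\A)}{\lambda_i(\A)+\bar\lambda}\le \frac{\lambda_i(\A)}{\bar\lambda}$ together with the definition $\bar\lambda=\frac{1}{k}\sum_{i>k}\lambda_i(\A)$ collapses the tail sum to exactly $k$. Adding the two bounds gives $d_{\bar\lambda}(\A)\le 2k$.

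For the algorithmic claim, my plan is to accelerate the recursive leverage score approximation scheme underlying Lemma~\ref{l:rls_musco} by replacing its per-level leverage score computation with a sketching-based fast leverage score subroutine. Recall that the recursive scheme halves the regularization level-by-level from $\lambda_0=\|\A\|$ down to $\bar\lambda$, over $L=O(\log(\|\A\|/\bar\lambda))$ levels. At level $\ell$, given $O(1)$-approximate $2\lambda_\ell$-ridge leverage scores from level $\ell-1$, one samples $m_\ell=\tilde O(d_{2\lambda_\ell}(\A))$ landmarks to form $\C=\A_{:,S}\in\R^{n\times m_\ell}$ and $\W=\A_{S,S}$. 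A Woodbury-style calculation shows that the $\lambda_\ell$-ridge leverage scores of the Nystr\"om matrix $\C\W^{-1}\C^\top$, which are constant-factor approximations of the $\lambda_\ell$-ridge leverage scores of $\A$, have the explicit form $\c_i^\top(\C^\top\C+\lambda_\ell\W)^{-1}\c_i$, where $\c_i^\top$ is the $i$-th row of $\C$.

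The key speed-up is to recognize these as the top-$n$ standard leverage scores of the augmented tall matrix $\F_\ell\in\R^{(n+m_\ell)\times m_\ell}$ whose top $n$ rows are $\C$ and whose bottom $m_\ell$ rows are $\sqrt{\lambda_\ell}\,\W^{1/2}$; by construction $\F_\ell^\top\F_\ell=\C^\top\C+\lambda_\ell\W$. Building $\W^{1/2}$ from an eigendecomposition of $\W$ costs $O(m_\ell^\omega)$, after which a standard sketching-based leverage score estimator (Drineas--Magdon-Ismail--Mahoney--Woodruff composed with an input-sparsity-time CountSketch preconditioner) returns $O(1)$-approximations to all leverage scores of $\F_\ell$ in time $\tilde O(\nnz(\F_\ell)+m_\ell^\omega)=\tilde O(\nnz(\A)+m_\ell^\omega)$, using that $\nnz(\F_\ell)\le \nnz(\C)+m_\ell^2\le \nnz(\A)+m_\ell^\omega$. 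Summing per-level costs and observing that $\sum_\ell m_\ell^\omega$ is dominated by the final-level $m_L=\tilde O(d_{\bar\lambda}(\A))=\tilde O(k)$, the total runtime is $\tilde O(\nnz(\A)+k^\omega)$.

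The main technical hurdle is propagating $O(1)$-approximation quality through the composition of sketching-based leverage score estimation on each $\F_\ell$ with the recursive Nystr\"om reduction from the $2\lambda_\ell$-level to the $\lambda_\ell$-level, without multiplicative constants compounding across the $L$ levels. I would handle this by taking the sketching dimensions polylogarithmic in $n/\delta$, so that each per-level spectral error is absorbed into the constant-factor slack of the recursive guarantee; a union bound across the $O(\log n)$ levels then inflates sample sizes and failure probabilities only by polylogarithmic factors, absorbed into the $\tilde O$.
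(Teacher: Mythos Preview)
Your argument for $d_{\bar\lambda}(\A)\le 2k$ is correct and is exactly the standard head/tail split that the paper defers to a citation.

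The algorithmic part, however, has a genuine gap. You assert that the $\lambda_\ell$-ridge leverage scores of the Nystr\"om matrix $\hat\A=\C\W^{-1}\C^\top$ are constant-factor approximations of the $\lambda_\ell$-ridge leverage scores of $\A$. This is false in general, even when the landmark set $S$ is drawn from good approximate RLS. Concretely, take $\A$ to be block-diagonal with an arbitrary $k\times k$ top block and an $(n-k)\times(n-k)$ bottom block equal to $\epsilon\I$; this matrix satisfies the flat-tail hypothesis. For any bottom-block index $i\notin S$, the row $\c_i=\A_{i,S}$ is identically zero (by block-diagonality), so your formula gives $\c_i^\top(\C^\top\C+\lambda_\ell\W)^{-1}\c_i=0$, whereas the true score is $\ell_i(\A,\bar\lambda)=k/n>0$. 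Since only $\tilde O(k)\ll n-k$ landmarks are drawn, most bottom-block indices are missed, and no finite $T$ can satisfy $\tilde\ell_i\ge\ell_i/T$ for all $i$. The underlying issue is that your quantity equals $[\hat\A(\hat\A+\lambda\I)^{-1}]_{ii}$, whereas what one actually needs to approximate is $[\A(\A+\lambda\I)^{-1}]_{ii}$; the spectral guarantee $\|\A-\hat\A\|\le\lambda$ controls the latter only through $[\A(\hat\A+\lambda\I)^{-1}]_{ii}$, which still involves the full row of $\A$ and is not a leverage score of any tall matrix built from $\C$ and $\W$ alone.

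The paper takes a quite different, non-recursive route that crucially exploits the flat-tail assumption. It first proves the operator inequality $\bar\kappa_k(\A)\cdot\A(\A+\bar\lambda\I)^{-1}\preceq\bar\kappa_k(\A^2)\cdot\A^2(\A^2+\tilde\lambda\I)^{-1}$ with $\tilde\lambda=\tfrac1k\sum_{i>k}\lambda_i^2$, and uses flatness of the tail to bound $\bar\kappa_k(\A^2)=O(1)$, yielding $\ell_i(\A,\bar\lambda)\le O(1)\cdot\ell_i(\A^2,\tilde\lambda)$ together with $\sum_i\ell_i(\A^2,\tilde\lambda)\le 2k$. The point of passing to $\A^2$ is that $\A^2$ has the explicit factorization $\A\cdot\A$, so $\ell_i(\A^2,\tilde\lambda)=\a_i^\top(\A^2+\tilde\lambda\I)^{-1}\a_i$ with $\a_i$ the $i$th row of $\A$; these can then be estimated in one shot by sketching $\A$ to a $\tilde O(k)\times n$ matrix $\M=\Pi\A$, preconditioning $\M^\top\M+\tilde\lambda\I$ via a further sketch, and reading off all scores with a Johnson--Lindenstrauss projection, for a total cost of $\tilde O(\nnz(\A)+k^\omega)$. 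This avoids both the recursion and the under-estimation problem entirely.
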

Note that the $\tilde O(\nnz(\A)+k^\omega)$ runtime may be preferrable to the $\tilde O(nk^{\omega-1})$ runtime attained by \cite{musco2017recursive} when $k$ is sufficiently large or the matrix is sufficiently sparse. The proof of Lemma \ref{l:rls-square-main} can be found in Appendix \ref{a:preconditioner}. We are now ready to prove Corollary \ref{c:quadratic}.

\paragraph{Proof of Corollary \ref{c:quadratic}} 
   We invoke the Block-Nystr\"om method (Theorem \ref{t:block-nystrom}) for $\lambda =\bar\lambda/\alpha^2$, obtaining $\hat\A$, using ridge leverage sampling based on Lemma \ref{l:rls-square-main}. Estimating the ridge leverage scores takes time $\tilde O(\nnz(\A)+k^\omega)$. Obtaining the data structure $\hat\A$ requires preprocessing, where we sample each $\hat\A_i$ for $1\leq i \leq \tilde O(\alpha)$ and compute $\hat\C_i$, $\W_i^{-1}$. This preprocessing cost adds a factor of $\tilde O(\alpha(nk+k^\omega))$ to the overall cost. Now, we use $(\hat\A+\lambda\I)^{-1}$ as the preconditioner for an iterative quadratic solver such as the one given in Corollary \ref{c:opt}. Note that Theorem \ref{t:block-nystrom} implies that $\hat\A$ is a $\lambda$-regularized $O(\alpha)$-approximation of $\A$. Therefore, condition number after preconditioning $\A$ with $\hat\A+\lambda\I$ will then be
    \begin{align*}    \kappa\Big((\hat\A+\lambda\I)^{-1/2}\A(\hat\A+\lambda\I)^{-1/2}\Big) 
    &=  O\bigg(\alpha\Big(1+\frac\lambda{\lambda_{\min}}\Big)\bigg) 
    = O\Big(\alpha + \frac{n}{k\alpha}\Big),     
    \end{align*}
    where in the last step we used that $\bar\lambda = O(\frac nk\lambda_{\min})$ by the assumption on $\A$. Setting $\alpha=\sqrt{n/k}$, we obtain that the condition number after preconditioning is $O(\sqrt{n/k})$. Every iteration of the preconditioned solver requires matrix vector products with $\A$ costing $O(n^2)$ and with $(\hat\A+\lambda\I)^{-1}$, costing $\tilde O(\alpha^{1+o(1)}nk)$ , due to Lemma \ref{t:linear_system_solve}.  Furthermore, for $\alpha = O(\sqrt{n/k})$ we have $\alpha nk = n\sqrt{nk}\leq n^2$. Thus, the overall cost of the solver includes $\tilde O(n^2+k^\omega\alpha)$ for preprocessing, and then $\tilde O((n^2+\alpha^{1+o(1)}nk)\sqrt\alpha)$ for solving. Note that again we have $\alpha^{1+o(1)}nk=O(n^2)$. So, overall, we get the time complexity:\vspace{-4mm}
    \begin{align*}
        \tilde O\bigg(n^2\Big(\frac nk\Big)^{1/4} + k^\omega\Big(\frac nk\Big)^{1/2}\bigg).
    \end{align*}
    Now, observe that if the first term dominates the second term, then we can drive $k$ up to balance them. So, by optimizing over $k$, we get $k=n^{\frac{2-1/4}{\omega-1/4}}\approx n^{0.82}$, and plugging this into the bound we get $\tilde O(n^{\frac{\omega-2}{4\omega-1}}+k^\omega\sqrt{n/k}) = \tilde O(n^{2.044})$ for any $k\leq n^{0.82}$.\qed
\paragraph{Discussion:} The Block-Nystr\"om preconditioner achieves $\tilde O(n^{2.044} + k^\omega\sqrt{n/k})$ time complexity, which is better than the $\tilde O(n^{0.065}+k^\omega)$ runtime of the preconditioner by \cite{dmy24} for any $k\leq n^{0.82}$. Interestingly, an altogether different approach to this problem was proposed by \cite{dy24}. They used a stochastic solver instead of a preconditioner, which means that their method cannot be paired with deterministic solvers and exploit fast matrix vector products with $\A$, unlike preconditioner-based methods such as Block-Nystr\"om). Their approach achieves $\tilde O(n^2+nk^{\omega-1})$ runtime, so our method also improves upon this approach, for any $k\geq n^{0.76}$.

\section{Least Squares Regression over Hilbert Spaces}
\label{s:hilbert}
In this section, we present an application of Block-Nystr\"om to kernel ridge regression over Hilbert spaces in hard learning scenarios. Along the way, we give a new risk bound for a class of KRR algorithms, which should be of independent interest. We start by introducing the learning model.

Let $\calh$ be a separable Hilbert space and $\mathcal{D} = \{(\x_i,y_i)\}_{i=1}^{n}$ be sampled from an unknown distribution $\rho$ over $\calh\times\R$. Let $\rho_{\calx}$ denote the marginal distribution over $\calh$ and $\rho(y|\x)$ denote the conditional distribution over $\y\in\R$.
We denote the inner product over $\calh$ by $\langle\cdot,\cdot\rangle_{\calh}$ and the induced norm by $\|\cdot\|_\calh$. Let $\calh_\rho$ be the Hilbert space consisting of linear forms over $\calh$, defined as $\calh_\rho = \{f:\calh \rightarrow \R \ | f(\x) = \langle \x,\omega\rangle_\calh  \ \text{for some} \  \omega \in \calh\}.$ We consider expected risk minimization: 

\begin{align}
    \inf_{f \in \calh_\rho}\mathcal{E}(f) &= \int_{\calh\times\R}{(f(\x)-y)^2d\rho(\x,y)}. \label{e:main}
\end{align}
The above formulation for nonparametric regression provides a general framework for learning over Hilbert spaces. In particular, it covers learning over reproducing kernel Hilbert spaces and has been studied extensively in literature \citep[e.g.,][]{cucker2007learning,caponnetto2007optimal,steinwart2008support,bauer2007regularization,lin2017optimal}. Note that the minimizer of expected risk in (\ref{e:main}) need not exist in $\calh_\rho$, leading to hard learning scenarios, known as unattainable learning settings. We next describe the standard  assumptions for this learning model.

\begin{assumption}[Bounded kernel assumption]\label{as:1}
    The support of $\rho_\calx$ is compact and there exists a constant $G >0$ such that $\langle \x,\x'\rangle_\calh \leq G^2  \ \forall \ \x,\x' \in \calh, \text{almost surely}$.  
\end{assumption}
\begin{assumption}[Moments assumption]\label{as:2}
    There exist constants $M,\sigma >0$ such that for any integer $p \geq 2$, we have $\int_{\R}{|y|^pd\rho(y|\x)} < \frac{1}{2}p!M^{p-2}\sigma^2$ \ \text{almost surely.}
   
\end{assumption}Consider the Hilbert space $L^2(\calh,\rho_\calx) := \{f: \calh \rightarrow \R \ | \int_{\calh}{f(\x)^2d\rho_\calx(\x) < \infty}\}$. The norm on $L^2(\calh,\rho_\calx)$ will be denoted by $\|\cdot\|_\rho$, and for any $f \in  L^2(\calh,\rho_\calx)$ we have $\|f\|_\rho = (f(\x)^2d\rho_\calx(\x))^{1/2}$. Let $\cals_\rho: \calh \rightarrow L^2(\calh,\rho_\calx)$ be the map $\cals_\rho\omega = \langle\cdot,\omega\rangle_\calh$ and $\cals_\rho^*: L^2(\calh,\rho_\calx) \rightarrow \calh$ be the adjoint map of $\cals_\rho$. The covariance operator $\cals_\rho^*\cals_\rho$ will be denoted as $\calc$, whereas the integral operator $\cals_\rho\cals_\rho^*$ will be denoted as $\call$. Let $\calz_n: \calh \rightarrow \R^n $ be $\calz_n\omega = \big(\langle\x_i,\omega\rangle\big)_{i=1}^{n}$. An important observation here is that the operator $\calz_n\calz_n^* : \R^n \rightarrow \R^n$  corresponds to  the $n\times n$ psd kernel matrix $\K=[\langle\x_i,\x_j\rangle_{\mathcal H}]
_{i,j}$ over the $n$ points $\x_i$ for $1\leq i \leq n$. The minimizer of $\mathcal{E}(f)$  over all functions in $L^2(\calh,\rho_\calx)$ is attained by the regression function $f_\rho$ \citep{cucker2007learning} defined as $f_\rho = \int_{\R}{yd\rho(y|\x)}$. 
The minimizer of (\ref{e:main}) is given by the orthogonal projection of $f_\rho$ onto the closure of $\calh_\rho$ in $L^2(\calh,\rho_\calx)$. The orthogonal projection of $f_\rho$ onto the closure of $\calh_\rho$ will be denoted by $f_\calh$.
\begin{assumption}[Smoothness assumption]\label{as:3}
    There exist positive constants $C,R,\zeta$ such that we have
    $\int_\calh{\big(f_\calh(\x)-f_\rho(\x)\big)^2\x\otimes\x d\rho_\calx(\x)} \preceq  C^2\cdot \calc$, and moreover,
    $f_\calh = \call^\zeta g$, with $\|g\|_\rho \leq R$.
\end{assumption}
Generally speaking, larger $\zeta$ corresponds to a more stringent assumption and implies that $f_\calh$ can be well approximated by a hypothesis in $\calh_\rho$. In particular, when $0<\zeta<\frac{1}{2}$, the function $f_\calh$ does not lie in $\calh_\rho$ and this corresponds to the harder unattainable setting. We focus on this setting, because when $\zeta\geq 1/2$, then existing Nystr\"om-based algorithms are essentially optimal \citep{rudi2018fast}.

\begin{assumption}[Capacity assumption]\label{as:4}
Let $\lambda>0$ and let $\mathcal{N}(\lambda)$ denote $\tr(\calc(\calc+\lambda\mathcal{I})^{-1})$, where $\mathcal{I}$ denotes the identity operator on $\calh$. Then there exists a positive constant $0\leq\gamma\leq 1$ and a constant $c_\gamma$ such that $\mathcal{N}(\lambda) \leq c_\gamma\lambda^{-\gamma}$.
\end{assumption}
Since $\calc$ is a trace class operator, the above assumption is always satisfied with $\gamma=1$ and $c_\gamma = G^2$. Moreover, if eigenvalues of $\calc$ decay polynomially i.e. $\sigma_i = O(i^{-1/\gamma})$, then the above assumption is satisfied with $c_\gamma = O(\gamma/(\gamma-1))$. The quantity $\mathcal{N}(\lambda)$ is known as the $\lambda$-effective dimension of the covariance operator $\calc$, and it is analogous to its matrix counterpart from Definition~\ref{d:ridge_lev}. 

\subsection{Expected risk of approximate KRR}
Consider the $\lambda$-regularized empirical risk defined as:
\begin{align}
    \tilde{\mathcal{E}}(\omega,\lambda) = \frac{1}{n}\sum_{i=1}^{n}{(\langle\omega,\x_i\rangle_{\calh}-y_i)^2} +\lambda\|\omega\|_\calh^2. \label{e:empirical_risk_reg}
\end{align}
A straightforward application of the representer theorem shows that the minimizer of $\tilde{\mathcal{E}}(\omega,\lambda)$ in $\calh$ lies in the $n$ dimensional space $\calh_n : = \big\{\sum_{i=1}^{n}{w_i\x_i} | \w\in \R^n\big\}$. Noting that $\calh_n$ is the range of operator $\calz_n^*$, it is easily shown that minimizing (\ref{e:empirical_risk_reg}) solves the linear system:  $(\K+n\lambda\I)\w=\y$. 

The Nystr\"om approximation of this problem can be interpreted as minimizing $\tilde{\mathcal{E}}(\omega,\lambda)$ over an $m$ dimensional subspace of $\calh_n$ defined by the $m\ll n$ Nystr\"om landmarks. For a random sub-sampling matrix $\S \in \R^{m\times n}$, the space $\calh_m$ is given as $\big\{\calz_n^*\S^\top\w | \w \in \R^m\big\}$. A simple exercise (see Lemma \ref{l:omega_rep}) shows that the minimizer of $\tilde{\mathcal{E}}(\omega,\lambda)$ is given by $\hat\omega = \hat\calp\calz_n^*(\calz_n\hat\calp\calz_n^*+n\lambda\I)^{-1}\y$, where $\hat\calp$ is the orthogonal projection operator onto the subspace $\calh_m$. Finding $(\calz_n\hat\calp\calz_n^*+n\lambda\I)^{-1}\y$ corresponds to solving $(\hat\K+n\lambda\I)\w=\y$, where $\hat\K = \calz_n\hat\calp\calz_n^* $ is the Nystr\"om approximation of~$\K$. The associated statistical risk with $\hat\omega$ is analyzed by upper bounding $\|\cals_\rho\hat\omega-f_\calh\|_\rho$, since for $\hat f = \cals_\rho\hat\omega$ one has $\mathcal{E}(\hat f) - \mathcal{E}(f_\calh) = \|\cals_\rho\hat\omega-f_\calh\|_\rho^2$ \citep{cucker2007learning, steinwart2008support}. 

While prior work \citep[e.g.,][]{lin2020optimal} has analyzed projection-based algorithms in this model, their analysis does not cover Block-Nystr\"om, because in this case $\hat\calp$ is not an orthogonal projection. To address this, we consider the following more general family of algorithms.
\begin{assumption}\label{as:5}
    Let $\hat\calp$ be an operator on space $\calh$ with properties: $0\preceq \hat\calp \preceq \mathcal{I}$ and $\calp\hat\calp=\hat\calp$, where $\calp$ denotes the orthogonal projection operator onto $\calh_n$ and $\mathcal{I}$ denotes the identity operator. 
    Then, for given $\lambda>0$, we consider the hypothesis $\hat\omega:=\hat\calp\calz_n^*(\calz_n\hat\calp\calz_n^*+n\lambda\I)^{-1}\y$.
\end{assumption}

This general learning model covers as special cases the models where $\hat\calp$ is an orthogonal projection, as in Nystr\"om KRR or sketching-based KRR methods. However, in the more general setup $\hat\calp$ need not be a projection. For example, in Block-Nystr\"om, we consider $\hat\calp = \frac{1}{q}\sum_{i=1}^{q}{\hat\calp_i}$ where $\hat\calp_i$ are independent random orthogonal projections. In particular, $\hat\calp_i = \calz_n^*\S_i^\top(\S_i\K\S_i^\top)^\dagger\S_i\calz_n$ where $\S_i \in \R^{m\times n}$ is a sub-sampling matrix, which yields Block-Nystr\"om KRR, $\hat\omega= \calp\calz_n^*(\hat\K_{[q]} +n\lambda\I)^{-1}\y$ where $\hat\K_{[q]} = \frac{1}{q}\sum_{i=1}^{q}{\hat\K_i}$ and $\hat\K_{i} = \calz_n\hat\calp_i\calz_n^*$ are iid Nystr\"om approximations.

\begin{theorem}[Expected risk of approximate KRR]
\label{t:main_stat_risk}
    Under assumptions \ref{as:1}, \ref{as:2}, \ref{as:3}, \ref{as:4}, and \ref{as:5}, let $\zeta<1/2$, $n \geq O(G^2\log(G^2/\delta))$ and $\frac{19G^2\log(n/\delta)}{n}\leq\lambda \leq \|\calc\|$. If $\hat\calp$ is constructed so that $\hat\K=\calz_n^*\hat\calp\calz_n$ is an $n\lambda$-regularized $\alpha$-approximation of $\K$,  as defined in \eqref{eq:approx}, then with high probability,
   \begin{align*}
    \big\|\cals_\rho \hat\omega- f_\calh\big\|_\rho \leq  \tilde O\Big(\alpha R\lambda^\zeta
    + \frac{1}{n\lambda^{1-\zeta}} + \frac{1}{\sqrt{n\lambda^{\gamma}}} \Big).
\end{align*}
\end{theorem}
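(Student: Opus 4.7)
The approach is to adapt the classical bias/variance decomposition of KRR (cf.~Caponnetto--De Vito, Lin--Cevher) to the case where $\hat\calp$ need only satisfy Assumption~\ref{as:5}---and so may fail to be idempotent, as in Block-Nystr\"om---while carefully tracking how the $\alpha$-approximation propagates from the matrix level to the operator level.

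First, applying the push-through identity to $\mathbf{B}=\hat{\calp}^{1/2}\calz_n^*$, I would rewrite the estimator in canonical form
\[
\hat\omega=\hat{\calp}^{1/2}\big(\hat{\calp}^{1/2}\hat{\calc}_n\hat{\calp}^{1/2}+\lambda\mathcal{I}\big)^{-1}\hat{\calp}^{1/2}\hat{g},
\]
with $\hat{\calc}_n:=\tfrac1n\calz_n^*\calz_n$ and $\hat{g}:=\tfrac1n\calz_n^*\y$. Introducing the population minimizer $\omega_\lambda:=(\calc+\lambda\mathcal{I})^{-1}\cals_\rho^* f_\calh$ and the $\hat\calp$-dependent ``ghost'' $\bar\omega_\lambda^{\hat\calp}$, obtained by replacing $\hat{\calc}_n,\hat g$ in the canonical form by $\calc,\cals_\rho^* f_\calh$, a second push-through gives the clean identity
\[
\cals_\rho\bar\omega_\lambda^{\hat\calp}-f_\calh=-\lambda(\hat\call+\lambda\mathcal{I})^{-1}f_\calh,\qquad \hat\call:=\cals_\rho\hat\calp\cals_\rho^{*}.
\]

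I would then split $\|\cals_\rho\hat\omega-f_\calh\|_\rho\le\|\cals_\rho(\hat\omega-\bar\omega_\lambda^{\hat\calp})\|_\rho+\|\cals_\rho\bar\omega_\lambda^{\hat\calp}-f_\calh\|_\rho$. The second (approximation bias) term is bounded by $\lambda R\,\|(\hat\call+\lambda\mathcal{I})^{-1}\call^{\zeta}\|$ using Assumption~\ref{as:3}. The first (sample) term is further decomposed via the resolvent identity into a noise component and an empirical-vs-population covariance component; both are controlled using Bernstein-type concentration for Hilbert-valued random variables (Assumption~\ref{as:2}) together with operator concentration of $\hat{\calc}_n$ around $\calc$---both valid thanks to $\lambda\ge 19G^2\log(n/\delta)/n$---producing the $\tilde O\big(\tfrac{1}{n\lambda^{1-\zeta}}+\tfrac{1}{\sqrt{n\lambda^\gamma}}\big)$ contribution. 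The $\alpha$-factor enters both the bias and the sample term only through the ``mismatch'' quantity $\|(\call+\lambda\mathcal{I})^{1/2}(\hat\call+\lambda\mathcal{I})^{-1/2}\|$.

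The main obstacle, and the step requiring genuinely new work, is lifting the hypothesis $\hat\K+n\lambda\I\succeq\alpha^{-1}(\K+n\lambda\I)$ on $\R^n$ to the operator inequality $\hat\call+\lambda\mathcal{I}\succeq c\alpha^{-1}(\call+\lambda\mathcal{I})$ on $L^2(\calh,\rho_\calx)$ for an absolute constant $c>0$. I would do this in two stages: (i) a spectrum-matching/push-through argument transfers the matrix inequality to the $\calh$-level inequality $\hat{\calp}^{1/2}\hat{\calc}_n\hat{\calp}^{1/2}+\lambda\mathcal{I}\succeq\alpha^{-1}(\hat{\calc}_n+\lambda\mathcal{I})$ (trivially on $\ker\calz_n^*\calz_n$, and by push-through on its complement); (ii) operator Bernstein concentration replaces $\hat{\calc}_n$ by $\calc$ (and $\hat{\calp}^{1/2}\hat{\calc}_n\hat{\calp}^{1/2}$ by $\hat{\calp}^{1/2}\calc\hat{\calp}^{1/2}$) up to constants, and a final push-through transports the resulting inequality on $\calh$ back to $L^2$. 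The delicate point is stage~(ii): non-idempotence of $\hat\calp$ blocks the convenient projection identities used in prior projection-based Nystr\"om analyses, so the concentration has to be carried out in the weighted norm $\|(\calc+\lambda\mathcal{I})^{-1/2}(\hat{\calc}_n-\calc)(\calc+\lambda\mathcal{I})^{-1/2}\|$ and then re-composed with $\hat\calp^{1/2}$ without losing the $\alpha$-factor. Once this is in place, the bias is at most $c^{-1}\alpha R\lambda^\zeta$, and summing the three contributions yields the claimed rate.
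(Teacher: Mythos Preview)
Your decomposition differs from the paper's in one load-bearing way. You pivot at a $\hat\calp$-dependent population ghost $\bar\omega_\lambda^{\hat\calp}$, which forces your approximation-bias term to live on $L^2(\calh,\rho_\calx)$ and hence requires lifting the matrix hypothesis $\hat\K+n\lambda\I\succeq\alpha^{-1}(\K+n\lambda\I)$ all the way to the population operator inequality $\hat\call+\lambda\mathcal{I}\succeq c\alpha^{-1}(\call+\lambda\mathcal{I})$. The paper avoids this entirely: it pivots at the \emph{$\hat\calp$-free} population minimizer $\omega=\calc_\lambda^{-1}\cals_\rho^*f_\calh$ and decomposes $\hat\omega-\omega$ into a variance piece and a bias piece $\bar\calp\calz_n^*(\hat\K+n\lambda\I)^{-1}\calz_n\omega-\omega$. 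After changing norms via $\|\calc_\lambda^{1/2}\calc_{n\lambda}^{-1/2}\|\le 2$ (Lemma~\ref{l:rudi_15_1}) and using the Assumption~\ref{as:5} identity $\calz_n^\dagger\calz_n\bar\calp=\bar\calp$, the bias collapses directly to $R\lambda^{\zeta}\bigl(1+\|\K_{n\lambda}^{1/2}(\hat\K+n\lambda\I)^{-1}\K_{n\lambda}^{1/2}\|\bigr)$, and the variance multiplicative factor is bounded by $1$ independently of~$\alpha$. So the $\alpha$-approximation is consumed \emph{verbatim at the matrix level}, with no operator lifting at all.

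Your route is not obviously wrong, but the step you flag as the ``main obstacle'' is a genuine gap as written. Stage~(i) and the first half of stage~(ii) can indeed be pushed through (on $\ker\calz_n$ both sides are $\lambda\mathcal{I}$, and concentration $\calc_{n\lambda}\approx_2\calc_\lambda$ handles the empirical-to-population swap on the right-hand side; conjugating $\calc_\lambda\preceq 2\calc_{n\lambda}$ by $\hat\calp^{1/2}$ and averaging with $\lambda\mathcal{I}$ handles the left). But the ``final push-through'' from $\hat\calp^{1/2}\calc\hat\calp^{1/2}+\lambda\mathcal{I}\succeq\tfrac{c}{\alpha}(\calc+\lambda\mathcal{I})$ on $\calh$ to $\hat\call+\lambda\mathcal{I}\succeq\tfrac{c'}{\alpha}(\call+\lambda\mathcal{I})$ on $L^2$ is \emph{not} a push-through in the usual sense: writing $T=\cals_\rho\hat\calp^{1/2}$ and $S=\cals_\rho$, you have $T^*T+\lambda\succeq\tfrac{c}{\alpha}(S^*S+\lambda)$ and want $TT^*+\lambda\succeq\tfrac{c'}{\alpha}(SS^*+\lambda)$, but there is no general principle relating these two Loewner inequalities when $T\neq S$. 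The Woodbury route $(TT^*+\lambda)^{-1}=\lambda^{-1}(\mathcal{I}-T(T^*T+\lambda)^{-1}T^*)$ combined with your $\calh$-bound yields an inequality going the wrong direction. If you insist on this decomposition you would need an additional argument specific to $T=S\hat\calp^{1/2}$; otherwise, switching to the paper's pivot eliminates the issue altogether and also shows that $\alpha$ enters only the bias, not the sample term.
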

The proof of Theorem \ref{t:main_stat_risk} (Appendix \ref{s:appendix_krr}) incorporates and extends techniques from \cite{rudi2015less}, \cite{lin2020optimal}, and \cite{lin2020convergences}. While previous works require a projection-based guarantee for the approximate KRR (such as Lemma \ref{l:musco-approx}), our proof only requires the looser regularized approximation guarantee~\eqref{eq:approx}. Note that the projection-based arguments upper bound the expected risk by approximating only the top part of the spectrum of $\K$, while our analysis also relies on how well the tail of the spectrum is approximated, thus leveraging the benefits of Block-Nystr\"om.

If we let $\alpha=O(1)$, then Theorem \ref{t:main_stat_risk} recovers the optimal learning rate for least squares regression over Hilbert spaces as proved in \cite{lin2020optimal} for $\zeta <1/2$. The optimal learning rate can be found by optimizing the right hand side over $\lambda$, leading to $\lambda^* \approx n^{-\frac{1}{\max\{1,2\zeta+\gamma\}}}$\footnote{Here, $\approx$ means upto constants depending on $G,\|\calc\|$ and $\log(n/\delta)$ factors}. This gives learning rate of $\tilde O(n^{-\rho^*})$, where $\rho^*=\zeta$ if $2\zeta+\gamma \leq 1$ and $\rho^*=\frac{\zeta}{2\zeta+\gamma}$ if $2\zeta+\gamma >1$. Considering $q=1$ and $\hat\K$ to be the Nystr\"om approximation of $\K$ sampled using $n\lambda^*$-ridge leverage scores of $\K$, we recover the guarantee for classical Nystr\"om KRR, as shown in \cite{lin2020convergences}. In the unattainable setting, the overall time complexity is dominated by sampling Nystr\"om landmarks according to $n\lambda^*$-ridge leverage scores of $\K$. Even using fast ridge leverage score sampling techniques from \cite{rudi2018fast}, the overall time complexity of running classical Nystr\"om KRR could be as high as $\tilde O\big(n^{\frac{1+2\gamma}{\max\{1,2\zeta+\gamma\}}}\big)$. For instance, if we consider the regime when $\gamma\approx 1$ and $\zeta$ is small, then the cost could be as large as $ O(n^3)$, completely erasing any computational gains.

To reduce this computational cost, suppose we allow larger $\alpha$, e.g., $\alpha=n^\theta$ for a small $\theta>0$. 
We show that, with the regularizer fixed to $n\lambda^*$, Block-Nystr\"om improves upon the time complexity of classical Nystr\"om KRR in this setting, while maintaining the excess risk factor of $\alpha$.

\begin{corollary}[Expected risk of Block-Nystr\"om KRR]\label{c:avg_nys_risk}
      Under Assumptions \ref{as:1}, \ref{as:2}, \ref{as:3} and \ref{as:4}, suppose that $\zeta<1/2$ and $n \geq \tilde O(G^2\log(G^2/\delta))$. Also, let $\lambda^* = \tilde O(n^{-\frac{1}{2\zeta+\gamma}})$ if $2\zeta+\gamma >1$ and $\lambda^* = \tilde O(n^{-1})$ if $2\zeta+\gamma \leq 1$. Consider Block-Nystr\"om KRR, $\hat\omega_{[q]}=\calp_{[q]}\calz_n^*(\hat\K_{[q]}+n\lambda^*\I)^{-1}\y$, constructed using $q=\tilde O(\alpha)$ blocks, each with $\tilde O(d_{\alpha^2\lambda^*}(\K))$ leverage score samples. Then,
      \begin{align*}
    \big\|\cals_\rho\hat\omega_{[q]}- f_\calh\big\|_\rho \leq \begin{cases} \alpha\cdot \tilde O(n^{-\frac{\zeta}{2\zeta+\gamma}})  &\text{if} \ \ 2\zeta+\gamma>1,\\
    \alpha\cdot \tilde O(n^{-\zeta}) &\text{if} \  \ 2\zeta+\gamma \leq 1.
\end{cases}
\end{align*}

\end{corollary}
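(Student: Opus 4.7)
The plan is to combine the Block-Nystr\"om spectral approximation of Theorem \ref{t:Nys_spectral_approximation} with the abstract risk bound of Theorem \ref{t:main_stat_risk}, after verifying that the Block-Nystr\"om construction fits the operator framework of Assumption \ref{as:5}. First I would observe that each individual Nystr\"om block $\hat\calp_i = \calz_n^*\S_i^\top(\S_i\K\S_i^\top)^\dagger \S_i\calz_n$ is an orthogonal projection in $\calh$ whose range is contained in $\calh_n = \mathrm{range}(\calz_n^*)$ (the range of $\calp$). Consequently, $\hat\calp_{[q]} := \tfrac{1}{q}\sum_{i=1}^q \hat\calp_i$ satisfies $0 \preceq \hat\calp_{[q]} \preceq \calp \preceq \mathcal{I}$ as an average of sub-projections, and $\calp\hat\calp_{[q]} = \hat\calp_{[q]}$ since each $\hat\calp_i$ maps into $\calh_n$. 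This places Block-Nystr\"om KRR inside Assumption \ref{as:5} with $\hat\K_{[q]} = \calz_n\hat\calp_{[q]}\calz_n^*$ and hypothesis $\hat\omega_{[q]} = \hat\calp_{[q]}\calz_n^*(\hat\K_{[q]} + n\lambda^*\I)^{-1}\y$.

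Next I would invoke Theorem \ref{t:Nys_spectral_approximation} at regularization level $n\lambda^*$ with $q = \tilde O(\alpha)$ blocks, each of size $\tilde O(d_{\alpha^2 n\lambda^*}(\K))$ drawn from approximate $\alpha^2 n\lambda^*$-ridge leverage scores, obtaining that $\hat\K_{[q]}$ is an $n\lambda^*$-regularized $O(\alpha)$-approximation of $\K$ with high probability. The conditions of Theorem \ref{t:main_stat_risk} are met: $\zeta<1/2$ and $n \geq \tilde O(G^2\log(G^2/\delta))$ are assumed directly, while $n\lambda^* \geq 19 G^2\log(n/\delta)$ holds under the stated polynomial-in-$n^{-1}$ scaling of $\lambda^*$ (after the suppressed logarithmic factors in $\tilde O$ absorb $\log(n/\delta)$). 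Theorem \ref{t:main_stat_risk} then yields
\begin{align*}
    \big\|\cals_\rho\hat\omega_{[q]} - f_\calh\big\|_\rho \ \leq\ \tilde O\!\left(\alpha R (\lambda^*)^\zeta \,+\, \frac{1}{n(\lambda^*)^{1-\zeta}} \,+\, \frac{1}{\sqrt{n(\lambda^*)^\gamma}}\right).
\end{align*}

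Finally I would substitute the prescribed $\lambda^*$ in each capacity regime and check that $\alpha R(\lambda^*)^\zeta$ dominates. For $2\zeta+\gamma > 1$ with $\lambda^* = \tilde O(n^{-1/(2\zeta+\gamma)})$, a direct exponent computation shows that $1/(n(\lambda^*)^{1-\zeta})$ has exponent $(1-3\zeta-\gamma)/(2\zeta+\gamma) \leq -\zeta/(2\zeta+\gamma)$ (the inequality is exactly $2\zeta+\gamma \geq 1$), and $1/\sqrt{n(\lambda^*)^\gamma}$ has exponent precisely $-\zeta/(2\zeta+\gamma)$; the $\alpha R(\lambda^*)^\zeta$ term equals $\alpha \cdot \tilde O(n^{-\zeta/(2\zeta+\gamma)})$, which is the stated rate. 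For $2\zeta+\gamma \leq 1$ with $\lambda^* = \tilde O(n^{-1})$, the second term is $\tilde O(n^{-\zeta})$ and the third is $\tilde O(n^{-(1-\gamma)/2})$, and the hypothesis $2\zeta+\gamma \leq 1$ is exactly the condition $(1-\gamma)/2 \geq \zeta$, so both are absorbed into $\alpha \cdot \tilde O(n^{-\zeta})$. The conceptual heavy lifting has been offloaded to Theorem \ref{t:main_stat_risk} — the key enabler is that its analysis only needs the Loewner $\alpha$-approximation \eqref{eq:approx} rather than a projection-based Nystr\"om guarantee, so once that theorem is in hand the corollary is pure exponent bookkeeping with no new analytical obstacle.
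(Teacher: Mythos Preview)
Your proposal is correct and follows essentially the same route as the paper: verify that Block-Nystr\"om fits Assumption \ref{as:5}, apply Theorem \ref{t:Nys_spectral_approximation} at the kernel-matrix regularization level to obtain the $O(\alpha)$-approximation, and then substitute into Theorem \ref{t:main_stat_risk} (the paper uses the extended Theorem \ref{t:main_stat_risk_2} with $a=0$, which is the same thing). You spell out the Assumption \ref{as:5} verification and the exponent bookkeeping for the two regimes, whereas the paper's corollary proof simply says ``substituting finishes the proof,'' but there is no substantive difference in strategy.
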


\subsection{Cost comparison with classical Nystr\"om}
In Lemmas \ref{l:cost_single_Nystrom} and \ref{l:cost_avg_Nystrom} (Appendix \ref{s:appendix_krr}), we provide the time complexity analysis of Nystr\"om and Block-Nystr\"om KRR, for achieving near-optimal risk $\tilde O(n^{-\rho^*+\theta})$, where $\rho^*=\frac{\zeta}{\max\{1,2\zeta+\gamma\}}$ is the optimal rate and $\theta>0$ is a small constant (i.e., $\alpha=n^\theta$ in Corollary \ref{c:avg_nys_risk}). We show that when $\theta$ is sufficiently small and regularizer is $n\lambda^*$, the computational gains with Block-Nystr\"om KRR are at the order of $n^{2\theta\gamma}$ when compared with classical Nystr\"om KRR. In hard learning problems involving kernels with slow polynomial decay i.e., $\gamma \approx 1$, this yields a significant improvement of $\tilde O(n^{2\theta})$ over the existing projection-based KRR methods. Moreover, in Appendix~\ref{s:appendix_krr}, we further explore the computational gains of Block-Nystr\"om by considering all values of $\theta$ for which we can attain convergence. In particular, we observe that the optimized time complexity exhibits a phase transition between Block-Nystr\"om and classical Nystr\"om, which is attained by carefully reducing the number of blocks in Block-Nystr\"om while maintaining the~same~$\alpha$.

\section{Conclusions}

We propose Block-Nystr\"om, a low-rank approximation method for positive semidefinite matrices which is computationally faster than the Nystr\"om method for matrices with heavy-tailed spectrum. We show that Block-Nystr\"om speeds up preconditioning for convex optimization, and can improve the cost of near-optimal learning for kernel ridge regression  over Hilbert spaces.

\subsection*{Acknowledgments}
The authors would like to acknowledge support from NSF award CCF-2338655. Also, thanks to Aaron Sidford for valuable discussions and insights regarding recursive preconditioning.

\bibliography{arxiv}   
\appendix

\section{Auxiliary lemmas}
In this section, we mention the auxiliary results and lemmas that are used in our analysis.
\begin{lemma}[Adapted from \cite{js_talg}]\label{l:pcg}
 For any pd matrices $\A,\B \in \R^{n\times n}$ with $\A\preceq\B\preceq\kappa\A$
for $\kappa\geq1$, and $f$ that is a $\frac{1}{10\kappa}$-solver
for $\B$, i.e., given $\v\in\R^n$, returns $f(\v)$ such that $\|f(\b)-\B^{-1}\v\|_{\B}^2\leq\frac1{10\kappa}\|\B^{-1}\v\|_{\B}^2$, there is an $\epsilon$-solver for $\A$ that applies $f$ and $\A$ at most $\lceil 4 \sqrt{\kappa}\log(2/\epsilon)\rceil$ times each, and spends additional $O(n \sqrt{\kappa} \ln(1/\epsilon))$ time when run.
\end{lemma}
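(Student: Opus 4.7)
The plan is to reduce the lemma to a standard convergence analysis of preconditioned Chebyshev iteration (or equivalently preconditioned conjugate gradient) applied to $\A\x=\v$ with $\B$ as the preconditioner, combined with a perturbation argument that replaces the exact inverse $\B^{-1}$ with the approximate solver $f$. The key observation is that $\A\preceq\B\preceq\kappa\A$ yields $\kappa^{-1}\I\preceq \B^{-1/2}\A\B^{-1/2}\preceq \I$, so the preconditioned operator has condition number at most $\kappa$, which is what drives the $\sqrt{\kappa}$ rate in the iteration count.

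First, I would recall the classical exact bound: applying preconditioned Chebyshev acceleration to $\B^{-1}\A$ drives the $\A$-norm of the residual below $\epsilon$ times the initial error in $\lceil 2\sqrt{\kappa}\log(2/\epsilon)\rceil$ iterations, with each iteration requiring one product with $\A$, one exact application of $\B^{-1}$, and $O(n)$ auxiliary vector arithmetic. This follows from the standard Chebyshev polynomial bound of order $2\bigl((\sqrt{\kappa}-1)/(\sqrt{\kappa}+1)\bigr)^t$ on the residual, using that the spectrum of the preconditioned operator lies in $[\kappa^{-1},1]$.

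Second, I would replace exact $\B^{-1}$ with $f$ and track the accumulated drift. Writing each inexact step as $f(\w)=\B^{-1}\w+\r(\w)$ with $\|\r(\w)\|_{\B}\leq\sqrt{1/(10\kappa)}\,\|\B^{-1}\w\|_{\B}$, the Chebyshev recursion effectively runs on a perturbed preconditioned operator. Working in the $\B$-inner product, in which the exact operator $\B^{-1}\A$ is self-adjoint with spectrum in $[\kappa^{-1},1]$, an inductive argument on the $\A$-norm of the current error shows that each step contracts the error by essentially the same factor as the exact recursion, up to an additive perturbation of order $1/\sqrt{\kappa}$ times the current error. The tolerance $\tfrac{1}{10\kappa}$ is calibrated so that this additive contribution is absorbed into a geometric series of small ratio, inflating the iteration count from $\lceil 2\sqrt{\kappa}\log(2/\epsilon)\rceil$ to at most $\lceil 4\sqrt{\kappa}\log(2/\epsilon)\rceil$.

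Third, I would count operations: each of the $\lceil 4\sqrt{\kappa}\log(2/\epsilon)\rceil$ iterations issues one call to $f$, one multiplication by $\A$, and performs $O(n)$ scalar/vector work; summing the $O(n)$ contributions yields the $O(n\sqrt{\kappa}\log(1/\epsilon))$ additional time stated in the lemma. The main obstacle is the perturbation analysis of step two: carefully tracking how the inexact preconditioning propagates through the three-term Chebyshev recursion without inflating the contraction rate beyond a constant factor. I would handle this by exploiting the geometric decay of the exact recursion to absorb the adversarial per-step error, relying on the $\tfrac{1}{10\kappa}$ accuracy to keep the cumulative drift dominated by a geometric series whose ratio stays strictly below the exact contraction factor of the Chebyshev polynomial.
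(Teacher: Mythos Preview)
The paper does not supply its own proof of this lemma: it is listed among the auxiliary results and attributed to \cite{js_talg} without further argument. So there is no in-paper proof to compare against; the lemma is used as a black box in the recursive preconditioning scheme of Theorem~\ref{t:linear_system_solve_2}.

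Your outline is the standard route to such a statement and is sound at the level of a proof plan. The reduction $\A\preceq\B\preceq\kappa\A \Rightarrow \kappa^{-1}\I\preceq\B^{-1/2}\A\B^{-1/2}\preceq\I$ is correct, and preconditioned Chebyshev (or PCG) on an operator with spectrum in $[\kappa^{-1},1]$ indeed converges in $O(\sqrt{\kappa}\log(1/\epsilon))$ steps in the $\A$-norm. The one place where a full write-up requires genuine care is the propagation of the inexact-solve error through the three-term Chebyshev recursion: the recursion coefficients are not uniformly bounded, so a careless bound on the accumulated perturbation can blow up. The usual fix, which you gesture at, is to track the error in the $\A$-norm and use the geometric contraction of the exact iterates to show the perturbation contributions sum as a geometric series; the $\tfrac{1}{10\kappa}$ tolerance (rather than, say, $\tfrac{1}{10\sqrt{\kappa}}$) is precisely what makes this series converge with room to spare, so that the iteration count at most doubles. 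Since the paper treats the lemma as a citation, your sketch already goes further than the paper itself.
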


\begin{lemma}[\cite{askotch}]\label{rls-projection}
Let $\A\in\R^{n\times n}$ be psd and $\lambda'>0$. Let $\{p_i\}_{i=1}^n$ be $O(1)$-approximate $\lambda'$-ridge leverage score sampling probabilities. Let $\S \in \R^{O(d_{\lambda'}(\A)\log^3 n) \times n}$ be a random sub-sampling matrix, drawn from probability distribution $\{p_i\}_{i=1}^n$. Consider the random projection matrix $\P=\A^{1/2}\S^\top(\S\A\S^\top)^\dagger\S\A^{1/2}$. Then $\E[\P]$ satisfies 
\begin{align}
\E[\P] ~\succeq~ \frac12 \A \left(\A + \lambda'\I\right)^{-1}.
\end{align}
\end{lemma}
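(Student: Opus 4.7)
The plan is to combine a push-through lower bound on $\P$ with matrix concentration for ridge-leverage-score sampling, and then close the gap on the tail of $\A$'s spectrum by a separate marginal-probability argument. For $\M = \S\A^{1/2}$ and any $\mu>0$, the push-through identity $\M^\top(\M\M^\top+\mu\I)^{-1}\M = \I - \mu(\M^\top\M+\mu\I)^{-1}$, combined with the fact that its eigenvalues $\sigma_i^2/(\sigma_i^2+\mu)$ are bounded above by $1$ on the range of $\M^\top$, yields
\begin{align*}
\P \;=\; \M^\top(\M\M^\top)^{\dagger}\M \;\succeq\; \I - \lambda'\bigl(\A^{1/2}\S^\top\S\A^{1/2}+\lambda'\I\bigr)^{-1}.
\end{align*}
Since $\P$ depends only on the column span of $\S^\top$, I would rescale to the sampling matrix $\Sbt$ satisfying $\E[\A^{1/2}\Sbt^\top\Sbt\A^{1/2}] = \A$, so that matrix concentration can center around $\A$.

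Next, by a matrix Bernstein argument of the type underlying Lemma~\ref{l:musco-approx}, with $m = O(d_{\lambda'}(\A)\log^3 n)$ samples there is an event of probability at least $1-n^{-c}$ on which $\tfrac12(\A+\lambda'\I) \preceq \A^{1/2}\Sbt^\top\Sbt\A^{1/2}+\lambda'\I \preceq \tfrac32(\A+\lambda'\I)$. Inverting gives $(\A^{1/2}\Sbt^\top\Sbt\A^{1/2}+\lambda'\I)^{-1}\preceq 2(\A+\lambda'\I)^{-1}$ on the good event, and combining with the trivial bound $\preceq (1/\lambda')\I$ on the (negligibly-probable) complementary event, taking expectations gives
\begin{align*}
\E[\P] \;\succeq\; 2\A(\A+\lambda'\I)^{-1} - \I - o(1)\,\I.
\end{align*}
This already matches the target $\tfrac12 \A(\A+\lambda'\I)^{-1}$ on the \emph{head} of $\A$'s spectrum, where $\A(\A+\lambda'\I)^{-1}\succeq \tfrac12 \I$.

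The main obstacle is that the right-hand side above becomes vacuous on the \emph{tail} of $\A$'s spectrum, which is precisely where $\A(\A+\lambda'\I)^{-1}$ has small eigenvalues and the Woodbury-plus-concentration strategy loses information. To close this gap, I would supplement with a direct coordinate-wise analysis: letting $X_i = \mathbf{1}\{i\in S\}$ and using that $p_i \gtrsim \ell_i(\A,\lambda')/d_{\lambda'}(\A)$, the elementary estimate $1-(1-p)^m \geq \tfrac12 \min(mp,1)$ together with $m = \Omega(d_{\lambda'}(\A)\log^3 n)$ yields the marginal inclusion bound $\PP(i\in S) \geq \tfrac12\,\ell_i(\A,\lambda')$. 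Combining this with the pointwise domination $\P \succeq \b_i\b_i^\top/\|\b_i\|^2$ whenever $i\in S$ (where $\b_i = \A^{1/2}\e_i$), together with the identity $\sum_i \b_i\b_i^\top = \A$, should produce the tail contribution after a careful decomposition aligned with the eigenbasis of $\A$. The subtle step is that the per-coordinate inequalities $\P \succeq X_i \b_i\b_i^\top/\|\b_i\|^2$ cannot be summed directly due to linear dependencies among the $\b_i$'s; this would be handled by passing to a Schur-complement form that only uses a single ``dominant'' leverage-score sample per direction in the tail, matching each rank-one $\b_i$ contribution against the ridge-weighted quadratic form.
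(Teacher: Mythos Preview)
The paper does not prove this lemma; it is quoted as an auxiliary result from \cite{askotch}, so there is no in-paper argument to compare against and your proposal must stand on its own.

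Your head argument is correct: the push-through bound $\P\succeq \I-\lambda'(\A^{1/2}\Sbt^\top\Sbt\A^{1/2}+\lambda'\I)^{-1}$ together with matrix Bernstein does give $\E[\P]\succeq 2\A(\A+\lambda'\I)^{-1}-(1+o(1))\I$, which handles directions on which $\A(\A+\lambda'\I)^{-1}$ exceeds roughly $2/3$. The genuine gap is the tail. You correctly diagnose that the rank-one bounds $\P\succeq X_i\,\b_i\b_i^\top/\|\b_i\|^2$ cannot be summed, but the proposed remedy --- ``a Schur-complement form that only uses a single dominant leverage-score sample per direction'' --- is not an argument. For a tail eigendirection $\v$ of $\A$ with eigenvalue $\lambda_j\ll\lambda'$ you need $\E[\v^\top\P\v]\geq\tfrac12\lambda_j/(\lambda_j+\lambda')$, and there need not be any single coordinate $i$ whose contribution $\Pr(i\in S)\,(\v^\top\b_i)^2/\|\b_i\|^2$ dominates this quantity: the mass of $\v$ can be spread across many coordinates each with a tiny leverage score. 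Worse, the target is a Loewner inequality, so you must handle \emph{arbitrary} $\v$ that mix head and tail eigendirections; neither of your two bounds, nor any convex combination of them, addresses that mixing. Proofs of results of this type in the literature proceed differently --- typically via a leave-one-out analysis of the expected increment $\E[\P_m-\P_{m-1}]$ as samples are added, or by coupling i.i.d.\ leverage-score sampling to a determinantal point process for which $\E[\P]$ has a closed form --- and some such mechanism is the missing ingredient here.
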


\begin{lemma}[Cordes inequality \cite{fujii1993norm}]\label{l:cordes}
Let $\mathcal{A}$ and $\mathcal{B}$ be two positive bounded linear operators on a
separable Hilbert space. Then for $0\leq s \leq 1$
\begin{align*}
    \|\mathcal{A}^s\mathcal{B}^s\| \leq \|\mathcal{A}\mathcal{B}\|^s.
\end{align*}
\end{lemma}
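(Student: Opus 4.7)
The plan is to reduce Cordes' inequality to the L\"owner--Heinz operator monotonicity of $x\mapsto x^s$ on $[0,1]$, after which the whole argument is just the $C^*$-identity and conjugation by powers of $\mathcal{B}$.

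First I would use self-adjointness of $\mathcal{A},\mathcal{B}$ and the $C^*$-identity to square both sides:
\[
\|\mathcal{A}^s\mathcal{B}^s\|^2 = \|(\mathcal{A}^s\mathcal{B}^s)^*(\mathcal{A}^s\mathcal{B}^s)\| = \|\mathcal{B}^s\mathcal{A}^{2s}\mathcal{B}^s\|, \qquad \|\mathcal{A}\mathcal{B}\|^{2} = \|\mathcal{B}\mathcal{A}^2\mathcal{B}\|,
\]
so it suffices to establish the operator inequality $\mathcal{B}^s\mathcal{A}^{2s}\mathcal{B}^s \preceq \|\mathcal{A}\mathcal{B}\|^{2s}\,\mathcal{I}$, since taking operator norms will then give the result. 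To use inverses cleanly, I would pass to the strictly positive case by replacing $\mathcal{A},\mathcal{B}$ with $\mathcal{A}+\epsilon\mathcal{I}$, $\mathcal{B}+\epsilon\mathcal{I}$, proving the inequality for the perturbed operators, and sending $\epsilon\to 0^+$, using continuity of the operator norm and of the continuous functional calculus $T\mapsto T^s$ on compact spectral intervals bounded away from $0$.

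Now assuming $\mathcal{B}$ is invertible, the starting bound $\mathcal{B}\mathcal{A}^2\mathcal{B} \preceq \|\mathcal{A}\mathcal{B}\|^2\,\mathcal{I}$ yields, upon conjugation by $\mathcal{B}^{-1}$,
\[
\mathcal{A}^2 \preceq \|\mathcal{A}\mathcal{B}\|^{2}\,\mathcal{B}^{-2}.
\]
The key step, and the only non-elementary one, is to invoke the L\"owner--Heinz inequality (operator monotonicity of $x\mapsto x^s$ on $[0,1]$, which is exactly where the hypothesis $s\leq 1$ is used) on both sides, obtaining
\[
\mathcal{A}^{2s} \preceq \|\mathcal{A}\mathcal{B}\|^{2s}\,\mathcal{B}^{-2s}.
\]
Conjugating by $\mathcal{B}^s$ then gives the desired $\mathcal{B}^s\mathcal{A}^{2s}\mathcal{B}^s \preceq \|\mathcal{A}\mathcal{B}\|^{2s}\,\mathcal{I}$.

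The main obstacle is really just the L\"owner--Heinz step; I would quote it as a standard fact, noting that it itself admits a short proof via the integral representation $x^s = \tfrac{\sin\pi s}{\pi}\int_0^\infty \lambda^{s-1}\tfrac{x}{\lambda+x}\,d\lambda$ for $s\in(0,1)$ together with order-reversal of $X\mapsto X^{-1}$ on strictly positive operators (which makes each integrand $X\mapsto \mathcal{I}-\lambda(\lambda+X)^{-1}$ operator-monotone). Everything else in the argument is elementary algebra plus the routine $\epsilon\to 0^+$ approximation to remove the invertibility assumption.
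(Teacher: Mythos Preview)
Your proof is correct and is essentially the standard Furuta-style derivation of Cordes' inequality via L\"owner--Heinz. Note, however, that the paper does not prove this lemma at all: it is listed among the auxiliary results and simply cited from \cite{fujii1993norm}, so there is no ``paper's own proof'' to compare against.
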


\begin{lemma}[Lemma 5 from \cite{rudi2015less}]\label{l:rudi_15_1}
     If assumptions \ref{as:1} and \ref{as:4} hold, then for any $\delta>0$, $n \geq O(G^2\log(G^2/\delta))$ and $\frac{19G^2\log(n/\delta)}{n}\leq\lambda \leq \|\calc\|$, we have,
     \begin{align*}
         \|\calc_{\lambda}^{1/2}\cdot\calc_{n\lambda}^{-1/2}\| < 2
     \end{align*}
     with probability $1-\delta$.
\end{lemma}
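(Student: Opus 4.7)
The plan is to reduce the operator norm bound to a concentration inequality for the regularized empirical covariance, and then invoke an operator Bernstein inequality. Write $\hat{\calc}$ for the empirical covariance operator $\frac{1}{n}\calz_n^*\calz_n$, so that $\calc_{n\lambda}=\hat{\calc}+\lambda\mathcal{I}$. The claim $\|\calc_\lambda^{1/2}\calc_{n\lambda}^{-1/2}\|<2$ is equivalent to $\calc_\lambda\prec 4\calc_{n\lambda}$ in the Loewner order; whitening symmetrically by $\calc_\lambda^{-1/2}$, this reduces to showing
\begin{align*}
\big\|\calc_\lambda^{-1/2}(\calc-\hat{\calc})\calc_\lambda^{-1/2}\big\|<\tfrac34 \quad \text{with probability at least } 1-\delta.
\end{align*}

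To establish this, I would introduce the i.i.d.\ centered self-adjoint random operators
\begin{align*}
\xi_i\;=\;\calc_\lambda^{-1/2}(\calc-\x_i\otimes\x_i)\calc_\lambda^{-1/2},\qquad \tfrac{1}{n}\textstyle\sum_i\xi_i\;=\;\calc_\lambda^{-1/2}(\calc-\hat{\calc})\calc_\lambda^{-1/2}.
\end{align*}
By Assumption \ref{as:1}, $\|\x_i\|_\calh^2\leq G^2$ almost surely, so $\|\calc_\lambda^{-1/2}(\x_i\otimes\x_i)\calc_\lambda^{-1/2}\|\leq G^2/\lambda$, which yields the essential bound $\|\xi_i\|\leq 2G^2/\lambda$. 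For the second moment,
\begin{align*}
\E[\xi_i^2]\;\preceq\;\E\big[(\calc_\lambda^{-1/2}\x_i\otimes\x_i\calc_\lambda^{-1/2})^2\big]\;\preceq\;\tfrac{G^2}{\lambda}\,\calc_\lambda^{-1/2}\calc\,\calc_\lambda^{-1/2},
\end{align*}
whose operator norm is at most $G^2/\lambda$ and whose trace equals $(G^2/\lambda)\mathcal{N}(\lambda)$ by Assumption \ref{as:4}.

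I would then apply an intrinsic-dimension matrix Bernstein inequality for self-adjoint trace-class operators on a separable Hilbert space (e.g., Minsker's or Tropp's version), with intrinsic dimension $\tr(\E\xi_1^2)/\|\E\xi_1^2\|\leq\mathcal{N}(\lambda)$, to conclude that with probability at least $1-\delta$,
\begin{align*}
\Big\|\tfrac{1}{n}\textstyle\sum_i\xi_i\Big\|\;\leq\;\sqrt{\tfrac{2G^2\log(8\mathcal{N}(\lambda)/\delta)}{n\lambda}}\;+\;\tfrac{4G^2\log(8\mathcal{N}(\lambda)/\delta)}{3n\lambda}.
\end{align*}
Under the standing assumption $n\lambda\geq 19\,G^2\log(n/\delta)$, together with $\mathcal{N}(\lambda)\leq n$ (which holds once $n\geq O(G^2\log(G^2/\delta))$ using $\lambda\leq\|\calc\|$ and Assumption \ref{as:4}), both summands on the right are strictly below $3/8$, giving the required $3/4$ bound.

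The main obstacle is the infinite-dimensional nature of $\calc$: a naive matrix Bernstein bound would carry an ambient dimension factor that is vacuous here, so the intrinsic-dimension refinement is essential. The delicate step is tracking the numerical constants so that the logarithmic factor simplifies to $\log(n/\delta)$ rather than $\log(\mathcal{N}(\lambda)/\delta)$, which is what pins down the explicit threshold with constant $19$ and forces the stated lower bound on $n$ and the upper bound $\lambda\leq\|\calc\|$.
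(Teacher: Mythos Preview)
Your argument is correct and is precisely the standard route to this result. The paper does not supply its own proof of this lemma; it is quoted as an auxiliary fact from \cite{rudi2015less}, and the proof there proceeds exactly as you outline: whiten by $\calc_\lambda^{-1/2}$, reduce to concentration of $\calc_\lambda^{-1/2}(\calc-\hat\calc)\calc_\lambda^{-1/2}$, and apply an intrinsic-dimension operator Bernstein inequality with variance proxy $(G^2/\lambda)\,\calc_\lambda^{-1/2}\calc\,\calc_\lambda^{-1/2}$ so that the dimensional factor collapses to $\mathcal{N}(\lambda)$.

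One minor bookkeeping remark: your justification that $\mathcal{N}(\lambda)\leq n$ does not actually need Assumption~\ref{as:4}. Under Assumption~\ref{as:1} alone one has $\mathcal{N}(\lambda)\leq\tr(\calc)/\lambda\leq G^2/\lambda$, and the lower bound $\lambda\geq 19G^2\log(n/\delta)/n$ then gives $\mathcal{N}(\lambda)\leq n/(19\log(n/\delta))$, which suffices to absorb the constant inside the logarithm and obtain $\log(8\mathcal{N}(\lambda)/\delta)\leq\log(n/\delta)$ for all reasonable $n,\delta$. The condition $\lambda\leq\|\calc\|$ is used only to ensure $G^2/\lambda\geq 1$ so that your essential-supremum bound $\|\xi_i\|\leq 1+G^2/\lambda\leq 2G^2/\lambda$ holds.
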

The assumption on $n$ can be removed, (see Lemma 5.3 in \cite{lin2020optimal}) but leads to a more complicated upper bound depending on $G^2$ in place of constant $2$. In this work, we are interested in a large $n$ regime and therefore assume $n \geq O(G^2\log(G^2/\delta))$ obtaining a constant upper bound.
\begin{lemma}[Proposition 1 from \cite{rudi2015less}]
If assumptions \ref{as:1} and \ref{as:4} hold, then for any $\delta>0$, $n =O(G^2\log(G^2/\delta))$ and $\frac{19G^2\log(n/\delta)}{n}\leq\lambda \leq \|\calc\|$, we have,
 \begin{align*}
     d_{n\lambda}(\K) \leq 3\mathcal{N}(\lambda)
 \end{align*}
with probability $1-\delta$.
\end{lemma}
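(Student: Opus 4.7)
The plan is to transfer the claim from the empirical kernel matrix $\K$ to the empirical covariance operator $\calc_n := \frac1n\sum_{i=1}^n \x_i\otimes\x_i$ on $\calh$, and then compare $\calc_n$ to $\calc$ via two applications of Bernstein's inequality. The starting observation is that $\K = \calz_n\calz_n^*$ and $n\calc_n = \calz_n^*\calz_n$ share the same non-zero spectrum, so (since eigenvectors of $\calc_n$ orthogonal to $\mathrm{span}\{\x_i\}$ contribute $0$ to $\tr(\calc_n(\calc_n+\lambda\mathcal{I})^{-1})$) one obtains the identity
\begin{align*}
d_{n\lambda}(\K) \;=\; \sum_i \frac{\sigma_i(\K)}{\sigma_i(\K)+n\lambda} \;=\; \sum_i \frac{\sigma_i(\calc_n)}{\sigma_i(\calc_n)+\lambda} \;=\; \tr\big(\calc_n(\calc_n+\lambda\mathcal{I})^{-1}\big).
\end{align*}
This reduces the statement to a purely operator-theoretic comparison between $\calc_n$ and $\calc$.

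First I would apply an operator Bernstein inequality in $\calh$ to the sum $\calc_n - \calc = \frac1n\sum_i(\x_i\otimes\x_i - \calc)$, conjugated by $(\calc+\lambda\mathcal{I})^{-1/2}$. Each summand is self-adjoint with norm bounded by $G^2/\lambda$ under Assumption~\ref{as:1}, and its second moment is controlled by $\mathcal{N}(\lambda)/\lambda$, since $\E\|(\calc+\lambda\mathcal{I})^{-1/2}\x\|^2 = \mathcal{N}(\lambda)$. The hypothesis $\lambda n \geq 19\,G^2\log(n/\delta)$ is exactly calibrated so that, with probability at least $1-\delta/2$,
\begin{align*}
\big\|(\calc+\lambda\mathcal{I})^{-1/2}(\calc_n-\calc)(\calc+\lambda\mathcal{I})^{-1/2}\big\| \;\leq\; \tfrac12,
\end{align*}
which in particular yields the two-sided Loewner bound $\tfrac12(\calc+\lambda\mathcal{I}) \preceq \calc_n+\lambda\mathcal{I} \preceq \tfrac32(\calc+\lambda\mathcal{I})$, and hence $(\calc_n+\lambda\mathcal{I})^{-1} \preceq 2(\calc+\lambda\mathcal{I})^{-1}$.

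Using this together with cyclicity of the trace gives
\begin{align*}
\tr\big(\calc_n(\calc_n+\lambda\mathcal{I})^{-1}\big)
\;=\; \tr\big(\calc_n^{1/2}(\calc_n+\lambda\mathcal{I})^{-1}\calc_n^{1/2}\big)
\;\leq\; 2\,\tr\big(\calc_n(\calc+\lambda\mathcal{I})^{-1}\big),
\end{align*}
and the right-hand side is a \emph{scalar} empirical average $\frac{2}{n}\sum_i \langle\x_i,(\calc+\lambda\mathcal{I})^{-1}\x_i\rangle_\calh$ with mean $2\mathcal{N}(\lambda)$. Since each summand is bounded by $G^2/\lambda$ and has variance at most $(G^2/\lambda)\mathcal{N}(\lambda)$, a scalar Bernstein bound under the same threshold on $\lambda n$ yields an upper bound of $3\mathcal{N}(\lambda)/2 \cdot 2 = 3\mathcal{N}(\lambda)$ on this average with probability at least $1-\delta/2$. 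Combining the two events via a union bound gives the claim.

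The main obstacle is purely quantitative: one has to tune the constants in both Bernstein applications so that the $\tfrac12$-distortion at the operator level and the Bernstein deviation at the trace level compose to yield exactly the factor $3$ appearing in the statement. This is the sole reason for the specific numerical threshold $19 G^2 \log(n/\delta)$; relaxing it to, say, a smaller constant would only change the final factor but not the structure of the argument. Aside from this bookkeeping, every step is a standard manipulation of trace inequalities and the identity relating $\K$ and $\calc_n$.
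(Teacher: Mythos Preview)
The paper does not prove this lemma; it simply quotes it as Proposition~1 of \cite{rudi2015less}. Your argument is correct and is essentially the one given in that reference: identify $d_{n\lambda}(\K)$ with $\tr(\calc_n(\calc_n+\lambda\mathcal{I})^{-1})$ via the shared nonzero spectrum of $\calz_n\calz_n^*$ and $\calz_n^*\calz_n$, invoke the operator-norm concentration $\calc_n+\lambda\mathcal{I}\approx_2 \calc+\lambda\mathcal{I}$ (which is exactly the paper's Lemma~\ref{l:rudi_15_1}) to replace the resolvent, and then control the resulting empirical trace $\tr(\calc_n(\calc+\lambda\mathcal{I})^{-1})=\frac1n\sum_i\langle\x_i,(\calc+\lambda\mathcal{I})^{-1}\x_i\rangle_\calh$ by a scalar Bernstein bound.

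One small imprecision worth flagging: in the operator-Bernstein step, the operator norm of the second moment is bounded by $G^2/\lambda$, not $\mathcal{N}(\lambda)/\lambda$; the effective dimension $\mathcal{N}(\lambda)$ enters only through the intrinsic-dimension form of the inequality (replacing the ambient dimension in the logarithm). This does not affect validity, since the hypothesis $n\lambda\geq 19\,G^2\log(n/\delta)$ is calibrated precisely for the variance proxy $G^2/\lambda$.
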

The next two results are essentially taken from \cite{lin2020optimal} and upper bound the true bias and the sample variance terms in our analysis.

\begin{lemma}[Adapted from Lemma 5.2 in \cite{lin2020optimal}]\label{l:true_bias}
Let $0\leq a \leq \zeta$ and $\omega=\calc_\lambda^{-1}\cals_\rho^*f_\calh$. Under assumption \ref{as:3}
    \begin{align*}
        \big\|\call^{-a}(\cals_\rho\omega-f_\calh)\big\|_\rho \leq R\lambda^{\zeta-a}.
    \end{align*}
\end{lemma}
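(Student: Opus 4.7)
The plan is to reduce the claim to a scalar spectral bound on the operator $\lambda\call^{\zeta-a}\call_\lambda^{-1}$ acting on the Hilbert space element $g$ coming from the source condition. The key algebraic step is to rewrite $\cals_\rho\omega - f_\calh$ in terms of the integral operator $\call$ rather than the covariance operator $\calc$. Using the standard push-through identity $\cals_\rho(\cals_\rho^\ast\cals_\rho+\lambda\mathcal{I})^{-1}=(\cals_\rho\cals_\rho^\ast+\lambda\mathcal{I})^{-1}\cals_\rho$, applied to $\omega=\calc_\lambda^{-1}\cals_\rho^\ast f_\calh$, one gets
\begin{align*}
\cals_\rho\omega \;=\; \cals_\rho\calc_\lambda^{-1}\cals_\rho^\ast f_\calh \;=\; \call_\lambda^{-1}\cals_\rho\cals_\rho^\ast f_\calh \;=\; \call\,\call_\lambda^{-1}f_\calh,
\end{align*}
so that $\cals_\rho\omega - f_\calh = (\call\call_\lambda^{-1}-\mathcal{I})f_\calh = -\lambda\call_\lambda^{-1}f_\calh$.

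Next I would invoke Assumption~\ref{as:3}, writing $f_\calh=\call^\zeta g$ with $\|g\|_\rho\le R$. Since $\call$ and $\call_\lambda$ commute, and since $\zeta-a\ge 0$, the element $\call^{-a}(\cals_\rho\omega-f_\calh)=-\lambda\call^{\zeta-a}\call_\lambda^{-1}g$ is well defined in $L^2(\calh,\rho_\calx)$. Taking norms,
\begin{align*}
\bigl\|\call^{-a}(\cals_\rho\omega-f_\calh)\bigr\|_\rho \;\le\; \bigl\|\lambda\,\call^{\zeta-a}\call_\lambda^{-1}\bigr\|_{\mathrm{op}}\,\|g\|_\rho \;\le\; R\,\bigl\|\lambda\,\call^{\zeta-a}\call_\lambda^{-1}\bigr\|_{\mathrm{op}}.
\end{align*}

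It remains to bound the operator norm by $\lambda^{\zeta-a}$. By the spectral theorem applied to the positive self-adjoint operator $\call$, this reduces to showing $\lambda\sigma^{\zeta-a}/(\sigma+\lambda)\le \lambda^{\zeta-a}$ for every $\sigma\ge 0$, equivalently $\sigma^{\zeta-a}\lambda^{1-(\zeta-a)}\le \sigma+\lambda$. Since $0\le \zeta-a\le \zeta<1$, this is exactly the weighted AM--GM (Young) inequality with exponents $\zeta-a$ and $1-(\zeta-a)$. Combining the three steps yields the claimed bound. The only potentially delicate point is the well-definedness of $\call^{-a}$ applied to $f_\calh$, but the source condition guarantees $f_\calh$ lies in the range of $\call^\zeta\subseteq\mathrm{Range}(\call^a)$, so no further care is required; the argument then proceeds cleanly along the lines above.
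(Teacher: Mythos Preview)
Your argument is correct and is precisely the standard proof of this fact: push $\cals_\rho$ through $\calc_\lambda^{-1}$ to get $\cals_\rho\omega-f_\calh=-\lambda\call_\lambda^{-1}f_\calh$, invoke the source condition $f_\calh=\call^\zeta g$, and bound $\|\lambda\call^{\zeta-a}\call_\lambda^{-1}\|$ by $\lambda^{\zeta-a}$ via the scalar inequality $\sigma^{\zeta-a}\lambda^{1-(\zeta-a)}\le\sigma+\lambda$. The paper does not supply its own proof of this lemma (it is quoted directly from \cite{lin2020optimal}), and your derivation matches the argument in that reference.
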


\begin{lemma}[Adapted from Lemma 5.6 in \cite{lin2020optimal}]\label{l:sample_variance} Let $\hat\y = \y/\sqrt{n}$ and $\omega=\calc_\lambda^{-1}\cals_\rho^*f_\calh$. Then under assumptions \ref{as:1}, \ref{as:2}, \ref{as:3}, \ref{as:4} for any $\delta>0$
    \begin{align*}
        \big\|\calc_{n\lambda}^{-1/2}[\cals_n^*\hat\y-\calc_n\omega]\big\|_{\calh} \leq C'\Big(\lambda^\zeta + \frac{1}{n\lambda^{\max(1/2,1-\zeta)}} + \frac{1}{\sqrt{n\lambda^{\gamma}}} \Big)\log(1/\delta)
    \end{align*}
    with probability $1-\delta$. The constant $C'$ depends on $R,G,M,C,c_\gamma,\zeta,$ and $\sigma^2$.
\end{lemma}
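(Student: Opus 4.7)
The plan is to split $\cals_n^*\hat\y - \calc_n\omega$ into a label-noise term, an operator-fluctuation term, and a deterministic bias, and control each one separately. Using $\cals_\rho^* f_\rho = \cals_\rho^* f_\calh$ (since $f_\rho - f_\calh$ is orthogonal to $\overline{\mathrm{range}(\cals_\rho)}$) together with $(\calc+\lambda\mathcal{I})\omega=\cals_\rho^*f_\calh$, we obtain
\begin{align*}
\cals_n^*\hat\y - \calc_n\omega \;=\; \underbrace{\big[\cals_n^*\hat\y-\cals_\rho^*f_\rho\big]}_{N_1} \;+\; \underbrace{(\calc-\calc_n)\omega}_{N_2} \;+\; \underbrace{\lambda\omega}_{B}.
\end{align*}
Applying $\calc_{n\lambda}^{-1/2}$ and replacing it by $2\calc_\lambda^{-1/2}$ via Lemma \ref{l:rudi_15_1}, it then suffices to bound $\|\calc_\lambda^{-1/2}N_1\|_\calh$, $\|\calc_\lambda^{-1/2}N_2\|_\calh$, and $\|\calc_\lambda^{-1/2}B\|_\calh$ each with probability at least $1-\delta/3$.

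The bias is handled deterministically using Assumption \ref{as:3} and the intertwining identity $\cals_\rho^*\call^\zeta=\calc^\zeta\cals_\rho^*$: we write $\calc_\lambda^{-1/2}\lambda\omega=\lambda\calc_\lambda^{-3/2}\calc^\zeta\cals_\rho^*g$, and the scalar bound $\sup_{t\geq 0}\lambda t^\zeta(t+\lambda)^{-3/2}\lesssim\lambda^\zeta$ together with $\|\cals_\rho^*g\|\leq GR$ gives $\|\calc_\lambda^{-1/2}B\|\lesssim R\lambda^\zeta$. For the label-noise term $N_1=\frac1n\sum_i(y_i\x_i-\mathbb{E}[y\x])$, I would apply a Hilbert-space Bernstein inequality (Pinelis) to the i.i.d.\ zero-mean summands $\calc_\lambda^{-1/2}(y_i\x_i-\mathbb{E}[y\x])$, yielding $\sqrt{V_1\log(1/\delta)/n}+M_1\log(1/\delta)/n$ with variance proxy $V_1\leq \mathbb{E}[y^2\langle\x,\calc_\lambda^{-1}\x\rangle]\lesssim \sigma^2\mathcal{N}(\lambda) \leq c_\gamma\sigma^2\lambda^{-\gamma}$ (via Assumptions \ref{as:2} and \ref{as:4}) and $M_1\lesssim GM/\sqrt\lambda$. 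The variance contribution produces the target $1/\sqrt{n\lambda^\gamma}$ piece, while the sup contribution gives $1/(n\sqrt\lambda)$, which is dominated by the $1/(n\lambda^{\max(1/2,1-\zeta)})$ term.

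For the operator-fluctuation term $N_2=\frac1n\sum_i(\calc-\x_i\otimes\x_i)\omega$, the cleanest route is to bound $\|\calc_\lambda^{-1/2}N_2\|\leq\|\calc_\lambda^{-1/2}(\calc-\calc_n)\calc_\lambda^{-1/2}\|_{\mathrm{op}}\cdot\|\calc_\lambda^{1/2}\omega\|$, and then use (i) a matrix Bernstein on the self-adjoint operator $\calc_\lambda^{-1/2}(\calc-\calc_n)\calc_\lambda^{-1/2}$, exploiting the capacity assumption to give $\tilde O(1/(n\lambda) \vee 1/\sqrt{n\lambda^\gamma})$ concentration, together with (ii) the smoothness-derived bound $\|\calc_\lambda^{1/2}\omega\|=\|\calc_\lambda^{-1/2}\calc^\zeta\cals_\rho^*g\|\leq GR\lambda^{\zeta-1/2}$ for $\zeta\leq 1/2$ (improving when $\zeta\geq 1/2$ because $\calc^\zeta\calc_\lambda^{-1/2}$ becomes bounded). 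Multiplying produces the $1/(n\lambda^{1-\zeta})$ rate in the unattainable regime, while the $\max\{1/2,1-\zeta\}$ exponent reflects the transition at $\zeta=1/2$.

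The main obstacle is obtaining the correct exponent in the fluctuation term $N_2$: a direct Bernstein applied to the vectors $\calc_\lambda^{-1/2}(\calc-\x_i\otimes\x_i)\omega$ using only the crude bounds $\|\omega\|\leq GR\lambda^{\zeta-1}$ and $\|\calc_\lambda^{-1/2}\x\|^2\leq G^2/\lambda$ yields a suboptimal $\lambda^{\zeta-1}/\sqrt n$ variance term. Obtaining the sharp $\lambda^{\zeta-1}/n$ rate requires the operator-norm-times-scalar decomposition above, together with precise spectral calculus estimates on $\calc^{a}\omega$ that correctly interpolate through $\zeta=1/2$. A final union bound over the three concentration events produces the claim with constant $C'$ absorbing $R,G,M,C,c_\gamma,\zeta,\sigma^2$.
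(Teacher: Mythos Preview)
The paper does not prove this lemma; it is quoted verbatim as an auxiliary result from \cite{lin2020optimal}, so there is no in-paper argument to compare against. Your decomposition into label noise $N_1$, operator fluctuation $N_2$, and bias $B$, followed by the switch $\calc_{n\lambda}^{-1/2}\to\calc_\lambda^{-1/2}$ via Lemma~\ref{l:rudi_15_1}, is exactly the standard route used in that reference.

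Two points in your sketch need repair. First, the bias estimate as written is off by $\lambda^{-1/2}$: the scalar supremum $\sup_{t\geq 0}\lambda t^\zeta(t+\lambda)^{-3/2}$ is of order $\lambda^{\zeta-1/2}$, not $\lambda^\zeta$. The fix is to absorb $\cals_\rho^*$ as a partial isometry $\calc^{-1/2}\cals_\rho^*$ rather than bounding $\|\cals_\rho^*g\|\leq GR$; then the relevant operator is $\lambda\calc_\lambda^{-3/2}\calc^{\zeta+1/2}$, whose norm is indeed $O(\lambda^\zeta)$, and you multiply by $\|g\|_\rho\leq R$.

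Second, and more substantively, your operator-norm route for $N_2$ does not produce the stated exponent. Matrix Bernstein on $\calc_\lambda^{-1/2}(\calc-\calc_n)\calc_\lambda^{-1/2}$ gives $\tilde O(1/\sqrt{n\lambda}+1/(n\lambda))$, not $1/\sqrt{n\lambda^\gamma}$ (the capacity assumption enters the intrinsic dimension, not the rate itself). Multiplying by either $\|\calc_\lambda^{1/2}\omega\|\lesssim R$ or your looser $GR\lambda^{\zeta-1/2}$ yields $1/\sqrt{n\lambda}$ or $1/(n\lambda^{3/2-\zeta})$, both of which exceed the target in the regime $\lambda\sim\log n/n$. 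The way \cite{lin2020optimal} obtain the sharp $1/(n\lambda^{1-\zeta})$ term is to treat $N_1+N_2$ jointly as a single empirical average of $(y_i-\langle\x_i,\omega\rangle)\x_i$ and apply a vector Bernstein: the sup contribution uses $\|\omega\|_\calh\lesssim R\lambda^{\zeta-1/2}$ together with $\|\calc_\lambda^{-1/2}\x\|\leq G/\sqrt\lambda$ to give $G^2R\lambda^{\zeta-1}/n$, while the variance contribution exploits Assumption~\ref{as:3} (the $C^2\calc$ bound on $(f_\rho-f_\calh)^2\,\x\otimes\x$) and $\|f_\calh-\cals_\rho\omega\|_\rho\leq R\lambda^\zeta$ to land on $\mathcal{N}(\lambda)/n$ plus a residual $\lambda^{2\zeta}/(n\lambda)$ that is absorbed into $\lambda^\zeta$ once $n\lambda\gtrsim 1$.
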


\section{Spectral approximation with Block-Nystr\"om}
\label{a:concentration}
We start this section by proving the spectral approximation guarantee (\ref{eq:approx}) provided by the Block-Nystr\"om approximation of $\A$.
\begin{theorem}[Spectral approximation with  Block-Nystr\"om (Restated Theorem \ref{t:Nys_spectral_approximation})]\label{t:Nys_spectral_approximation_2}
    Let $\lambda>0$ be fixed and $\lambda'>\lambda$. Let $\{p_i\}_{i=1}^n$ denote the $O(1)$-approximate $\lambda'$-ridge leverage scores of $\A$. Let $\hat\A_{[q]}=\frac{1}{q}\sum_{i=1}^{q}{\hat\A_i}$ where $\hat\A_i$'s are iid Nystr\"om approximations of $\A$ sampled independently from $\{p_i\}_{i=1}^n$, each with $\tilde O(d_{\lambda'}(\A)\log(n/\delta))$ landmarks and $\A_\lambda= \A+\lambda\I$.
    Then with probability $1-\delta$,
    \begin{align*}
\|\A_{\lambda}^{1/2}\big(\hat\A_{[q]} +\lambda\I\big)^{-1}\A_{\lambda}^{1/2}\| \leq 16\sqrt{\frac{\lambda'}{\lambda}}
    \end{align*}
    if $q > 64\sqrt{\frac{\lambda'}{\lambda}}\log(n/\delta)$.
\end{theorem}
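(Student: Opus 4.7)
).} The plan follows the bias--variance template sketched in Section~3.1. First I would write each summand as $\hat\A_i = \A^{1/2}\P_i\A^{1/2}$, where $\P_i = \A^{1/2}\S_i^\top(\S_i\A\S_i^\top)^\dagger\S_i\A^{1/2}$ is the random orthogonal projection attached to the $i$th leverage--score sample $\S_i$. Since the landmark sample sizes are $\tilde O(d_{\lambda'}(\A))$, Lemma~\ref{rls-projection} applies and gives $\E[\P_i]\succeq \tfrac12\A(\A+\lambda'\I)^{-1}$, and therefore, using that $\A$ and $(\A+\lambda'\I)^{-1}$ commute,
\[
\E[\hat\A_i] \;\succeq\; \tfrac12\,\A^2(\A+\lambda'\I)^{-1}.
\]

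The second step, which is the main technical step, is to upgrade this into a multiplicative lower bound on $\E[\hat\A_i]+\lambda\I$ in terms of $\A+\lambda\I$. By simultaneous diagonalization of $\A$, it suffices to verify the scalar inequality
\[
\frac{\sigma^2}{2(\sigma+\lambda')} + \lambda \;\geq\; \mu(\sigma+\lambda), \qquad \mu := \tfrac12\sqrt{\lambda/\lambda'},
\]
for every $\sigma\geq 0$. Clearing the denominator and substituting $t=\sqrt{\lambda/\lambda'}\in(0,1]$, this reduces to showing that a quadratic in $\sigma$ is nonnegative; a short computation shows its discriminant is proportional to $(1-t)\big[(1-t)^3-4(2-t)\big]$, which is nonpositive on $[0,1]$ because $(1-t)^3\leq 1<4\leq 4(2-t)$. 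Hence $\E[\hat\A_i]+\lambda\I \succeq \mu\,(\A+\lambda\I)$.

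Third, I would pass to concentration. Define the iid conjugated matrices
\[
\Q_i \;:=\; \A_\lambda^{-1/2}\,(\hat\A_i+\lambda\I)\,\A_\lambda^{-1/2}.
\]
They satisfy $0\preceq \Q_i\preceq \I$, with the upper bound coming from $\hat\A_i=\A^{1/2}\P_i\A^{1/2}\preceq \A$ (because $\P_i$ is an orthogonal projection), while the previous step gives $\E[\Q_i]\succeq \mu\I$. Tropp's matrix Chernoff lower tail then yields, for $\epsilon=7/8$,
\[
\Pr\!\left[\lambda_{\min}\!\Big(\tfrac1q\sum_{i=1}^{q}\Q_i\Big)\leq \tfrac{\mu}{8}\right] \;\leq\; n\exp\!\Big(-\tfrac{49}{128}\,q\mu\Big),
\]
which is at most $\delta$ provided $q>64\sqrt{\lambda'/\lambda}\log(n/\delta)$, since $1/\mu=2\sqrt{\lambda'/\lambda}$.

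Finally, because $\hat\A_{[q]}+\lambda\I = \A_\lambda^{1/2}\big(\tfrac1q\sum_i\Q_i\big)\A_\lambda^{1/2}$, inverting gives
\[
\big\|\A_\lambda^{1/2}(\hat\A_{[q]}+\lambda\I)^{-1}\A_\lambda^{1/2}\big\| \;=\; \lambda_{\min}\!\Big(\tfrac1q\sum_i\Q_i\Big)^{-1} \;\leq\; \tfrac{8}{\mu} \;=\; 16\sqrt{\lambda'/\lambda},
\]
as desired. The conceptual obstacle is the second step: Lemma~\ref{rls-projection} only controls $\E[\P_i]$, not $\P_i$ itself, so the improved $\sqrt{\lambda'/\lambda}$ scaling is unavailable for a single Nystr\"om sample and only emerges after averaging; the scalar inequality then has to be sharp enough (with no loss in the constant $\tfrac12$) to propagate through the matrix Chernoff bound and yield the exact threshold $q>64\sqrt{\lambda'/\lambda}\log(n/\delta)$. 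The concentration step itself is routine because the summands $\Q_i$ are uniformly bounded by $\I$.
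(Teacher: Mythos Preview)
Your proof is correct and follows essentially the same route as the paper: conjugate by $\A_\lambda^{-1/2}$, use Lemma~\ref{rls-projection} to lower-bound $\lambda_{\min}(\E[\Q_i])$ via a scalar computation, and then apply matrix Chernoff. The only cosmetic difference is that your discriminant argument yields the sharper constant $\mu=\tfrac12\sqrt{\lambda/\lambda'}$ (the paper gets $\tfrac18\sqrt{\lambda/\lambda'}$ by first relaxing $\tfrac12\A^2\A_{\lambda'}^{-1}+\lambda\I$ to $\tfrac12(\A^2\A_{\lambda'}^{-1}+\lambda\I)$ and then minimizing directly at $\sigma=\sqrt{\lambda\lambda'}$), which you then offset by taking $\epsilon=7/8$ instead of the paper's $\epsilon=1/2$ to land on the same final constants.
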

\begin{proof}
    Define random matrices $\Z_i = \A_{\lambda}^{-1/2}(\hat\A_i +\lambda\I)\A_{\lambda}^{-1/2}$, where  $\hat\A_{i} = \A\S_i^\top(\S_i\A\S_i^\top)^\dagger\S_i\A $ and $\A_\lambda=\A+\lambda\I$. Here each $\S_i$ is a sub-sampling matrix corresponding to a set drawn independently from $\{1,2,\cdots,n\}$ with probability distribution $\{p_i\}_{i=1}^n$. As $\hat\A_i \preceq \A$ for all $i$, we have $\Z_i \preceq \I$ for all $i$. Therefore $\|\Z_i\| \leq 1:=R$. Let $\mu_{\min}$ denote the minimum eigenvalue of $\E\Z_i$. We have,
    \begin{align*}
        \mu_{\min} &= \lambda_{\min}(\A_{\lambda}^{-1/2}\E[\hat\A_i+\lambda\I]\A_{\lambda}^{-1/2})\\
&\geq\frac{1}{2}\lambda_{\min}\Big(\A_{\lambda}^{-1/2}\big(\A\A_{\lambda'}^{-1}\A+\lambda\I\big)\A_{\lambda}^{-1/2}\Big)\\
&=\frac{1}{2}\min_{j}\Big(\frac{1}{\lambda_j+\lambda}\cdot\big(\frac{\lambda_j^2}{\lambda_j+\lambda'}+\lambda\big)\Big)\\
&=\frac{1}{2}\min_{j}\Big(\frac{\lambda_j^2+\lambda\lambda_j+\lambda\lambda'}{\lambda_j^2+\lambda\lambda_j +\lambda'\lambda_j +\lambda\lambda'}\Big)
    \end{align*}
   The second inequality holds due to Lemma \ref{rls-projection} as $\E\hat\A_{i} \succeq \frac{1}{2}\A(\A+\lambda'\I)^{-1}\A$. Considering $\lambda_j$ as a variable $x$, a routine calculus exercise shows that the minimum is attained when $x = \sqrt{\lambda\lambda'}$. Therefore, we get,
    \begin{align*}
        \mu_{\min} \geq \frac{1}{2}\frac{\lambda\lambda' + \lambda\sqrt{\lambda\lambda'} + \lambda\lambda'}{\lambda\lambda' + \lambda\sqrt{\lambda\lambda'} + \lambda'\sqrt{\lambda\lambda'} + \lambda\lambda'} \geq \frac{1}{2}\frac{\lambda\lambda'}{4\lambda'\sqrt{\lambda\lambda'}} = \frac{1}{8}\sqrt{\frac{\lambda}{\lambda'}}.
    \end{align*}
    Applying matrix Chernoff concentration inequality we get
    \begin{align*}
        \Pr\Big(\lambda_{\min}\big(\frac{1}{q}\sum_{i=1}^{q}{\Z_i}\big) \leq (1-\epsilon)\mu_{\min}\Big) \leq n\cdot\exp\Big(-\frac{\epsilon^2q\mu_{\min}}{2R}\Big).
    \end{align*}
    Substitute $R=1$ and $\epsilon=\frac{1}{2}$, we get if $q>64\sqrt{\frac{\lambda'}{\lambda}}\log(n/\delta)$, then with
    probability $1-\delta$ $\lambda_{\min}\big(\frac{1}{q}\sum_{i=1}^{q}{\Z_i}\big) \geq \frac{\mu_{\min}}{2}.$
    This gives us
\vspace{-2mm}    \begin{align*}
        \frac{1}{16}\sqrt{\frac{\lambda}{\lambda'}}\A_{\lambda} \preceq \frac{1}{q}\sum_{i=1}^{q}{\hat\A_{i}+\lambda\I} \preceq \A_{\lambda},
    \end{align*}
and finishes the proof.
\end{proof}
The following result is needed in the proof of our recursive solver for inverting Block-Nystr\"om.
\begin{theorem}[Matrix concentration for Block-Nystr\"om]\label{t:exp_nys_con}
Let $\lambda'>0$ and for $1 \leq i \leq q$, let $\hat\A_i$ be Nystr\"om approximations of $\A$ sampled from $O(1)$-approximate $\lambda'$-ridge leverage scores of $\A$, each with $\tilde O(d_{\lambda'}(\A)\log(n/\delta))$ landmarks. Then for any $\lambda>0$ satisfying $\lambda \leq \lambda'$ and for any $0<\theta<1$, $q>\frac{200\sqrt{\lambda'/\lambda}\log(2n/\delta)}{\theta^2}$, with probability $1-\delta$ we get
\begin{align*}
   (1-\theta/2)(\E\hat\A+\lambda\I)\preceq   \frac{1}{q}\sum_{i=1}^{q}{\hat\A_i+\lambda\I} \preceq (1+\theta/2)(\E\hat\A +\lambda\I).
\end{align*}
\end{theorem}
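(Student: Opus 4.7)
The plan is to apply a two-sided matrix Chernoff bound to the normalized sum $\frac{1}{q}\sum_{i=1}^q \Z_i$, where
$$\Z_i := (\E\hat\A+\lambda\I)^{-1/2}(\hat\A_i+\lambda\I)(\E\hat\A+\lambda\I)^{-1/2},$$
so that by construction $\E[\Z_i]=\I$, i.e.\ $\mu_{\min}=\mu_{\max}=1$. Everything then reduces to controlling $R:=\max_i\|\Z_i\|$ deterministically.

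For the operator norm bound, I would combine two ingredients. First, each Nystr\"om approximation satisfies $\hat\A_i=\A^{1/2}\P_i\A^{1/2}$ for an orthogonal projection $\P_i$, giving the deterministic upper bound $\hat\A_i+\lambda\I\preceq \A+\lambda\I$. Second, Lemma \ref{rls-projection} yields $\E\hat\A_i\succeq \tfrac{1}{2}\A(\A+\lambda'\I)^{-1}\A$, and the scalar calculus step from the proof of Theorem~\ref{t:Nys_spectral_approximation_2} (minimizing $(x^2+\lambda x+\lambda\lambda')/(x^2+\lambda x+\lambda' x+\lambda\lambda')$ at $x=\sqrt{\lambda\lambda'}$) gives
$$\E\hat\A+\lambda\I \;\succeq\; \tfrac{1}{8}\sqrt{\lambda/\lambda'}\,(\A+\lambda\I),\qquad\text{hence}\qquad (\E\hat\A+\lambda\I)^{-1}\preceq 8\sqrt{\lambda'/\lambda}\,(\A+\lambda\I)^{-1}.$$
Chaining the two inequalities, $\Z_i\preceq (\E\hat\A+\lambda\I)^{-1/2}(\A+\lambda\I)(\E\hat\A+\lambda\I)^{-1/2}\preceq 8\sqrt{\lambda'/\lambda}\,\I$, so $R\leq 8\sqrt{\lambda'/\lambda}$.

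With $\E\Z_i=\I$ and $\Z_i\succeq 0$, $\|\Z_i\|\leq R$, the standard two-sided matrix Chernoff inequality gives
$$\Pr\!\Big[\lambda_{\min}\!\big(\tfrac{1}{q}\textstyle\sum_i\Z_i\big)\leq 1-\tfrac{\theta}{2}\Big]\leq n\,e^{-\theta^2 q/(8R)},\quad \Pr\!\Big[\lambda_{\max}\!\big(\tfrac{1}{q}\textstyle\sum_i\Z_i\big)\geq 1+\tfrac{\theta}{2}\Big]\leq n\,e^{-\theta^2 q/(12R)}.$$
Plugging $R\leq 8\sqrt{\lambda'/\lambda}$ and the hypothesis $q>200\sqrt{\lambda'/\lambda}\log(2n/\delta)/\theta^2$, each tail is bounded by $\delta/2$, and a union bound yields $(1-\theta/2)\I\preceq \tfrac{1}{q}\sum_i\Z_i\preceq (1+\theta/2)\I$ with probability at least $1-\delta$. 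Conjugating by $(\E\hat\A+\lambda\I)^{1/2}$ recovers the claimed Loewner sandwich.

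The only subtle step is choosing the correct rescaling: conjugating by $(\A+\lambda\I)^{1/2}$, as in Theorem \ref{t:Nys_spectral_approximation_2}, would only produce the one-sided lower bound and would not give the symmetric $(1\pm\theta/2)$ concentration around $\E\hat\A+\lambda\I$. Centering by $\E\hat\A+\lambda\I$ is what equalizes $\mu_{\min}=\mu_{\max}=1$; after that the argument is a routine combination of Lemma~\ref{rls-projection} and matrix Chernoff.
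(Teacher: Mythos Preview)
Your proof is correct and follows the same strategy as the paper: define $\Z_i=(\E\hat\A+\lambda\I)^{-1/2}(\hat\A_i+\lambda\I)(\E\hat\A+\lambda\I)^{-1/2}$, bound $\|\Z_i\|$ by $O(\sqrt{\lambda'/\lambda})$, and apply the two-sided matrix Chernoff inequality. Your bound on $R$ is actually cleaner than the paper's---you reuse the scalar minimization from Theorem~\ref{t:Nys_spectral_approximation_2} to get $R\leq 8\sqrt{\lambda'/\lambda}$, whereas the paper redoes a separate case analysis on the eigenvalue ratio and arrives at the looser $R<20\sqrt{\lambda'/\lambda}$; both constants are comfortably absorbed by the hypothesis $q>200\sqrt{\lambda'/\lambda}\log(2n/\delta)/\theta^2$.
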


\begin{proof}
First note that due to Lemma \ref{rls-projection},
$\big(\E\hat\A_{i} +\lambda\I\big)^{-1} \preceq 2\big(\A\A_{\lambda'}^{-1}\A+\lambda\I\big)^{-1}$. Define $\Z_i = (\E\hat\A+\lambda\I)^{-1/2}(\hat\A_i+\lambda\I)(\E\hat\A+\lambda\I)^{-1/2}$. We have $\E[\Z_i] =\I$ and therefore $\mu_{\max}(\E\Z_i) = \mu_{\min}(\E\Z_i)=1$, where $\mu_{\max}$ and $\mu_{\min}$ denotes the maximum and minimum eigenvalues of $\E\Z_i$. Now we upper bound $\|\Z_i\|$. Let $\A_\lambda=\A+\lambda\I$.
\begin{align*}
    \|\Z_i\|
    &\leq \|(\E\hat\A+\lambda\I)^{-1/2}(\A+\lambda\I)(\E\hat\A+\lambda\I)^{-1/2}\|\\
&\leq2\lambda_{\max}\Big(\A_\lambda^{1/2}(\E\hat\A+\lambda\I)^{-1}\A_\lambda^{1/2}\Big)\\
&\leq2\max_{i}\Big(\frac{\lambda_i+\lambda}{\frac{\lambda_i^2}{\lambda_i+\lambda'}+\lambda}\Big)\\
&=2\max_{i}\Big(\frac{1+\frac{\lambda}{\lambda_i}}{\frac{\lambda_i}{\lambda_i+\lambda'} + \frac{\lambda}{\lambda_i}}\Big) :=\xi_{i}.
\end{align*}
Note that if $\lambda_i\geq\lambda'$, then $\xi_i <3$. In case $\lambda_i < \lambda'$, we write $\lambda_i =c_i\lambda_n$ for some $c_i \geq 1$. We have
\begin{align*}
    \xi_i &<  \frac{1+\frac{\lambda}{\lambda_i}}{\frac{\lambda_i}{2\lambda'} + \frac{\lambda}{\lambda_i}}= \frac{1+\frac{\lambda}{c_i\lambda_n}}{\frac{c_i\lambda_n}{2\lambda'} + \frac{\lambda}{c_i\lambda_n}}
\end{align*}
and consider the following two cases:
\begin{enumerate}
    \item \emph{Case 1:} $\frac{c_i\lambda_n}{2\lambda'} \geq \frac{\lambda}{c_i\lambda_n}$ or $c_i \geq \frac{\sqrt{2\lambda\lambda'}}{\lambda_n}$. Then
    \begin{align*}
        \xi_i &< \frac{1+\frac{\lambda}{c_i\lambda_n}}{\frac{c_i\lambda_n}{2\lambda'}}< 1+ \frac{2\lambda'}{c_i\lambda_n}\leq 1+\sqrt{\frac{2\lambda'}{\lambda}}.
    \end{align*}
    \item \emph{Case 2:}  $\frac{c_i\lambda_n}{2\lambda'} < \frac{\lambda}{c_i\lambda_n}$ or $c_i < \frac{\sqrt{2\lambda\lambda'}}{\lambda_n}$. Then
    \begin{align*}
        \xi_i &<  \frac{1+\frac{\lambda}{c_i\lambda_n}}{\frac{\lambda}{c_i\lambda_n}}< 1+ \sqrt{\frac{2\lambda'}{\lambda}}.
    \end{align*}
\end{enumerate}
Therefore  $\|\Z_i\| < 2\max\Biggr\{3, 1+ \sqrt{\frac{2\lambda'}{\lambda}}\Biggr\} < 20\sqrt{\frac{\lambda'}{\lambda}}:=R$. Using matrix Chernoff inequality, for any $\theta >0$
\begin{align*}
    \Pr\left(\lambda_{\min}\left(\frac{1}{q}\sum_{i=1}^{q}{\Z_i}\right) \leq 1-\theta/2\right) \leq n.\exp\left(-\frac{q\theta^2}{8R}\right),
\end{align*}
and,
\begin{align*}
     \Pr\left(\lambda_{\max}\left(\frac{1}{q}\sum_{i=1}^{q}{\Z_i}\right) \geq 1+\theta/2\right) \leq n.\exp\left(-\frac{q\theta^2}{(8+2\theta)R}\right).
\end{align*}
For any $\theta<1$ and $q>200\sqrt{\lambda'/\lambda}\log(2n/\delta)/\theta^2$, with probability $1-\delta$
\begin{align*}
    -\theta/2\I\preceq \frac{1}{q}\sum_{i=1}^{q}{\Z_i} -\I \preceq \theta/2\I.
\end{align*}
and finally with probability $1-\delta$
\begin{align*}
   (1-\theta/2)(\E\hat\A+\lambda\I)\preceq   \frac{1}{q}\sum_{i=1}^{q}{\hat\A_{i}+\lambda\I} \preceq (1+\theta/2)(\E\hat\A +\lambda\I).
\end{align*}
\end{proof}
We use the following result from \cite{dmy24} as a black box in our recursive preconditioning scheme for solving the Block-Nystr\"om linear system approximately. The following lemma provides an algorithm to solve the resulting linear system at the final depth of our recursive scheme.
    \begin{lemma}[Based on Lemma 4.3 from \cite{dmy24}]\label{l:dmy_Nystrom_solve}
Given matrix $\A \succ 0$, let $\hat\A = \C\W^{\dagger}\C^\top$ be its Nystr\"om approximation where $\C=\A\S^\top \in \R^{n\times m}$ and $\W=\S\A\S^\top$. For $\psi>0$, denote $\M=\hat\A + \psi\I$ as the Nystr\"om preconditioner and assume that $\M\approx_{O(\kappa)} \A+\psi\I$. Given vector $\v \in \R^n$ and $\epsilon>0$, there exists an algorithm that provides $\hat\w$: an approximate solution to the linear system $\M\w=\v$ satisfying
\begin{align*}
    \|\hat\w-\M^{-1}\v\|_{\M} \leq \epsilon\|\M^{-1}\v\|_{\M}.
\end{align*}
in time $O((nm+m^3)\log(m/\delta) + (nm+m^2)\log(\kappa/\epsilon))$, where $\kappa$ is the condition number of $\A$.
    
\end{lemma}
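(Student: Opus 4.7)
My plan is to apply the Woodbury identity to reduce the $n\times n$ system $\M\w = \v$ to an $m\times m$ system, and then solve that reduced system via preconditioned conjugate gradient (PCG) with a sketch-based preconditioner.

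First, applying Woodbury to $\M = \C\W^{\dagger}\C^{\top} + \psi\I$ yields
\begin{align*}
    \M^{-1}\v \;=\; \psi^{-1}\bigl(\v - \C(\psi\W + \C^{\top}\C)^{-1}\C^{\top}\v\bigr),
\end{align*}
so it suffices to approximate $\z^{*}:=(\psi\W+\C^{\top}\C)^{-1}\C^{\top}\v$ (in the $(\psi\W+\C^{\top}\C)$-norm) and then assemble the final answer with one additional multiplication by $\C$. The obstacle at this stage is that assembling $\C^{\top}\C$ exactly would already cost $\Omega(nm^{2})$, exceeding the stated preprocessing budget of $O((nm+m^3)\log(m/\delta))$.

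To sidestep this, I would build the preconditioner $\tilde{\B} := \psi\W + (\S\C)^{\top}(\S\C)$, where $\S \in \R^{O(m\log(m/\delta))\times n}$ is a fast oblivious subspace embedding (e.g.\ an SRHT). Computing $\S\C$ costs $\tilde O(nm)$ via fast Hadamard transforms, and then forming and Cholesky-factorizing $\tilde{\B}$ costs $O(m^{3})$, giving the $O((nm+m^{3})\log(m/\delta))$ preprocessing cost. Standard subspace-embedding guarantees imply $(\S\C)^{\top}(\S\C)\approx_{1/2}\C^{\top}\C$ with probability $1-\delta$ at this sketch size, so $\tilde{\B}$ is a constant-factor spectral equivalent of $\psi\W+\C^{\top}\C$ and hence a valid preconditioner yielding geometric PCG convergence. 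Each PCG iteration then applies $\psi\W+\C^{\top}\C$ once (at cost $O(nm+m^{2})$, by successive multiplication with $\C^{\top}$ and then $\C$) and $\tilde{\B}^{-1}$ once (at cost $O(m^{2})$ using the stored Cholesky factor).

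The last step is to translate the inner convergence into the stated outer $\M$-norm error, and this is the main obstacle. Since $\psi\I\preceq\M$ and $\M \approx_{O(\kappa)} \A+\psi\I$, combined with $\hat\A\preceq\A$, one can bound the residual amplification through Woodbury by $\poly(\kappa)$. Consequently, running $O(\log(\kappa/\epsilon))$ PCG iterations drives the inner residual below $\epsilon/\poly(\kappa)$, which Woodbury then turns into an $\epsilon$-accurate solution in the $\M$-norm; this yields the solve cost $O((nm+m^{2})\log(\kappa/\epsilon))$. Carefully relating the $(\psi\W+\C^{\top}\C)$-norm contraction back to the $\M$-norm, without losing more than a polynomial factor in $\kappa$, is the delicate piece that pins down the $\log(\kappa/\epsilon)$ factor in the stated iteration count.
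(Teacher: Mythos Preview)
The paper does not prove this lemma; it is imported as a black box from \cite{dmy24} (see the sentence preceding the lemma in Appendix~\ref{a:concentration}). The only information the present paper gives about the argument behind it is the description in Section~3.2: Woodbury reduces $(\hat\A+\lambda\I)^{-1}$ to inverting $\C^\top\C+\lambda\W$, and ``using an iterative solver for applying the inverse $(\C^\top\C+\lambda\W)^{-1}$'' brings the preprocessing cost down to $\tilde O(nm+m^3)$.

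Your proposal is exactly this strategy, and the ingredients are correct: Woodbury is valid because $\W=\S\A\S^\top\succ 0$ when $\A\succ 0$; a subspace embedding $\S\C$ of size $\tilde O(m)\times n$ gives $(\S\C)^\top(\S\C)\approx_{O(1)}\C^\top\C$ and hence a constant-factor preconditioner $\tilde\B$ for $\psi\W+\C^\top\C$; forming and factoring $\tilde\B$ costs $\tilde O(nm+m^3)$; and each PCG step costs $O(nm+m^2)$ since $\C^\top\C$ is applied as two matrix--vector products rather than formed explicitly. This matches the stated complexity.

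The only place your sketch is genuinely incomplete is the norm translation, which you flag yourself. A cleaner way to close it than a generic $\poly(\kappa)$ bound is to track the residual $\r=\B\hat\z-\C^\top\v$ and observe that $\M\hat\w-\v=-\psi^{-1}\C\W^{-1}\r$; then bound $\|\hat\w-\w^*\|_{\M}$ via $\M^{-1}\preceq\psi^{-1}\I$ and $\W^{-1}\C^\top\C\W^{-1}\preceq\|\W^{-1}\|\cdot\W^{-1}\C^\top\C$, while lower-bounding $\|\w^*\|_{\M}$ using $\M\succeq\psi\I$. Each of these ratios is controlled by $\kappa(\A)$ (since $\W$ and $\C^\top\C$ are both sandwiched between $\lambda_n(\A)$- and $\lambda_1(\A)$-scaled versions of $\S\A\S^\top$ or $\S\A^2\S^\top$), so the inner tolerance needs to be $\epsilon/\poly(\kappa)$, yielding the $\log(\kappa/\epsilon)$ iteration count. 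With that addition your argument is complete and coincides with the approach the paper attributes to \cite{dmy24}.
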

The following lemma is a computational result we use in our proof of Theorem \ref{t:linear_system_solve_2}.
\begin{lemma}\label{l:c_bound}
For any $x,\phi,\theta >0$, $c>1$ and $p=c(1+\theta)^2$
\begin{align*}
(4(1+\theta)^2)^{\frac{\log(x)}{\log(c)}} &\leq x^{\phi/2}\\
\log(20p^2)^{\frac{\log(x)}{\log(c)}} &\leq (x)^{\phi/2}
    \end{align*}
\end{lemma}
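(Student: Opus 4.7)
The plan is to take logarithms of both sides of each inequality to reduce them to elementary scalar bounds on $c,\phi,\theta$. Both inequalities share the common template
\begin{equation*}
b^{\log(x)/\log(c)} \ \leq\ x^{\phi/2},
\end{equation*}
with $b = 4(1+\theta)^2$ in the first and $b = \log(20p^2) = \log(20c^2(1+\theta)^4)$ in the second. Taking $\log$ of both sides, the inequality becomes $\log(b)\cdot\log(x)/\log(c) \leq (\phi/2)\log(x)$. For the interesting case $x > 1$, dividing by $\log(x)>0$ reduces the claim to the parameter-free condition $b \leq c^{\phi/2}$. The degenerate cases $x=1$ (trivial) and $0<x<1$ are dispatched by a short sign argument: since $c>1$ forces $\log(x)/\log(c)\leq 0$, and since $b\geq 1$ in both cases of interest, the LHS is at most $1$ and the inequality is direct.

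This reduces the whole lemma to verifying the two scalar bounds (i) $4(1+\theta)^2 \leq c^{\phi/2}$ and (ii) $\log(20c^2(1+\theta)^4) \leq c^{\phi/2}$. Statement (ii) is exactly the hypothesis on $c$ inherited from the calling context of Theorem \ref{t:linear_system_solve}, namely $c \geq \log^{2/\phi}(20c^2(1+\theta)^4)$, raised to the $\phi/2$ power. Statement (i) follows from the same hypothesis together with $c\geq 2$: once $c$ is chosen large enough to satisfy the logarithmic constraint in (ii), a direct comparison of growth rates (polynomial-in-$(1+\theta)$ versus exponential-in-$c^{\phi/2}$) shows that $c^{\phi/2}$ simultaneously dominates the $4(1+\theta)^2$ term.

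The main (and really only) obstacle is statement (i), since the logarithmic constraint on $c$ controls $c^{\phi/2}$ against $\log(20c^2(1+\theta)^4)$ but not a priori against the larger quantity $4(1+\theta)^2$. One handles this by working in the parameter regime where the lemma is actually invoked --- namely the setting of the Remark following Theorem \ref{t:linear_system_solve}, where $\theta$ is fixed (e.g.\ $\theta=1/2$) and $c = O(\log n)$ is chosen sufficiently large, so that both (i) and (ii) hold simultaneously. Once both scalar bounds are in hand, reversing the logarithm step gives the two displayed inequalities of the lemma.
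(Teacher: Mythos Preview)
Your reduction is exactly the paper's: the paper rewrites $b^{\log(x)/\log(c)}=x^{\log(b)/\log(c)}$ and compares exponents, which is the same as your ``take logs and divide by $\log x$'' step, arriving at the scalar condition $b\leq c^{\phi/2}$, i.e.\ $c\geq b^{2/\phi}$, for each of the two values of $b$.

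Where you diverge is in what you try to \emph{do} with those scalar conditions. The paper's lemma is actually an ``if and only if'' statement: immediately after the displayed inequalities (just outside the \texttt{lemma} environment, hence missing from the excerpt you were given) the paper adds the clause ``if and only if $c>\max\big((4(1+\theta)^2)^{2/\phi},\,\log^{2/\phi}(20p^2),\,1\big)$.'' So conditions (i) and (ii) are not things to be \emph{derived}; they are precisely the hypothesis the lemma supplies, and the paper's proof just records the equivalence. Your attempt to get (i) out of (ii) alone does not work in general: for large $\theta$, $4(1+\theta)^2$ grows like $\theta^2$ while $\log(20c^2(1+\theta)^4)$ grows like $\log\theta$, so (ii) can hold while (i) fails. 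Retreating to ``the parameter regime of the Remark'' is not a proof of the lemma --- and it is unnecessary once you have the missing hypothesis. (Indeed the extended Theorem~\ref{t:linear_system_solve_2}, where the lemma is invoked, explicitly assumes $c\geq\max\{(4(1+\theta)^2)^{2/\phi},\log^{2/\phi}(20p^2)\}$.)

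One small technical slip: your $0<x<1$ argument claims ``LHS $\leq 1$ so the inequality is direct,'' but $x^{\phi/2}<1$ as well, so this does not close the case. In fact for $x<1$ dividing by $\log x<0$ \emph{reverses} the inequality, so the ``if and only if'' direction actually needs $x\geq 1$ --- which is fine, since in the application $x=\sqrt{\lambda'/\tilde\lambda}\geq 1$; the paper's own proof implicitly works under $x\geq 1$ as well.
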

if and only if $c > \max\big((4(1+\theta)^2)^{2/\phi},\log^{2/\phi}(20p^2),1\big)$.
\begin{proof}
    \begin{align*}
(4(1+\theta)^2)^{\frac{\log(x)}{\log(c)}} = x^{\frac{\log(4(1+\theta)^2)}{\log(c)}} \leq x^{\phi/2}
    \end{align*}
    if and only if $c \geq (4(1+\theta)^2)^{2/\phi}.$
    Furthermore for $p=c(1+\theta)^2$,
    \begin{align*}
\log(20p^2)^{\frac{\log(x)}{\log(c)}} \leq (x)^{\phi/2}
    \end{align*}
    if and only if $c \geq \log^{2/\phi}(20p^2)$.
\end{proof}
The following theorem is an extended version of Theorem \ref{t:linear_system_solve}. The statement of Theorem \ref{t:linear_system_solve} can be recovered by setting $\tilde\lambda=\lambda$. We provide a recursive preconditioning scheme for solving the Block-Nystr\"om linear system having near-linear dependence on the number of blocks $q$.
\begin{theorem}[Recursive solver for Block-Nystr\"om ]\label{t:linear_system_solve_2}
Given $\lambda,\tilde\lambda$ and $\lambda'$ satisfying $\lambda \leq \tilde\lambda \leq \lambda'$, for $1\leq i \leq q$ let $\hat\A_i$  are Nystr\"om approximations of $\A$ sampled from $O(1)$-approximate $\lambda'$-ridge leverage scores of $\A$, each with $m=O(d_{\lambda'}(\A)\log(n/\delta))$ landmarks and let $q = O\big(\sqrt{\frac{\lambda'}{\tilde\lambda}}\log(n/\delta)/\theta^2\big)$. Denote $\hat\A_{[q]} =  \frac{1}{q}\sum_{i=1}^{q}{\hat\A_{i}}$. There exists an event with probability $1-\delta$ for any $\delta>0$ such that for any $\v \in \R^n$, $\epsilon>0$ and $0<\phi<1$ we can find $\u \in \R^n$ such that
\begin{align*}
    \|\u-\big(\hat\A_{[q]} +\lambda\I\big)^{-1}\v\|_{\hat\A_{[q]} +\lambda\I}^2 \leq \epsilon\cdot \|\big(\hat\A_{[q]} +\lambda\I\big)^{-1}\v\|_{\hat\A_{[q]} +\lambda\I}^2
\end{align*}
in time $O\Big(\sqrt{\tilde\lambda/\lambda}\cdot nmq^{1+\phi}\cdot cs_1s_2k\log(1/\epsilon)\Big)$, where $k=\max\Big\{1,\Big\lceil\log\big(\sqrt{\lambda'/\tilde\lambda}\big)/\log(c)\Big\rceil\Big\}$, $s_1 = \max\big(1,\log\Big(\tilde\lambda/\lambda\Big)\big)$, $s_2 = \log(m\kappa/\delta)\log^{3/2}(n/\delta)\log(20c^2(1+\theta)^4)$, and $c$ is chosen so that \\ $c \geq \lceil \max\Big((4(1+\theta)^2)^{2/\phi}, \log^{2/\phi}(20c^2(1+\theta)^4\Big) \Big\rceil.$
    \end{theorem}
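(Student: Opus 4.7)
The plan is to implement the recursive preconditioning scheme outlined in the sketch following Theorem~\ref{t:linear_system_solve}, instantiated for the extended range of parameters $\lambda \leq \tilde\lambda \leq \lambda'$. First, I introduce a hierarchy of Block-Nystr\"om matrices of geometrically decreasing size and increasing regularization: for $j = 0, 1, \ldots, k$, define $q_j = \lceil q / c^j \rceil$ and $\M_j = \hat\A_{[q_j]} + c^{2j}\lambda\I$, where $\hat\A_{[q_j]}$ is the average of the first $q_j$ of the original $q$ Nystr\"om samples $\hat\A_i$. The top of the hierarchy is $\M_0 = \hat\A_{[q]} + \lambda\I$, the matrix whose inverse we need. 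The depth $k = \max\{1, \lceil\log(\sqrt{\lambda'/\tilde\lambda})/\log(c)\rceil\}$ is chosen so that at the bottom we reach $q_k = O(1)$, at which point a single Nystr\"om preconditioner suffices.

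Next, I establish the central preconditioning bound, namely $\kappa\bigl(\M_j^{-1/2}\M_{j+1}\M_j^{-1/2}\bigr) \leq c^2(1+\theta)^4$ for all $0\leq j\leq k-1$. This is done in two steps as in the sketch. The concentration step uses Theorem~\ref{t:exp_nys_con}: since $q_j = \Omega(\sqrt{\lambda'/(c^{2j}\lambda)}\log(n/\delta)/\theta^2)$ by construction, we have $(1-\theta/2)(\E\hat\A + c^{2j}\lambda\I) \preceq \hat\A_{[q_j]} + c^{2j}\lambda\I \preceq (1+\theta/2)(\E\hat\A + c^{2j}\lambda\I)$ with high probability, and a union bound over $j\leq k$ contributes only $\log$ factors to the failure probability. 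The deterministic step observes that $\E\hat\A + c^{2(j+1)}\lambda\I \preceq c^2(\E\hat\A + c^{2j}\lambda\I)$, and conversely that $\E\hat\A + c^{2j}\lambda\I \preceq \E\hat\A + c^{2(j+1)}\lambda\I$, so the expected matrices differ by condition number $c^2$. Combining these yields the claimed bound after collapsing $(1\pm\theta/2)$ factors into $(1+\theta)^2$ on each side.

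With this bound in hand, I recursively invoke Lemma~\ref{l:pcg} with $\A\leftarrow\M_j$ and $\B\leftarrow\M_{j+1}$: each application of $\M_j^{-1}$ requires $O(c(1+\theta)^2\log(1/\epsilon'))$ applications of an approximate $\M_{j+1}$-solver, plus matrix-vector products with $\M_j$ (which cost $O(nm q_j)$ via the decomposition of each $\hat\A_i$). At the base level, $\M_k$ is spectrally close to $\hat\A_1 + c^{2k}\lambda\I$ (up to constants depending on $\theta$), and since $c^{2k}\lambda = \Theta(\lambda')$, the preconditioner $\hat\A_1 + c^{2k}\lambda\I$ has condition number $O(1)$ against $\M_k$; a single-Nystr\"om solve via Lemma~\ref{l:dmy_Nystrom_solve} (Woodbury plus CG) then handles the base case in time $\tilde O(nm + m^3)$. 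The top-level PCG between $\M_0$ and its preconditioner $\M_1^{-1}$ requires $O(\sqrt{\tilde\lambda/\lambda})$ outer iterations to reach $\epsilon$-accuracy (since the desired regularizer is $\lambda$, not $\tilde\lambda$), explaining the $\sqrt{\tilde\lambda/\lambda}$ factor in the runtime.

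The main obstacle, and the delicate part of the analysis, is bounding the multiplicative blow-up in iteration counts and error tolerances as one descends the recursion. Each level contributes a factor of $O(c)$ in iterations from PCG, plus a factor of $\log(20c^2(1+\theta)^4)$ coming from driving down the per-level accuracy $\epsilon'$ to offset the next level's condition-number-dependent error amplification. Over depth $k$ this naively gives a factor of $c^k\cdot\log^k(20c^2(1+\theta)^4)$. Since $c^k = \sqrt{\lambda'/\tilde\lambda} = \Theta(q\theta^2/\log(n/\delta))$, the first factor is already linear in $q$; the goal is to absorb the second into $q^\phi$. This is exactly where Lemma~\ref{l:c_bound} enters: the hypothesis $c \geq \lceil\max((4(1+\theta)^2)^{2/\phi}, \log^{2/\phi}(20c^2(1+\theta)^4))\rceil$ ensures $\log(20c^2(1+\theta)^4)^k \leq (c^k)^{\phi/2}$, and likewise for $(4(1+\theta)^2)^k$, so the full overhead is subsumed into $q^{1+\phi}$. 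The remaining $s_1 = \log(\tilde\lambda/\lambda)$ and $s_2$ factors absorb the $\log(1/\epsilon)$ amplification through the recursion and the polylog factors from $m$, $\kappa$, and the concentration union bound. Putting the per-level cost $O(nm q_j) = O(nm q / c^j)$ together with the iteration counts across all $k$ levels and summing the geometric series gives the stated runtime bound, completing the proof.
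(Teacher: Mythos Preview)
Your overall architecture is right, but the hierarchy is set up with the wrong regularization, and this breaks both the concentration step and the base case.

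You define $\M_j = \hat\A_{[q_j]} + c^{2j}\lambda\I$. The concentration bound you invoke (Theorem~\ref{t:exp_nys_con}) requires, at level $j$, that $q_j \gtrsim \sqrt{\lambda'/(c^{2j}\lambda)}\,\log(n/\delta)/\theta^2 = \sqrt{\lambda'/\lambda}\,c^{-j}\log(n/\delta)/\theta^2$. But by construction $q_j = q/c^j$ with $q = O(\sqrt{\lambda'/\tilde\lambda}\,\log(n/\delta)/\theta^2)$, so $q_j$ is too small by a factor of $\sqrt{\tilde\lambda/\lambda}$ whenever $\tilde\lambda > \lambda$. Hence the claim ``$q_j = \Omega(\sqrt{\lambda'/(c^{2j}\lambda)}\log(n/\delta)/\theta^2)$ by construction'' is false, and the bound $\kappa(\M_j^{-1/2}\M_{j+1}\M_j^{-1/2})\leq c^2(1+\theta)^4$ is not established at $j=0$ (or anywhere). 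The same mis-indexing breaks the base case: with $k=\lceil\log(\sqrt{\lambda'/\tilde\lambda})/\log c\rceil$ you get $c^{2k}\lambda \approx \lambda'\lambda/\tilde\lambda$, not $\Theta(\lambda')$, so $\hat\A_1 + c^{2k}\lambda\I$ is not an $O(1)$-preconditioner for $\M_k$.

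The paper fixes this by anchoring the hierarchy at $\tilde\lambda$ rather than $\lambda$: it sets $\M_j = \hat\A_{[q_j]} + c^{2j}\tilde\lambda\I$ (equivalently $\M_k = \hat\A_{[q_k]} + (\lambda'/r_k)\I$ with $r_k = \lambda'/(c^{2k}\tilde\lambda)$), for which the sample sizes $q_j$ are exactly calibrated to Theorem~\ref{t:exp_nys_con}. The gap between the target $\hat\A_{[q]}+\lambda\I$ and $\M_0=\hat\A_{[q]}+\tilde\lambda\I$ is handled by a \emph{separate} outermost PCG layer using the trivial sandwich $(\lambda/\tilde\lambda)\M_0 \preceq \hat\A_{[q]}+\lambda\I \preceq \M_0$, which is where the $\sqrt{\tilde\lambda/\lambda}$ iteration count and the $s_1=\log(\tilde\lambda/\lambda)$ factor actually originate. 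Your sentence attributing $\sqrt{\tilde\lambda/\lambda}$ iterations to the $\M_0\to\M_1$ step conflicts with your own claim that this step has condition number $c^2(1+\theta)^4$; once you insert the missing outer layer and shift the hierarchy to $c^{2j}\tilde\lambda$, the rest of your sketch (recursive PCG, Lemma~\ref{l:c_bound} to absorb $\log^k$ into $q^{\phi}$, base case via Lemma~\ref{l:dmy_Nystrom_solve}) matches the paper.
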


\begin{proof}
Let $\epsilon>0$, $\theta>0$ be fixed and some $c>1$  (to be determined later). Let $r_k = \lambda'/(c^{2k}\tilde\lambda)$ where $k\geq 0$ is an integer. Furthermore let $q_k = O(\sqrt{r_k}\log(n/\delta)/\theta^2)$. We have $\hat\A_{[q]} +\lambda\I =  \frac{1}{q_0}\sum_{i=1}^{q_0}{\hat\A_{i} + \lambda\I}$. Let $\M_k=\frac{1}{q_k}\sum_{i=1}^{q_k}{\hat\A_{i}+\frac{\lambda'}{r_k}\I}$. Note that
\begin{align*}
  (\lambda/\tilde\lambda)\M_0\preceq \hat\A_{[q]} +\lambda\I \preceq \M_0.
\end{align*}
Using Lemma \ref{l:pcg}, for any given $\v$ we can find $\u \in \R^n$ such that $\|\u-\big(\hat\A_{[q]} +\lambda\I\big)^{-1}\v\|_{\hat\A_{[q]} +\lambda\I}^2 \leq \epsilon\|\big(\hat\A_{[q]} +\lambda\I\big)^{-1}\v\|_{\hat\A_{[q]} +\lambda\I}^2$ in time $4\sqrt{\tilde\lambda/\lambda}\big(T(\M_0) + q_0nm\big)\log(2/\epsilon) + n\sqrt{\tilde\lambda/\lambda}\log(1/\epsilon)$, where $T(\M_0)$ is time complexity to find a $\u$ given $\v$ such that $\|\u-\M_0^{-1}\v\|_{\M_0}^{2} \leq \frac{\lambda}{10\tilde\lambda}\|\M_0^{-1}\v\|_{\M_0}^2$. The total time complexity is given as
\begin{align}
O\Big(\sqrt{\tilde\lambda/\lambda}\big(T(\M_0) + q_0nm)\log(1/\epsilon)\Big) \label{e:total_time_recursion}
\end{align}
Now note that according to Theorem \ref{t:exp_nys_con} and our choices of $q_0$ and $q_1$, with probability $1-\delta/n$
\begin{align*}
    \M_0 \preceq (1+\theta)\Big(\E\hat\A + \frac{\lambda'}{r_0}\I\Big) \preceq (1+\theta)\Big(\E\hat\A + \frac{\lambda'}{r_1}\I\Big)\preceq (1+\theta)^2\M_1.
\end{align*}
Furthermore,
\begin{align*}
    \M_0 \succeq \frac{1}{1+\theta}\Big(\E\hat\A + \frac{\lambda'}{r_0}\I\Big)\succeq \frac{1}{c^2(1+\theta)}\Big(\E\hat\A + \frac{\lambda'}{r_1}\I\Big) \succeq \frac{1}{c^2(1+\theta)^2}\M_1.
\end{align*}
Again using Lemma \ref{l:pcg} we get
\begin{align}
    T(\M_0) &\leq 4c(1+\theta)^2\big(T(\M_1) + q_0nm\big)\log(20\tilde\lambda/\lambda) + nc(1+\theta)^2\log(10\tilde\lambda/\lambda) \nonumber\\
    &=O\Big((T(\M_1) + q_0nm)cs_1\Big).\label{e:tm0}
\end{align}
where $s_1 = \log(20\tilde\lambda/\lambda)$. Here $T(\M_1)$ is time complexity to find a $\u$ given $\v$ such that $\|\u-\M_1^{-1}\v\|^2_{\M_1} \leq \frac{1}{10c^2(1+\theta)^4}\|\M_1^{-1}\v\|^2_{\M_1}$. Similar as above we can show that
\begin{align*}
   \frac{1}{c^2(1+\theta)^2}\M_2 \preceq \M_1 \preceq (1+\theta)^2\M_2
\end{align*}
upper bounding $T(\M_1)$ as
\begin{align*}
    T(\M_1) \leq 4c(1+\theta)^2\big(T(\M_2) + q_1nm\big)\log(20c^2(1+\theta)^4) + nc(1+\theta)^2\log(10c^2(1+\theta)^4).
\end{align*}
Now recursively continuing in this manner by conditioning over high probability events $ \frac{1}{1+\theta}\E[\M_k]\preceq \M_k \preceq (1+\theta)\E[\M_k]$ according to Theorem \ref{t:exp_nys_con}, we get
\begin{align*}
    T(\M_k) \leq 4c(1+\theta)^2\big(T(\M_{k+1}) + q_knm\big)\log(20c^2(1+\theta)^4) + nc(1+\theta)^2\log(10c^2(1+\theta)^4).
\end{align*}
In fact, we can show that for $p=c(1+\theta)^2$
\begin{align*}
    T(\M_1) &\leq 4^kp^k\log^k(20p^2)T(\M_{k+1}) + nm\sum_{i=1}^{k}{4^{i}q_{i}p^i\log^{i}(20p^2)} + \frac{n}{4}\sum_{i=1}^{k}{4^{i}p^{i}\log^i(20p^2)} \nonumber\\
     &\leq 4^kp^k\log^k(20p^2)T(\M_{k+1}) + 2nmq_0\sum_{i=1}^{k}{\Big(\frac{4p\log(20p^2)}{c}\Big)^{i}}\\
     &\leq  4^kp^k\log^k(20p^2)T(\M_{k+1}) + 2knmq_0\cdot\big(4(1+\theta)^2\big)^k\cdot\big(\log(20p^2)\big)^k.
\end{align*}
In second to last inequality we used $q_i \geq 1$ for all $i$ and the formula $q_i = q_0/c^{i}$. Choose $k$ such that $r_{k+1}<2$ i.e., $k= \Big\lceil\frac{\log\sqrt{\lambda'/\tilde\lambda}}{\log(c)}\Big\rceil$ is sufficient. Now, let $c \geq \max\Big\{(4(1+\theta)^2)^{2/\phi}, \log^{2/\phi}(20p^2)\Big\}.$ Using Lemma \ref{l:c_bound} we get
\begin{align}
    T(\M_1) &\leq  4^kp^k\log^k(20p^2)T(\M_{k+1}) + 2knmq_0\cdot\Big(\sqrt{\frac{\lambda'}{\tilde\lambda}}\Big)^\phi \nonumber \\
    &= \frac{q_0}{q_k}\cdot\big(4(1+\theta)^2\big)^k\cdot\big(\log(20p^2)\big)^k\cdot T(\M_{k+1})   + 2knmq_0\cdot\Big(\sqrt{\frac{\lambda'}{\tilde\lambda}}\Big)^\phi \nonumber\\
    &\leq \Big(T(\M_{k+1}) + 2knm\Big)\cdot q_0^{1+\phi} \label{e:recursion}
\end{align}
where in the last inequality we used $\sqrt{\frac{\lambda'}{\tilde\lambda}} < q_0$. Now we precondition $\M_{k+1}$ with $\hat\A_1 + \frac{\lambda'}{r_{k+1}}\I$. As we have
\begin{align*}
    \frac{1}{q_{k+1}}\Big(\hat\A_1 + \frac{\lambda'}{r_{k+1}}\I\Big) \preceq \M_{k+1} \preceq \A+\frac{\lambda'}{r_{k+1}}\I \preceq \A+\lambda'\I \preceq 2(\hat\A_1+\lambda'\I) \preceq 4\Big(\hat\A_1+\frac{\lambda'}{r_{k+1}}\I\Big).
\end{align*}
where the second last inequality holds with probability $1-\delta/n$, due to Lemma \ref{l:musco-approx}. Using Lemma \ref{l:pcg} we get
\begin{align*}
    T(\M_{k+1}) \leq 8\sqrt{q_{k+1}}\big(T(\hat\A_1) + q_{k+1}nm\big)\log(20p^2) + 2n\sqrt{q_{k+1}}\log(10p^2)
\end{align*}
where $T(\hat\A_1)$ is the time complexity to find a $\u$ given $\v$ such that $\Big\|\u-\Big(\hat\A_1+\frac{\lambda'}{r_{k+1}}\I\Big)^{-1}\v\Big\|^2_{\hat\A_1+\lambda'/r_{k+1}\I} \leq \frac{1}{40q_{k+1}}\Big\|\Big(\hat\A_1+\frac{\lambda'}{r_{k+1}}\I\Big)^{-1}\v\Big\|^2_{\hat\A_1+\lambda'/r_{k+1}\I}.$ It only remains to upper bound $T(\hat\A_1)$. We apply Lemma \ref{l:dmy_Nystrom_solve} to upper bound $T(\hat\A_1)$. Let $\psi= \lambda'/r_k$ and apply Lemma \ref{l:dmy_Nystrom_solve} with $\M=\hat\A_1 + \psi\I$, we get that $\M^{-1}\x$ can be solved approximately in time $O\Big(nm\log(m\kappa/\delta)\Big)$, thus upper bounding $T(\hat\A_1)$. Substituting this in expression for $T(\M_{k+1})$ we get
\begin{align*}
    T(\M_{k+1}) &= O\Big((q_{k+1}nm)\log(m\kappa/\delta)\sqrt{q_{k+1}}\log(20p^2)\Big)\\
    &=O\Big(nm\log(m\kappa/\delta)\log^{3/2}(n/\delta)\log(20p^2)\Big)
\end{align*}
Denoting $s_2 = \log(m\kappa/\delta)\log^{3/2}(n/\delta)\log(20p^2)$ and substituting the upper bound for $T(\M_{k+1})$ in (\ref{e:recursion}) we get
\begin{align*}
    T(\M_1) =O\Big(\big(nms_2 + 2knm\big)\cdot q_0^{1+\phi}\Big).
\end{align*}
Back substituting the above in (\ref{e:tm0}) we get
\begin{align*}
    T(\M_0) = O\Big(\big(nms_2 + 2knm\big)\cdot q_0^{1+\phi}\cdot cs_1 + nmq_0\cdot cs_1\Big)
\end{align*}
Finally substituting the above in expression (\ref{e:total_time_recursion}), the overall time complexity is given as
\begin{align*}
&O\Big(\sqrt{\tilde\lambda/\lambda}\big(\big(nms_2 + 2knm\big)\cdot q_0^{1+\phi}\cdot cs_1 + nmq_0\cdot cs_1 + q_0nm\big)\log(1/\epsilon)\Big)\\
=&O\Big(\sqrt{\tilde\lambda/\lambda}\cdot nmq_0^{1+\phi}\cdot cs_1s_2k\log(1/\epsilon)).
\end{align*}
Taking the union bound over all the considered high probability events finishes the proof.
\end{proof}

\section{Statistical risk of KRR in random design setting}\label{s:appendix_krr}
Consider the Hilbert space $L^2(\calh,\rho_\calx)$ consisting of square integrable functions from $\calh$ to $\R$, i.e., $L^2(\calh,\rho_\calx) = \{f: \calh \rightarrow \R \ | \int_{\calh}{f(\x)^2d\rho_\calx(\x) < \infty}\}$. The norm on $L^2(\calh,\rho_\calx)$ will be denoted by $\|\cdot\|_\rho$, and for any $f \in  L^2(\calh,\rho_\calx)$ we have $\|f\|_\rho = \Big(f(\x)^2d\rho_\calx(\x)\Big)^{1/2}$. Let $\cals_\rho: \calh \rightarrow L^2(\calh,\rho_\calx)$ be the map $\cals_\rho\omega = \langle\cdot,\omega\rangle_\calh$ and $\cals_\rho^*: L^2(\calh,\rho_\calx) \rightarrow \calh$ be the adjoint map of $\cals_\rho$. The covariance operator $\cals_\rho^*\cals_\rho : \calh \rightarrow \calh$ will be denoted as $\calc$, whereas the integral operator $\cals_\rho\cals_\rho^* : L^2(\calh,\rho_\calx) \rightarrow L^2(\calh,\rho_\calx)$ will be denoted as $\call$. Under assumption \ref{as:1}, $\calc$ and $\call$ are positive, self-adjoint, and bounded linear operators. Let $\calz_n: \calh \rightarrow \R^n $ as $\calz_n\omega = \big(\langle\x_i,\omega\rangle\big)_{i=1}^{n}$ and $\cals_n = \calz_n/\sqrt{n}$. The operator $\cals_n^*\cals_n: \calh\rightarrow\calh$ is known as the empirical covariance operator and will be denoted by $\calc_n$. Note that $\calz_n^*\calz_n = n\calc_n$. An important observation here is that the operator $\calz_n\calz_n^* : \R^n \rightarrow \R^n$ is the linear map $\alpha \rightarrow \K\alpha$ where $\K \in \R^{n\times n}$ is the kernel matrix over the $n$ points $\x_i$ for $1\leq i \leq n$. For $\lambda>0$, the regularized operators $\calc+\lambda\mathcal{I}$, $\calc_n+\lambda\mathcal{I}$ will be denoted by $\calc_\lambda$ and $\calc_{n\lambda}$ respectively.
The following provides further intuition on the various operators considered in the analysis of statistical risk of KRR:
\begin{itemize}
    \item For any $g \in L^2(\calh,\rho_\calx)$, $\cals_\rho^* g = \int_\calh{\x g(\x)d\rho_\calx(\x)}$, $\call g = \int_\calh{\langle\cdot,\x\rangle_\calh g(\x)d\rho_\calx(\x)}$.
    \item For any $\omega \in \calh$, $\cals_\rho^*\cals_\rho\omega = \int_{\calh}{\langle\cdot,\omega\rangle_\calh\x d\rho_\calx(\x)}$, $\cals_n^*\cals_n\omega = \frac{1}{n}\sum_{i=1}^{n}{\langle\x_i,\omega\rangle_\calh\x_i}.$
    \item For any $\alpha \in \R^n$, $\calz_n^*\alpha = \sum_{i=1}^{n}{\alpha_i\x_i}$ and $\calz_n\calz_n^*\alpha = \K\alpha$.
\end{itemize}
The following lemma provide two different ways to represent the hypothesis $\hat\omega$ learned by the classical Nystr\"om KRR.
\begin{lemma}[Representations of learned hypothesis]\label{l:omega_rep}
Let $\hat\K$ be an $m$-rank Nystr\"om approximation of $\K$. Let $\hat\calp$ denote the projection operator onto the corresponding subspace $\calh_m$. Then the hypothesis learned by classical Nystr\"om KRR is given as
    \begin{align*}
        \hat\omega &= \hat\calp\calz_n^*\big(\hat\K+n\lambda\I\big)^{-1}\y\\
        &=\hat\calp\calz_n^*(\calz_n\hat\calp\calz_n^* + n\lambda\I)^{-1}\y\\
&=\hat\calp(\hat\calp\calz_n^*\calz_n\hat\calp +n\lambda\mathcal{I})^{-1}\hat\calp\calz_n^*\y.
    \end{align*}
\end{lemma}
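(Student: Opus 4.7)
The first equality is essentially a restatement: for the classical Nystr\"om approximation, $\hat\K=\calz_n\hat\calp\calz_n^*$ where $\hat\calp$ is the orthogonal projection onto $\calh_m$, so substituting this into $(\hat\K+n\lambda\I)^{-1}$ immediately produces the second form. The substantive content is the passage from $\hat\calp\calz_n^*(\calz_n\hat\calp\calz_n^*+n\lambda\I)^{-1}\y$ to $\hat\calp(\hat\calp\calz_n^*\calz_n\hat\calp+n\lambda\mathcal{I})^{-1}\hat\calp\calz_n^*\y$, which I would establish using the operator push-through identity and idempotence of $\hat\calp$.

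Concretely, because $\hat\calp$ is an orthogonal projection we have $\hat\calp^2=\hat\calp$, so the ``inside'' operator can be rewritten as $\calz_n\hat\calp\calz_n^* = (\calz_n\hat\calp)(\hat\calp\calz_n^*)$. Applying the push-through identity $(\mu\mathcal{I}+MN)^{-1}M = M(\mu\mathcal{I}+NM)^{-1}$ with $M=\hat\calp\calz_n^*$, $N=\calz_n\hat\calp$, and $\mu = n\lambda$ gives
\begin{align*}
\hat\calp\calz_n^*\bigl(\calz_n\hat\calp\calz_n^*+n\lambda\I\bigr)^{-1}
= \bigl(\hat\calp\calz_n^*\calz_n\hat\calp+n\lambda\mathcal{I}\bigr)^{-1}\hat\calp\calz_n^*,
\end{align*}
so after acting on $\y$ we obtain the claimed formula without the outer $\hat\calp$. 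The remaining step is to argue that prepending $\hat\calp$ is harmless.

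For this last step I would use the observation that $\hat\calp$ commutes with the operator $A:=\hat\calp\calz_n^*\calz_n\hat\calp$, since $\hat\calp A = \hat\calp^2\calz_n^*\calz_n\hat\calp = A$ and $A\hat\calp = \hat\calp\calz_n^*\calz_n\hat\calp^2 = A$. Commutativity with $A$ extends by functional calculus (equivalently, by the Neumann series on the range and the spectral decomposition) to commutativity with $(A+n\lambda\mathcal{I})^{-1}$. Hence
\begin{align*}
\hat\calp\bigl(A+n\lambda\mathcal{I}\bigr)^{-1}\hat\calp\calz_n^*\y
= \bigl(A+n\lambda\mathcal{I}\bigr)^{-1}\hat\calp^2\calz_n^*\y
= \bigl(A+n\lambda\mathcal{I}\bigr)^{-1}\hat\calp\calz_n^*\y,
\end{align*}
which matches the expression derived from the push-through identity and closes the chain of equalities.

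The only place a reader might stumble is verifying the push-through identity in this operator-valued setting, but it follows from the algebraic identity $M(\mu\mathcal{I}+NM) = (\mu\mathcal{I}+MN)M$ by multiplying on both sides by the respective inverses (which exist because $n\lambda>0$ and both $\hat\calp\calz_n^*\calz_n\hat\calp$ and $\calz_n\hat\calp\calz_n^*$ are positive semidefinite). No obstacle beyond this routine manipulation is expected.
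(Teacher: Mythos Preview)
Your argument for the algebraic equivalence of the three displayed expressions is correct: the push-through identity with $M=\hat\calp\calz_n^*$, $N=\calz_n\hat\calp$ (using $\hat\calp^2=\hat\calp$ so that $NM=\calz_n\hat\calp\calz_n^*$) does the job, and the commutativity observation $\hat\calp A=A\hat\calp=A$ for $A=\hat\calp\calz_n^*\calz_n\hat\calp$ cleanly justifies that the outer $\hat\calp$ can be inserted or removed at will. The paper uses exactly the same push-through identity, just in the opposite direction (from the third form to the second), so this part of your proof is essentially the paper's argument run backwards plus the small extra commutativity step that the reverse direction requires.

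Where your proposal differs from the paper is the starting point. You treat the first line $\hat\omega=\hat\calp\calz_n^*(\hat\K+n\lambda\I)^{-1}\y$ as the definition of ``the hypothesis learned by classical Nystr\"om KRR'' and only prove the equivalences. The paper instead \emph{derives} the representation from the constrained empirical risk minimization: it parametrizes $\hat\omega=\calz_n^*\S^\top\hat\w$ over $\calh_m$, substitutes into $\tilde{\mathcal E}(\omega,\lambda)$ to obtain the finite-dimensional least-squares problem $\|\K\S^\top\hat\w-\y\|^2+n\lambda\hat\w^\top\S\K\S^\top\hat\w$, solves the normal equations to get $\hat\w=(\S\K^2\S^\top+n\lambda\S\K\S^\top)^\dagger\S\K\y$, and then massages this (using that $(\S\calz_n)^\dagger\S\calz_n=\hat\calp$) into the third displayed form before applying push-through. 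Since the surrounding text explicitly cites this lemma to justify that the constrained minimizer of $\tilde{\mathcal E}(\omega,\lambda)$ over $\calh_m$ has the stated form, your proposal is missing that optimization-to-representation step; as written it only shows the three expressions agree with each other, not that any of them equals the minimizer.
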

\begin{proof}
Recall that
\begin{align*}
    \tilde{\mathcal{E}}(\omega,\lambda) = \frac{1}{n}\sum_{i=1}^{n}{(\langle\omega,\x_i\rangle_{\calh}-y_i)^2} +\lambda\|\omega\|_\calh^2.
\end{align*}
Let $\hat\omega$ be the minimizer of $  \tilde{\mathcal{E}}(\omega,\lambda)$. First note $\calz_n\hat\omega = \big(\langle\hat\omega,\x_i\rangle_{\calh}\big)_{i=1}^{n}$. Substituting $\hat\omega=\calz_n^*\S^\top\hat\w$ where $\S\in\R^{m\times n}$ is a random sketching matrix, we get $\calz_n\calz_n^*\S^\top\hat\w = \big(\langle\hat\omega,\x_i\rangle_{\calh}\big)_{i=1}^{n}$. Therefore $\sum_{i=1}^{n}{(\langle\hat\omega,\x_i\rangle_{\calh}-y_i)^2} = \|\K\S^\top\hat\w-\y\|^2$, by noting $\calz_n\calz_n^*=\K$. In particular $\hat\w$ is the solution of the system $(\S\K^2\S^\top +n\lambda\S\K\S^\top)\hat\w=\S\K\y$, implying,
    \begin{align*}
      \hat\w =  (\S\K^2\S^\top +n\lambda\S\K\S^\top)^\dagger\S\K\y.
    \end{align*}
    The learned $\hat\omega$ is given as
    \begin{align*}
        \hat\omega &= \calz_n^*\S^\top(\S\K^2\S^\top +n\lambda\S\K\S^\top)^\dagger\S\K\y\\
        &=\calz_n^*\S^\top(\S\calz_n(\calz_n^*\calz_n +n\lambda\mathcal{I})\calz_n^*\S^\top)^\dagger\S\calz_n\calz_n^*\y\\
&=\calz_n^*\S^\top(\S\calz_n(\S\calz_n^\dagger)\S\calz_n(\calz_n^*\calz_n +n\lambda\mathcal{I})\calz_n^*\S^\top(\calz_n^*\S^\top)^\dagger\calz_n^*\S^\top)^\dagger\S\calz_n\calz_n^*\y\\
&=\calz_n^*\S^\top(\S\calz_n(\hat\calp\calz_n^*\calz_n\hat\calp+n\lambda\mathcal{I})\calz_n^*\S^\top)^\dagger\S\calz_n\calz_n^*\y\\
&=\hat\calp(\hat\calp\calz_n^*\calz_n\hat\calp+n\lambda\mathcal{I})^{-1}\hat\calp\calz_n^*\y.
\end{align*}
The second last equality holds because $\calz_n^*\S^\top(\calz_n^*\S^\top)^\dagger$ and $(\S\calz_n)^\dagger\S\calz_n$ are orthogonal projections on $\calh_m$. Furthermore using the push-through identity we have $(\hat\calp\calz_n^*\calz_n\hat\calp+n\lambda\mathcal{I})^{-1}\hat\calp\calz_n^* = \hat\calp\calz_n^*(\calz_n^*\hat\calp\calz_n+n\lambda\I)^{-1}\y$ leading to
\begin{align*}
\hat\omega&=\hat\calp\calz_n^*(\calz_n\hat\calp\calz_n^*+n\lambda\I)^{-1}\y\\
&=\hat\calp\calz_n^*\big(\hat\K +n\lambda\I\big)^{-1}\y.
\end{align*}
The last equality holds because $\hat\K=\calz_n\hat\calp\calz_n^*$.
\end{proof}
A similar representation like Lemma \ref{l:omega_rep} holds for Block-Nystr\"om KRR as shown before the proof of Theorem \ref{t:main_stat_risk_2}. In the next result, we first show that for any given $\x$ in $\calh$, the output $y$ can be predicted in time linear in number of blocks, provided the Block Nystr\"om hypothesis has been precomputed.

\begin{lemma}[Prediction with Block-Nystr\"om]\label{l:prediction}
      For some $q\geq 1$, consider $\calp_{[q]} = \frac{1}{q}\sum_{i=1}^{q}{\hat\calp_i}$ where $\hat\calp_i$ denotes an orthogonal projection onto some $m$ dimensional subspace of $\calh_n$. In particular, $\hat\calp_i = \calz_n^*\S_i^\top(\S_i\K\S_i^\top)^\dagger\S_i\calz_n$ where $\S_i \in \R^{m\times n}$ is a sub-sampling matrix. Let $\hat\omega_{[q]} = \calp_{[q]}\calz_n^*(\hat\K_{[q]} +n\lambda\I)^{-1}\y$ where $\hat\K_{[q]} = \frac{1}{q}\sum_{i=1}^{q}{\hat\K_i+n\lambda\I}$ and $\hat\K_{i} = \calz_n\hat\calp_i\calz_n^*$ are iid Nystr\"om approximations of $\K$. Then for any $\x \in \calh$,  $\langle\hat\omega_{[q]},\x\rangle_\calh$ can be computed in time $O(qm)$ after a preprocessing cost of $\tilde O\Big(qm^3 + nmq^{1+o(1)}\Big)$, assuming $(\hat\K_{[q]} +n\lambda\I)^{-1}\y$ can be computed in time $\tilde O(nmq^{1+o(1)}).$
\end{lemma}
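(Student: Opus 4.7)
The plan is to observe that, although $\hat\omega_{[q]}$ lives in the potentially infinite-dimensional Hilbert space $\calh$, the range of the operator $\calp_{[q]}$ is contained in the span of the landmark points $\{\x_j : j\in \bigcup_{i=1}^q S_i\}$, so $\hat\omega_{[q]}$ admits an explicit representation as a sparse combination of at most $qm$ landmark vectors. Once we precompute these coefficients, answering any query reduces to a weighted sum of $qm$ kernel values.

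First I would unpack the definition. Using the identity $\calz_n\calz_n^* = \K$ on $\R^n$ and writing $\W_i := \S_i\K\S_i^\top$, we have for each block $\hat\calp_i\calz_n^* = \calz_n^*\S_i^\top \W_i^\dagger \S_i\calz_n\calz_n^* = \calz_n^*\S_i^\top \W_i^\dagger \S_i\K$. Let $\v := (\hat\K_{[q]} + n\lambda\I)^{-1}\y \in \R^n$ denote the vector produced by the assumed fast solver. Then
\[
\hat\omega_{[q]} \;=\; \calp_{[q]}\calz_n^*\v \;=\; \frac{1}{q}\sum_{i=1}^{q} \calz_n^*\S_i^\top\a_i, \qquad \a_i := \W_i^\dagger \S_i\K\v \in \R^m,
\]
so $\hat\omega_{[q]} = \sum_{j=1}^n b_j \x_j$ with coefficient vector $\b = \frac{1}{q}\sum_{i=1}^q \S_i^\top\a_i\in\R^n$ supported on at most $qm$ indices.

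For the preprocessing step, I would first compute $\v$ using the assumed fast solver in $\tilde O(nmq^{1+o(1)})$ time. The Block-Nystr\"om construction itself already produces, for each block $i$, the column matrix $\C_i := \K\S_i^\top \in\R^{n\times m}$ in $O(nm)$ kernel evaluations and the inverse $\W_i^\dagger \in\R^{m\times m}$ in $O(m^3)$ time, for a total of $O(qnm + qm^3)$ over all blocks. I would then compute each $\a_i = \W_i^\dagger \C_i^\top \v$ in $O(nm+m^2)$ time, i.e., $O(qnm + qm^2)$ in aggregate. All of these terms are absorbed into the claimed bound $\tilde O(qm^3 + nmq^{1+o(1)})$, and we store the pairs $\{(S_i,\a_i)\}_{i=1}^{q}$.

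For the prediction step, given a query $\x\in\calh$ we evaluate
\[
\langle \hat\omega_{[q]}, \x\rangle_\calh \;=\; \frac{1}{q}\sum_{i=1}^{q}\sum_{j \in S_i} (\a_i)_j \,\langle \x_j, \x\rangle_\calh,
\]
which requires $qm$ kernel evaluations and $O(qm)$ arithmetic, i.e., $O(qm)$ time under the standard convention (compatible with Assumption \ref{as:1}) that each kernel value $\langle\x_j,\x\rangle_\calh$ is computable in $O(1)$ work. There is no substantial technical obstacle here; the main point is simply to use $\calz_n\calz_n^* = \K$ to push the projection down to finite-dimensional linear algebra, and to recycle the precomputed $\C_i$ and $\W_i^\dagger$ from the Block-Nystr\"om construction rather than recomputing them, which is what keeps the preprocessing within the stated budget.
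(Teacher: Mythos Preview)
Your proposal is correct and follows essentially the same approach as the paper: both expand $\langle\hat\omega_{[q]},\x\rangle_\calh$ blockwise, push $\calz_n^*$ across the inner product to reduce each term to $\langle \a_i,\S_i\calz_n\x\rangle$ with $\a_i=(\S_i\K\S_i^\top)^\dagger\S_i\K\v$, precompute the $\a_i$'s at the stated cost, and then evaluate each query via $qm$ kernel lookups. Your presentation is slightly more explicit in naming $\C_i$, $\W_i$, and the sparse coefficient vector $\b$, but the underlying computation and cost accounting match the paper's proof line by line.
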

\begin{proof} Given any $\x \in \calh$, we can predict the response $y$ for $\x$ by computing $\langle\hat\omega_{[q]},\x\rangle_\calh$ as follows.
\begin{align*}
    \langle\omega_{[q]},\x\rangle_\calh &= \frac{1}{q}\sum_{i=1}^{q}{\langle\hat\calp_i\calz_n^*(\hat\K_{[q]}+n\lambda\I)^{-1}\y,\x\rangle_\calh}\\
    &=\frac{1}{q}\sum_{i=1}^{q}{\langle\calz_n^*\S_i^\top(\S_i\K\S_i^\top)^\dagger\S_i\calz_n\calz_n^*(\hat\K_{[q]}+n\lambda\I)^{-1}\y,\x\rangle_\calh}\\
    &=\frac{1}{q}\sum_{i=1}^{q}{\langle(\S_i\K\S_i^\top)^\dagger\S_i\calz_n\calz_n^*(\hat\K_{[q]}+n\lambda\I)^{-1}\y,\S_i\calz_n\x\rangle}.
\end{align*}
For any fixed $i$, $(\S_i\K\S_i^\top)^\dagger$ can be computed in time $O(m^3)$, $\S_i\calz_n\calz_n^*$ in time $O(nm)$\footnote{We are assuming that $\langle\x_i,\x_j\rangle_\calh$ takes times $O(1)$ for any $1\leq i,j \leq n$.}, and $\S_i\calz_n\calz_n^*(\hat\K_{[q]}+n\lambda\I)^{-1}\y$ in time $O(nm)$ given $\S_i\calz_n\calz_n^*$ and $(\hat\K_{[q]}+n\lambda\I)^{-1}\y$ have been precomputed. For any $\x \in \calh$, $\S_i\calz_n\x$ can be computed in time $O(m)$. The overall preprocessing cost to find $\langle\hat\omega_{[q]},\x\rangle_\calh$ is $\tilde O\Big(qm^3 + nmq^{1+o(1)}\Big)$.
\end{proof}
In Theorem \ref{t:linear_system_solve_2}, we provide a recursive algorithm to approximate $(\hat\K_{[q]}+n\lambda\I)^{-1}\y$ in time $ \tilde O(nmq^{1+o(1)})$ with high precision, for an appropriately chosen $q$ depending on $\lambda$ and sampling distribution for Nystr\"om landmarks. We now provide the proof of statistical risk of Block-Nystr\"om KRR. For $q\geq 1$ and $1\leq i \leq q$, consider $\hat\calp_{i}$ to denote orthogonal projections on $m$ dimensional subspaces of $\calh_n$ sampled in an iid manner and $\hat\K_i = \calz_n\hat\calp_i\calz_n^*$. Let $\hat\calp_{[q]}$ denote the average of $q$ projections i.e., $\hat\calp_{[q]} = \frac{1}{q}\sum_{i=1}^{q}{\hat\calp_i}$ and $\hat\K_{[q]}= \frac{1}{q}\sum_{i=1}^{q}{\hat\K_i}=\calz_n\hat\calp_{[q]}\calz_n^*$. In  Block-Nystr\"om, we consider the following hypothesis in $\calh$
\begin{align*}
    \omega_{[q]} &= \hat\calp_{[q]}\calz_n^*(\hat\K_{[q]} +n\lambda\I)^{-1}\y\\
    &=\hat\calp_{[q]}\calz_n^*(\calz_n\hat\calp_{[q]}\calz_n^* +n\lambda\I)^{-1}\y\\
&=\hat\calp_{[q]}^{1/2}\big(\hat\calp_{[q]}^{1/2}\calz_n^*\calz_n\hat\calp_{[q]}^{1/2}+n\lambda\mathcal{I}\big)^{-1}\hat\calp_{[q]}^{1/2}\calz_n^*\y.
\end{align*}
Recall $f_\rho$ denotes the regression function and $f_\calh$ denotes the projection of $f_\rho$ onto $\bar\calh_\rho$. We define $\omega = \calc_\lambda^{-1}\cals_\rho^*f_\calh$. 

The following theorem provides a more general version of Theorem \ref{t:main_stat_risk}, bounding the generalization error in terms of a range of norms parameterized by $a$, following prior works \citep{lin2020convergences,lin2020optimal}. The bounds in terms of the expected risk stated in Theorem \ref{t:main_stat_risk} can be recovered by taking $a=0$.

\begin{theorem}[Expected risk of approximate KRR, extended Theorem \ref{t:main_stat_risk}]\label{t:main_stat_risk_2}
Under assumptions  \ref{as:1}, \ref{as:2}, \ref{as:3}, \ref{as:4}, \ref{as:5},  $0\leq a \leq \zeta$, $\zeta < \frac{1}{2}$, $n \geq O(G^2\log(G^2/\delta))$ and $\frac{19G^2\log(n/\delta)}{n}\leq\lambda \leq \|\calc\|$. 
   \begin{align*}
    \big\|\call^{-a}(\cals_\rho \omega_{[q]}- f_\calh)\big\|_\rho \leq   \lambda^{-a}\cdot \tilde O\Big(R\lambda^\zeta\big\|\K_{n\lambda}^{1/2}(\hat\K_{[q]}+n\lambda\I)^{-1}\K_{n\lambda}^{1/2}\big\| + \frac{1}{n\lambda^{1-\zeta}} + \frac{1}{\sqrt{n\lambda^{\gamma}}} \Big)
\end{align*}
where $\K_{n\lambda} = \K+n\lambda\I$.
\end{theorem}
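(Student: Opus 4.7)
The plan is to mirror the bias--variance decomposition used for projection-based Nystr\"om KRR in \cite{rudi2015less,lin2020convergences,lin2020optimal}, but redo every step so that only the weaker conditions in Assumption~\ref{as:5} ($0\preceq\hat\calp_{[q]}\preceq\mathcal{I}$ and $\calp\hat\calp_{[q]}=\hat\calp_{[q]}$) are used, rather than the full projection identity $\hat\calp^2=\hat\calp$.

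Set $\omega_\lambda=\calc_\lambda^{-1}\cals_\rho^* f_\calh$ and split
\begin{align*}
\cals_\rho\omega_{[q]}-f_\calh \;=\; \cals_\rho(\omega_{[q]}-\omega_\lambda) \;+\; (\cals_\rho\omega_\lambda - f_\calh).
\end{align*}
The true-bias term $\|\call^{-a}(\cals_\rho\omega_\lambda-f_\calh)\|_\rho$ is directly controlled by Lemma~\ref{l:true_bias}, yielding the $R\lambda^{\zeta-a}=\lambda^{-a}\cdot R\lambda^\zeta$ contribution. For the first term, the polar identity $\|\call^{-a}\cals_\rho v\|_\rho=\|\calc^{1/2-a}v\|_\calh$ together with $\|\calc^{1/2-a}\calc_\lambda^{-1/2}\|\leq\sqrt{2}\,\lambda^{-a}$ (valid because $0\le a\le\zeta<1/2$) reduces the task to estimating $\|\calc_\lambda^{1/2}(\omega_{[q]}-\omega_\lambda)\|_\calh$, and Lemma~\ref{l:rudi_15_1} lets us replace $\calc_\lambda^{1/2}$ by $\calc_{n\lambda}^{1/2}$ at $O(1)$ cost, moving the remaining work to the empirical geometry.

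Using the push-through identity with $B=\hat\calp_{[q]}^{1/2}\cals_n^*$, so $BB^*=\hat\calp_{[q]}^{1/2}\calc_n\hat\calp_{[q]}^{1/2}$ and $B^*B=\hat\calc_{[q]}:=\cals_n\hat\calp_{[q]}\cals_n^*$, rewrite $\omega_{[q]}=\hat\calp_{[q]}^{1/2}(\hat\calp_{[q]}^{1/2}\calc_n\hat\calp_{[q]}^{1/2}+\lambda\mathcal{I})^{-1}\hat\calp_{[q]}^{1/2}\cals_n^*\hat\y$ with $\hat\y=\y/\sqrt n$. Inserting and subtracting $\calc_n\omega_\lambda$ splits $\omega_{[q]}-\omega_\lambda$ into a sample-noise piece $\hat\calp_{[q]}^{1/2}(\hat\calp_{[q]}^{1/2}\calc_n\hat\calp_{[q]}^{1/2}+\lambda)^{-1}\hat\calp_{[q]}^{1/2}\eta$ with $\eta=\cals_n^*\hat\y-\calc_n\omega_\lambda\in\calh$, plus an approximation-bias piece $\bigl[\hat\calp_{[q]}^{1/2}(\hat\calp_{[q]}^{1/2}\calc_n\hat\calp_{[q]}^{1/2}+\lambda)^{-1}\hat\calp_{[q]}^{1/2}\calc_n - \mathcal{I}\bigr]\omega_\lambda$. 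The sample-noise piece is controlled by the key operator inequality
\begin{align*}
\hat\calp_{[q]}^{1/2}\bigl(\hat\calp_{[q]}^{1/2}\calc_n\hat\calp_{[q]}^{1/2}+\lambda\mathcal{I}\bigr)^{-1}\hat\calp_{[q]}^{1/2}\;\preceq\;(\calc_n+\lambda\mathcal{I})^{-1},
\end{align*}
which follows from $\hat\calp_{[q]}\preceq\mathcal{I}$ by a short Fenchel-dual argument: writing $\langle y,(\hat\calp_{[q]}^{1/2}\calc_n\hat\calp_{[q]}^{1/2}+\lambda)^{-1}y\rangle = \max_z[2\langle y,z\rangle - \|\calc_n^{1/2}\hat\calp_{[q]}^{1/2}z\|^2 - \lambda\|z\|^2]$ with $y=\hat\calp_{[q]}^{1/2}x$ and substituting $w=\hat\calp_{[q]}^{1/2}z$ (so $\|w\|\le\|z\|$) collapses the maximum to $\langle x,(\calc_n+\lambda)^{-1}x\rangle$. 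Combined with Lemma~\ref{l:sample_variance}, this yields the $\tilde O(n^{-1}\lambda^{-(1-\zeta)}+n^{-1/2}\lambda^{-\gamma/2})$ contribution \emph{without} any $\alpha$ factor.

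The main obstacle is the approximation-bias piece, because the projection identity $\hat\calp^2=\hat\calp$ used in prior work to reduce it to $(\mathcal{I}-\hat\calp)\omega_\lambda$ is no longer available. The approach is to split $\calc_n=(\calc_n+\lambda)-\lambda$ and apply the push-through identity once more, writing the piece as the sum of $-\lambda\hat\calp_{[q]}^{1/2}(\hat\calp_{[q]}^{1/2}\calc_n\hat\calp_{[q]}^{1/2}+\lambda)^{-1}\hat\calp_{[q]}^{1/2}\omega_\lambda$ (again bounded by the operator inequality above, with no $\alpha$) and a residual that measures how much $\hat\calp_{[q]}^{1/2}\calc_n\hat\calp_{[q]}^{1/2}+\lambda$ compresses $\calc_n+\lambda$ on the range of $\calp$. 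Transferring this residual to the data side via the similarities $\cals_n\hat\calp_{[q]}\cals_n^*=\hat\K_{[q]}/n$, $\cals_n\cals_n^*=\K/n$, and $\calp\hat\calp_{[q]}=\hat\calp_{[q]}$ from Assumption~\ref{as:5}, the regularized approximation factor $\|\K_{n\lambda}^{1/2}(\hat\K_{[q]}+n\lambda\I)^{-1}\K_{n\lambda}^{1/2}\|$ appears as the exact multiplicative constant. Cordes' inequality (Lemma~\ref{l:cordes}) handles the fractional-power transfer between $\call$ and $\calc$ that arises when reading $\omega_\lambda=\calc_\lambda^{-1}\cals_\rho^*\call^\zeta g$ through the estimate, and Lemma~\ref{l:true_bias} with $a=0$ supplies the $R\lambda^\zeta$ prefactor. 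Assembling the three contributions and factoring out $\lambda^{-a}$ gives the stated bound, with polylogarithmic factors absorbed into $\tilde O$.
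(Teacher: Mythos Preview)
Your proposal follows essentially the same route as the paper: the same initial split into true bias plus $\cals_\rho(\omega_{[q]}-\omega_\lambda)$, Lemma~\ref{l:true_bias} and Lemma~\ref{l:rudi_15_1} in the same places, the same variance/bias decomposition via the push-through identity, Lemma~\ref{l:sample_variance} for the sample-noise, and the data-side transfer using $\calp\hat\calp_{[q]}=\hat\calp_{[q]}$ (equivalently $\calz_n^\dagger\calz_n\bar\calp=\bar\calp$) to produce the factor $\|\K_{n\lambda}^{1/2}(\hat\K_{[q]}+n\lambda\I)^{-1}\K_{n\lambda}^{1/2}\|$ in the bias.

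Two minor differences are worth noting. Your Fenchel-dual argument for the key operator inequality $\hat\calp^{1/2}(\hat\calp^{1/2}\calc_n\hat\calp^{1/2}+\lambda)^{-1}\hat\calp^{1/2}\preceq\calc_{n\lambda}^{-1}$ is a clean alternative to the paper's route (the paper observes $\hat\calp^{1/2}\calc_n\hat\calp^{1/2}+\lambda\mathcal{I}\succeq\hat\calp^{1/2}\calc_{n\lambda}\hat\calp^{1/2}$ and conjugates by the pseudo-inverse); both yield the same conclusion. On the other hand, your split $\calc_n=(\calc_n+\lambda)-\lambda$ in the bias step is extraneous: the piece $-\lambda T\omega_\lambda$ you peel off is $O(R\lambda^\zeta)$, but the residual $(T\calc_{n\lambda}-\mathcal{I})\omega_\lambda$ is no easier than the original $(T\calc_n-\mathcal{I})\omega_\lambda$ and still needs the full data-side transfer you describe. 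The paper bounds the bias more directly, by writing it as $R\,\|\calc_\lambda^{1/2-a}(T\calc_n-\mathcal{I})\calc_\lambda^{\zeta-1/2}\|$, extracting $\lambda^{-a}\cdot\lambda^{-1/2}\cdot\lambda^{\zeta-1/2}$, and then bounding the \emph{full} operator norm $\|\calc_{n\lambda}(T\calc_n-\mathcal{I})\|$ via matrix identities on the $\calz_n$-side; this avoids the split entirely and makes the appearance of the approximation factor transparent.
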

\begin{proof}
For simplicity of notation we will denote  $(\hat\K_{[q]}+n\lambda\I) = \bar\K_{n\lambda}$, $\hat\calp_{[q]} = \bar\calp$. We start with the following decomposition:
    \begin{align}
        \big\|\call^{-a}\big(\cals_\rho\omega_{[q]}-f_\calh\big)\big\|_\rho &\leq \big\|\call^{-a}\cals_\rho\big(\omega_{[q]}-\omega\big)\big\|_\rho + \underbrace{\big\|\call^{-a}\big(\cals_\rho\omega-f_\calh)\big\|_\rho}_{\text{True Bias}} \nonumber\\
        &\leq \big\|\call^{-a}\cals_\rho\big(\omega_{[q]}-\omega\big)\big\|_\rho + R\lambda^{\zeta-a} \label{e:risk_1}
    \end{align}
    where in last inequality we used Lemma \ref{l:true_bias} to upper bound the true bias term. Consider the first term now:
    \begin{align}
        \big\|\call^{-a}\cals_\rho\big(\omega_{[q]}-\omega\big)\big\|_\rho &= \big\|\call^{-a}\cals_\rho\calc^{a-1/2}\calc^{1/2-a}\big(\omega_{[q]}-\omega\big)\big\|_\rho \nonumber\\
        &\leq \big\|\call^{-a}\cals_\rho\calc^{a-1/2}\big\|\cdot\big\|\calc^{1/2-a}\big(\omega_{[q]}-\omega\big)\big\|_\calh \nonumber\\
        &\leq \big\|\calc^{1/2-a}\big(\omega_{[q]}-\omega\big)\big\|_\calh. \label{e:risk_2}
    \end{align}
The last inequality is obtained after upperbounding $\big\|\call^{-a}\cals_\rho\calc^{a-1/2}\big\|$ by $1$. Now we substitute for $\omega_{[q]}$. Continuing we have,
\begin{align}
     \big\|\calc^{1/2-a}\big(\omega_{[q]}-\omega\big)\big\|_\calh &= \big\|\calc^{1/2-a}\big[\bar\calp\calz_n^*\bar\K_{n\lambda}^{-1}\y - \omega\big]\big\|_\calh \nonumber\\
     &=\big\|\calc^{1/2-a}\big[\bar\calp\calz_n^*\bar\K_{n\lambda}^{-1}(\y-\calz_n\omega +\calz_n\omega)-\omega\big]\big\|_\calh \nonumber\\
     &\leq \underbrace{\big\|\calc^{1/2-a}\big[\bar\calp\calz_n^*\bar\K_{n\lambda}^{-1}(\y-\calz_n\omega)\big]\big\|_\calh}_{\mathcal{T}_v: \text{ Variance}} + \underbrace{\big\|\calc^{1/2-a}\big[\bar\calp\calz_n^*\bar\K_{n\lambda}^{-1}\calz_n\omega - \omega\big]\big\|_\calh}_{\mathcal{T}_b: \text{ Bias}}. \label{e:risk_3}
\end{align}
We upper bound the bias and variance terms separately, considering the variance term first:
\begin{align*}
    \mathcal{T}_v &=\big\|\calc^{1/2-a}\big[\bar\calp\calz_n^*\bar\K_{n\lambda}^{-1}(\y-\calz_n\omega)\big]\big\|_\calh\\
 &=\big\|\calc^{1/2-a}\calc_\lambda^{a-1/2}\calc_\lambda^{1/2-a}\calc_{n\lambda}^{a-1/2}\calc_{n\lambda}^{1/2-a}\big[\bar\calp\calz_n^*\bar\K_{n\lambda}^{-1}(\y-\calz_n\omega)\big]\big\|_\calh\\
 &\leq \big\|\calc^{1/2-a}\calc_\lambda^{a-1/2}\big\|\cdot\big\|\calc_\lambda^{1/2-a}\calc_{n\lambda}^{a-1/2}\big\|\cdot\big\|\calc_{n\lambda}^{-a}\big\|\cdot\big\|\calc_{n\lambda}^{1/2}\big[\bar\calp\calz_n^*\bar\K_{n\lambda}^{-1}(\y-\calz_n\omega)\big]\big\|_\calh.
\end{align*}
We have $\big\|\calc^{1/2-a}\calc_\lambda^{a-1/2}\big\| \leq 1$, and $\big\|\calc_\lambda^{1/2-a}\calc_{n\lambda}^{a-1/2}\big\| \leq \big\|\calc_{\lambda}^{1/2}\calc_{n\lambda}^{-1/2}\big\|^{1-2a} \leq 2$ by using Lemma \ref{l:cordes} and Lemma \ref{l:rudi_15_1}. Therefore we get,
\begin{align*}
    \mathcal{T}_v &\leq 2\lambda^{-a}\big\|\calc_{n\lambda}^{1/2}\big[\bar\calp\calz_n^*\bar\K_{n\lambda}^{-1}(\y-\calz_n\omega)\big]\big\|_\calh\\
    &=2\lambda^{-a}\big\|\calc_{n\lambda}^{1/2}\big[\bar\calp^{1/2}\bar\calp^{1/2}\calz_n^*(\calz_n\bar\calp^{1/2}\bar\calp^{1/2}\calz_n^*+n\lambda\I)^{-1}(\y-\calz_n\omega)\big]\big\|_\calh.
\end{align*}
In the last equality we substituted $\bar\K_{n\lambda}^{-1} = (\calz_n\bar\calp\calz_n^*+n\lambda\I)^{-1}$. Using push-through identity we have $\bar\calp^{1/2}\calz_n^*(\calz_n\bar\calp^{1/2}\bar\calp^{1/2}\calz_n^*+n\lambda\I)^{-1} = (\bar\calp^{1/2}\calz_n^*\calz_n\bar\calp^{1/2} +n\lambda\mathcal{I})^{-1}\bar\calp^{1/2}\calz_n^*$. Substituting this in the last equality we have,
\begin{align*}
     \mathcal{T}_v &\leq 2\lambda^{-a}\big\|\calc_{n\lambda}^{1/2}\bar\calp^{1/2}(\bar\calp^{1/2}\calz_n^*\calz_n\bar\calp^{1/2} +n\lambda\mathcal{I})^{-1}\calp^{1/2}\big[\calz_n^*\y - \calz_n^*\calz_n\omega\big]\big\|_\calh\\
     &=2\lambda^{-a}\big\|\calc_{n\lambda}^{1/2}\bar\calp^{1/2}(\bar\calp^{1/2}\calc_n\bar\calp^{1/2} +\lambda\mathcal{I})^{-1}\calp^{1/2}\big[\cals_n^*\hat\y - \cals_n\omega\big]\big\|_\calh\\
     &\leq 2\lambda^{-a}\big\|\calc_{n\lambda}^{1/2}\bar\calp^{1/2}(\bar\calp^{1/2}\calc_n\bar\calp^{1/2} +\lambda\mathcal{I})^{-1}\calp^{1/2}\calc_{n\lambda}^{1/2}\big\|\cdot\underbrace{\big\|\calc_{n\lambda}^{-1/2}\big[\cals_n^*\hat\y - \calc_n\omega\big]\big\|_\calh}_{\text{sample variance}}.
\end{align*}
The second equality holds because $\calz_n = \sqrt{n}\cals_n$ and $\cals_n^*\cals_n=\calc_n$ and $\hat\y = \y/\sqrt{n}$.
We use Lemma \ref{l:sample_variance} to upper bound the term $\big\|\calc_{n\lambda}^{-1/2}\big[\cals_n^*\hat\y - \calc_n\omega^\lambda\big]\big\|_\calh$. The remaining multiplicative term can be upper bounded by $1$ as 
\begin{align*}
    \bar\calp^{1/2}\calc_n\bar\calp^{1/2} +\lambda\mathcal{I} &\succeq  \bar\calp^{1/2}\calc_{n\lambda}\bar\calp^{1/2} = \bar\calp^{1/2}\calc_{n\lambda}^{1/2}\calc_{n\lambda}^{1/2}\bar\calp^{1/2}\\
    \Rightarrow (\bar\calp^{1/2}\calc_n\bar\calp^{1/2} +\lambda\mathcal{I})^{-1} &\preceq (\bar\calp^{1/2}\calc_{n\lambda}^{1/2}\calc_{n\lambda}^{1/2}\bar\calp^{1/2})^\dagger = (\calc_{n\lambda}^{1/2}\bar\calp^{1/2})^\dagger(\bar\calp^{1/2}\calc_{n\lambda}^{1/2})^\dagger.
\end{align*}
Therefore,
\begin{align*}
    \big\|\calc_{n\lambda}^{1/2}\bar\calp^{1/2}(\bar\calp^{1/2}\calc_n\bar\calp^{1/2} +\lambda\mathcal{I})^{-1}\bar\calp^{1/2}\calc_{n\lambda}^{1/2}\big\| \leq \|\calc_{n\lambda}^{1/2}\bar\calp^{1/2}(\calc_{n\lambda}^{1/2}\bar\calp^{1/2})^\dagger(\bar\calp^{1/2}\calc_{n\lambda}^{1/2})^\dagger\bar\calp^{1/2}\calc_{n\lambda}^{1/2}\| \leq 1.
\end{align*}
The upperbound on the variance term becomes
\begin{align}
    \mathcal{T}_v \leq \lambda^{-a}\cdot  O\Big(\lambda^\zeta + \frac{1}{n\lambda^{1-\zeta}} + \frac{1}{\sqrt{n\lambda^{\gamma}}} \Big)\log(1/\delta). \label{e:risk_4}
\end{align}
We now upper bound the bias term:
\begin{align*}
    \mathcal{T}_b&=\big\|\calc^{1/2-a}\big(\bar\calp\calz_n^*\bar\K_{n\lambda}^{-1}\calz_n- \mathcal{I}\big)\omega\big\|_\calh\\
    &=\big\|\calc^{1/2-a}\big(\bar\calp\calz_n^*\bar\K_{n\lambda}^{-1}\calz_n- \mathcal{I}\big)\calc_{\lambda}^{-1}\cals_\rho^*f_\calh\big\|_\calh\\
    &=\big\|\calc^{1/2-a}\big(\bar\calp\calz_n^*\bar\K_{n\lambda}^{-1}\calz_n- \mathcal{I}\big)\calc_{\lambda}^{-1}\cals_\rho^*\call^\zeta g\big\|_\calh.
\end{align*}
The last equality holds due to the source assumption \ref{as:4}. Using $\cals_\rho^*\call^\zeta = \cals_\rho^*(\cals_\rho\cals_\rho^*)^\zeta = (\cals_\rho^*\cals_\rho)^\zeta\cals_\rho^* = \calc^\zeta\cals_\rho^*g$, we get
\begin{align*}
    \mathcal{T}_b &= \big\|\calc^{1/2-a}\big(\bar\calp\calz_n^*\bar\K_{n\lambda}^{-1}\calz_n- \mathcal{I}\big)\calc_{\lambda}^{-1}\calc^\zeta\cals_\rho^*g \big\|_\calh\\
    &\leq  \big\|\calc^{1/2-a}\big(\bar\calp\calz_n^*\bar\K_{n\lambda}^{-1}\calz_n- \mathcal{I}\big)\calc_{\lambda}^{-1}\calc^\zeta\cals_\rho^*\big\|\cdot\big\|g\big\|_\rho\\
    &\leq R\big\|\calc^{1/2-a}\big(\bar\calp\calz_n^*\bar\K_{n\lambda}^{-1}\calz_n- \mathcal{I}\big)\calc_{\lambda}^{-1}\calc^\zeta\calc^{1/2}\big\|\\
    &\leq R\big\|\calc_\lambda^{1/2-a}\big(\bar\calp\calz_n^*\bar\K_{n\lambda}^{-1}\calz_n- \mathcal{I}\big)\calc_{\lambda}^{\zeta-1/2}\big\|.
\end{align*}
In the second to last inequality we used $\big\|g\big\|_\rho \leq R$ and in the last inequality we used $\calc^{\zeta+1/2} \preceq \calc_\lambda^{\zeta+1/2}$, and $\calc^{1/2-a} \preceq \calc_\lambda^{1/2-a}$ as $a \leq \zeta < \frac{1}{2}$. Continuing we get,
\begin{align*}
    \mathcal{T}_b &\leq R\big\|\calc_\lambda^{-a}\big\|\cdot\big\|\calc_\lambda^{1/2}\calc_{n\lambda}^{-1/2}\big\|\cdot\big\|\calc_{n\lambda}^{-1/2}\big\|\cdot\big\|\calc_{n\lambda}(\bar\calp\calz_n^*\bar\K_{n\lambda}^{-1}\calz_n-\mathcal{I})\big\|\cdot\|\calc_{\lambda}^{\zeta-1/2}\|\\
    &\leq \frac{2R\lambda^{-a}\lambda^\zeta}{\lambda}\big\|\calc_{n\lambda}(\bar\calp\calz_n^*\bar\K_{n\lambda}^{-1}\calz_n-\mathcal{I})\big\|\\
    &=\frac{2R\lambda^{-a}\lambda^\zeta}{n\lambda}\big\|(\calz_n^*\calz_n+n\lambda\I)(\bar\calp\calz_n^*\bar\K_{n\lambda}^{-1}\calz_n-\mathcal{I})\big\|\\
    &=\frac{2R\lambda^{-a}\lambda^\zeta}{n\lambda}\big\|(\calz_n^*\calz_n+n\lambda\I)(\calz_n^\dagger\calz_n\bar\calp\calz_n^*\bar\K_{n\lambda}^{-1}\calz_n-\mathcal{I})\big\|\\
    &=\frac{2R\lambda^{-a}\lambda^\zeta}{n\lambda}\big\|(\calz_n^*\calz_n+n\lambda\mathcal{I})(\calz_n^\dagger\bar\K\bar\K_{n\lambda}^{-1}\calz_n-\mathcal{I})\big\|,
\end{align*}
where we used the observation that $\calz_n^\dagger\calz_n$ is a projection on $\calh_n$, implying $\calz_n^\dagger\calz_n\bar\calp = \bar\calp$. Also we substituted $\calz_n\bar\calp\calz_n^* = \bar\K.$ Multiplying the terms we get,
\begin{align*}
    \mathcal{T}_b &\leq \frac{2R\lambda^{-a}\lambda^\zeta}{n\lambda}\big\|\calz_n^*\bar\K\bar\K_{n\lambda}^{-1}\calz_n + n\lambda\calz_n^\dagger\bar\K\bar\K_{n\lambda}^{-1}\calz_n-\calz_n^*\calz_n-n\lambda\mathcal{I}\big\|\\
    &\leq \frac{2R\lambda^{-a}\lambda^\zeta}{n\lambda}\Big(\|\calz_n^*(\bar\K\bar\K_{n\lambda}^{-1}-\I)\calz_n + n\lambda\calz_n^\dagger\bar\K\bar\K_{n\lambda}^{-1}\calz_n -n\lambda\mathcal{I}\|\Big).
\end{align*}
The terms $\big\|\calz_n^*(\bar\K\bar\K_{n\lambda}^{-1}-\I)\calz_n\big\|$ and $\big\|\calz_n^\dagger\bar\K\bar\K_{n\lambda}^{-1}\calz_n\big\|$ can be upper bounded separately as
\begin{align*}
    \big\|\calz_n^*(\bar\K\bar\K_{n\lambda}^{-1}-\I)\calz_n\big\| = n\lambda\|\calz_n^*\bar\K_{n\lambda}^{-1}\calz_n\| =n\lambda\|\K^{1/2}\bar\K_{n\lambda}^{-1}\K^{1/2}\| &\leq n\lambda\|\K_{n\lambda}^{1/2}\bar\K_{n\lambda}^{-1}\K_{n\lambda}^{1/2}\|
\end{align*}
where we used $\calz_n\calz_n^*=\K$. In second term substitute $\bar\K=\calz_n\bar\calp\calz_n^*$ to get:
\begin{align*}
    \|n\lambda\calz_n^\dagger\bar\K\bar\K_{n\lambda}^{-1}\calz_n\| &= n\lambda\|\calz_n^\dagger\calz_n\bar\calp\calz_n^*\bar\K_{n\lambda}^{-1}\calz_n\| \leq n\lambda\|\calz_n^\dagger\calz_n\|\cdot\|\bar\calp\|\cdot\|\calz_n^*\bar\K_{n\lambda}^{-1}\calz_n\|\\
    &\leq n\lambda\|\calz_n^*\bar\K_{n\lambda}^{-1}\calz_n\| \leq n\lambda\|\K_{n\lambda}^{1/2}\bar\K_{n\lambda}^{-1}\bar\K_{n\lambda}^{1/2}\|.
\end{align*}
Combining these upperbounds we get,
\begin{align}
    \mathcal{T}_b \leq 2R\lambda^{-a}\lambda^\zeta\big(1 + \|\K_{n\lambda}^{1/2}\bar\K_{n\lambda}^{-1}\K_{n\lambda}^{1/2}\|\big). \label{e:risk_5}
\end{align}
Combining (\ref{e:risk_5}), (\ref{e:risk_4}), (\ref{e:risk_3}), (\ref{e:risk_2}), and $(\ref{e:risk_1})$, we conclude
\begin{align*}
    \big\|\call^{-a}(\cals_\rho\omega_{[q]}- f_\calh)\big\|_\rho \leq   \lambda^{-a}\cdot O\Big(\lambda^\zeta\big\|\bar\K_{n\lambda}^{1/2}\bar\K_{n\lambda}^{-1}\bar\K_{n\lambda}^{1/2}\big\| + \frac{1}{n\lambda^{1-\zeta}} + \frac{1}{\sqrt{n\lambda^{\gamma}}} \Big)\log(1/\delta).
\end{align*}
\end{proof}
Using Theorem \ref{t:main_stat_risk_2} we prove the following two corollaries for classical Nystr\"om and Block Nystr\"om KRR respectively, recovering the optimal generalization error up to a multiplicative factor of $\alpha$.

For the following corollary, consider $\lambda' =\alpha\lambda$ and construct a Nystr\"om approximation to $\K$ by sampling landmarks from $n\lambda'$-ridge leverage scores of $\K$, but still adding $n\lambda$ as the regularizer. We refer to this as classical Nystr\"om.
\begin{corollary}[Statistical risk of KRR with classical Nystr\"om]\label{c:single_nys_risk}
      Let $0\leq a \leq \zeta$, $\zeta < \frac{1}{2}$, $n \geq O(G^2\log(G^2/\delta))$, $\lambda^* = \tilde O(n^{-\frac{1}{2\zeta+\gamma}})$ if $2\zeta+\gamma >1$, and $\lambda^* = \tilde O(n^{-1})$ if $2\zeta+\gamma \leq 1$. Let $\lambda' = \alpha\lambda^* < 1$ for some $\alpha>1$. Then under assumptions \ref{as:1}-\ref{as:4} and for any $\delta>0$
      \begin{align*}
    \big\|\call^{-a}(\cals_\rho\hat\omega- f_\calh)\big\|_\rho \leq \begin{cases} \alpha\cdot \tilde O(n^{-\frac{\zeta-a}{2\zeta+\gamma}})  \ \text{if} \ \ 2\zeta+\gamma>1 \\
    \alpha\cdot \tilde O(n^{-(\zeta-a)}) \ \ \text{if} \ 2\zeta+\gamma \leq 1
\end{cases}
\end{align*}
with probability $1-\delta$. Here $\hat\omega$ denotes the hypothesis $\hat\calp\calz_n^*(\hat\K+n\lambda^*\I)^{-1}\y$, $\hat\calp$ is orthogonal projection onto the $m$ dimensional subspace spanned by Nystr\"om landmarks sampled using $O(1)$-approximate $n\lambda'$-ridge leverage scores of $\K$ and $\hat\K$ denotes the corresponding Nystr\"om approximation to $\K$.
\end{corollary}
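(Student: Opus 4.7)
The plan is to specialize Theorem~\ref{t:main_stat_risk_2} to the single classical Nystr\"om projection sampled at a coarser threshold than the regularizer. Concretely, I would take $q=1$ and let $\hat\calp$ be the orthogonal projection onto the subspace spanned by $\tilde O(d_{n\lambda'}(\K))$ landmarks drawn via $O(1)$-approximate $n\lambda'$-ridge leverage scores of $\K$; this $\hat\calp$ trivially satisfies Assumption~\ref{as:5}. Applying Theorem~\ref{t:main_stat_risk_2} with $\lambda=\lambda^*$ then gives
\begin{align*}
\bigl\|\call^{-a}(\cals_\rho\hat\omega-f_\calh)\bigr\|_\rho
\leq \lambda^{*-a}\cdot\tilde O\Bigl(R\lambda^{*\zeta}\bigl\|\K_{n\lambda^*}^{1/2}(\hat\K+n\lambda^*\I)^{-1}\K_{n\lambda^*}^{1/2}\bigr\|+\frac{1}{n\lambda^{*(1-\zeta)}}+\frac{1}{\sqrt{n\lambda^{*\gamma}}}\Bigr),
\end{align*}
so the task reduces to bounding the spectral ratio and plugging in the prescribed $\lambda^*$.

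For the spectral factor, I would invoke Lemma~\ref{l:musco-approx}: with $\tilde O(d_{n\lambda'}(\K))$ landmarks sampled from $O(1)$-approximate $n\lambda'$-ridge leverage scores of $\K$, with high probability $\|\K-\hat\K\|\leq n\lambda'$, hence $\K\preceq \hat\K+n\lambda'\I$. Adding $n\lambda^*\I$ to both sides and using $\hat\K\succeq0$ with $\lambda'=\alpha\lambda^*$ and $\alpha\geq 1$ yields
\begin{align*}
\K+n\lambda^*\I\;\preceq\;\hat\K+n(\lambda'+\lambda^*)\I\;\preceq\;(1+\alpha)(\hat\K+n\lambda^*\I)\;\preceq\;2\alpha(\hat\K+n\lambda^*\I),
\end{align*}
so that $\|\K_{n\lambda^*}^{1/2}(\hat\K+n\lambda^*\I)^{-1}\K_{n\lambda^*}^{1/2}\|\leq 2\alpha$. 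This is the single place where the slack $\alpha$ enters the final bound.

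Substituting this back, the risk bound becomes $\lambda^{*-a}\cdot \tilde O\!\left(\alpha R\lambda^{*\zeta}+\tfrac{1}{n\lambda^{*(1-\zeta)}}+\tfrac{1}{\sqrt{n\lambda^{*\gamma}}}\right)$, and I would then split into the two regimes. When $2\zeta+\gamma>1$, setting $\lambda^*=\tilde O(n^{-1/(2\zeta+\gamma)})$ equates $\lambda^{*\zeta}$ and $1/\sqrt{n\lambda^{*\gamma}}$ at $\tilde O(n^{-\zeta/(2\zeta+\gamma)})$, while $1/(n\lambda^{*(1-\zeta)})=n^{-(3\zeta+\gamma-1)/(2\zeta+\gamma)}$ is dominated because $2\zeta+\gamma\ge 1$; multiplication by $\lambda^{*-a}$ produces the claimed $\alpha\cdot\tilde O(n^{-(\zeta-a)/(2\zeta+\gamma)})$. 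When $2\zeta+\gamma\leq 1$, setting $\lambda^*=\tilde O(n^{-1})$ gives $\lambda^{*\zeta}=n^{-\zeta}$, $1/(n\lambda^{*(1-\zeta)})=n^{-\zeta}$, and $1/\sqrt{n\lambda^{*\gamma}}=n^{-(1-\gamma)/2}\leq n^{-\zeta}$; multiplication by $\lambda^{*-a}=n^a$ yields $\alpha\cdot\tilde O(n^{-(\zeta-a)})$.

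The main obstacle is conceptual rather than technical: Theorem~\ref{t:main_stat_risk_2} is expressed through the regularized spectral comparison $\|\K_{n\lambda}^{1/2}(\hat\K+n\lambda\I)^{-1}\K_{n\lambda}^{1/2}\|$ at the regularizer $n\lambda^*$, while the sketching quality from Lemma~\ref{l:musco-approx} is only controlled at the coarser threshold $n\lambda'=\alpha\cdot n\lambda^*$, so one must carefully track the two thresholds and exploit the additive structure $\K+n\lambda^*\I\preceq \hat\K+n\lambda'\I+n\lambda^*\I$ to absorb the gap into a clean $O(\alpha)$ multiplicative factor; everything after that is algebra on exponents.
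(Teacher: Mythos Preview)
Your proposal is correct and follows essentially the same approach as the paper: both apply Theorem~\ref{t:main_stat_risk_2} with $\lambda=\lambda^*$, invoke Lemma~\ref{l:musco-approx} at the coarser threshold $n\lambda'$ to obtain the Loewner chain $\K+n\lambda^*\I\preceq 2\alpha(\hat\K+n\lambda^*\I)$, and then substitute. Your Loewner chain $\K+n\lambda^*\I\preceq \hat\K+n(\lambda'+\lambda^*)\I\preceq (1+\alpha)(\hat\K+n\lambda^*\I)$ is a minor rearrangement of the paper's $\K+n\lambda^*\I\preceq \K+n\lambda'\I\preceq 2(\hat\K+n\lambda'\I)\preceq\frac{2\lambda'}{\lambda^*}(\hat\K+n\lambda^*\I)$, and you additionally spell out the exponent arithmetic in both regimes, which the paper leaves implicit.
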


\begin{proof}
    Let $\lambda^*$ be the optimal regularizer depending on the case if $2\zeta+\gamma>1$ or $2\zeta+\gamma \leq 1$. Consider $\lambda' = \alpha\lambda^*$. We have with probability $1-\delta$
    \begin{align*}
        \K+n\lambda^*\I \preceq \K+n\lambda'\I \preceq 2(\hat\K +n\lambda'\I) \preceq \frac{2\lambda'}{\lambda^*}(\hat\K+n\lambda^*\I) \preceq 2\alpha(\K+n\lambda^*\I)
    \end{align*}
    implying $\|(\K+n\lambda^*\I)^{1/2}\big(\hat\K+n\lambda^*\I\big)^{-1}(\K+n\lambda^*\I)^{1/2}\| \leq 2\alpha$. Substituting this in the statement of Theorem \ref{t:main_stat_risk_2} finishes the proof.
\end{proof}
We provide an extended version of Corollary \ref{c:avg_nys_risk}. The following extension is useful when the excess risk factor $\alpha$ is potentially quite large and the resulting linear system $(\hat\K_{[q]} +n\lambda^{*}\I)\w=\y$ is more expensive to solve than sampling Nystr\"om landmarks from $\alpha^2\lambda^*$-ridge leverage scores of $\K$, due to large number of blocks as $q=\tilde O(\alpha)$. To recover corollary \ref{c:avg_nys_risk} one can consider $\beta=\alpha$ in the following result.
\begin{corollary}[Expected risk of Block-Nystr\"om-KRR]\label{c:avg_nys_risk_2}
      Under Assumptions \ref{as:1}, \ref{as:2}, \ref{as:3} and \ref{as:4}, suppose that $\zeta<1/2$ and $n \geq \tilde O(G^2\log(G^2/\delta))$. Also, let $\lambda^* = \tilde O(n^{-\frac{1}{2\zeta+\gamma}})$ if $2\zeta+\gamma >1$ and $\lambda^* = \tilde O(n^{-1})$ if $2\zeta+\gamma \leq 1$. Consider Block-Nystr\"om-KRR, $\hat\omega_{[q]}=\calp_{[q]}\calz_n^*(\hat\K_{[q]}+n\lambda^*\I)^{-1}\y$, constructed using $q=\tilde O(\beta)$ blocks, each with $\tilde O(d_{n\lambda'}(\K))$ leverage score samples where $1\leq \beta \leq \alpha$ and $\lambda'=\beta\alpha\lambda^*$. Then,
      \begin{align*}
    \big\|\cals_\rho\hat\omega_{[q]}- f_\calh\big\|_\rho \leq \begin{cases} \alpha\cdot \tilde O(n^{-\frac{\zeta}{2\zeta+\gamma}})  &\text{if} \ \ 2\zeta+\gamma>1,\\
    \alpha\cdot \tilde O(n^{-\zeta}) &\text{if} \  \ 2\zeta+\gamma \leq 1.
\end{cases}
\end{align*}
\end{corollary}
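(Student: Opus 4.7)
The plan is to combine two ingredients already established in the paper: the generalized approximate-KRR risk bound in Theorem \ref{t:main_stat_risk_2}, which applies to any $\hat\calp$ satisfying Assumption \ref{as:5}, and the Block-Nystr\"om spectral approximation from Theorem \ref{t:Nys_spectral_approximation_2}. The structure mirrors the proof of Corollary \ref{c:avg_nys_risk}, but with the pair $(q,\lambda')$ replaced by $(\tilde O(\beta),\, \beta\alpha\lambda^*)$ and using $\beta\leq\alpha$ at the end to collapse the approximation factor.

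First, I will verify that $\hat\calp_{[q]}=\frac{1}{q}\sum_i \hat\calp_i$ satisfies Assumption \ref{as:5}: each $\hat\calp_i$ is an orthogonal projection onto a subspace of $\calh_n$, so $0\preceq\hat\calp_i\preceq\mathcal{I}$ and $\calp\hat\calp_i=\hat\calp_i$, and both properties survive convex combination. Then I will apply Theorem \ref{t:Nys_spectral_approximation_2} with $\A=\K$, $\lambda=n\lambda^*$, and $\lambda'=n\beta\alpha\lambda^*$ to conclude that, with high probability,
\begin{align*}
\big\|\K_{n\lambda^*}^{1/2}(\hat\K_{[q]}+n\lambda^*\I)^{-1}\K_{n\lambda^*}^{1/2}\big\| \;\leq\; 16\sqrt{\lambda'/(n\lambda^*)}\cdot\sqrt{1/n}^{\,0} \;=\; 16\sqrt{\beta\alpha} \;\leq\; 16\alpha,
\end{align*}
where the last inequality uses the hypothesis $\beta\leq\alpha$.

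Next, I will feed this spectral bound into Theorem \ref{t:main_stat_risk_2} with $a=0$ and $\lambda=\lambda^*$, yielding
\begin{align*}
\big\|\cals_\rho\hat\omega_{[q]}-f_\calh\big\|_\rho \;\leq\; \tilde O\!\left(\alpha R(\lambda^*)^\zeta + \tfrac{1}{n(\lambda^*)^{1-\zeta}} + \tfrac{1}{\sqrt{n(\lambda^*)^\gamma}}\right).
\end{align*}
The final step is the routine case analysis in $\lambda^*$. When $2\zeta+\gamma>1$, setting $\lambda^*=n^{-1/(2\zeta+\gamma)}$ equates the bias and capacity terms at order $n^{-\zeta/(2\zeta+\gamma)}$, while the middle term $n^{-1+(1-\zeta)/(2\zeta+\gamma)}$ is of smaller order whenever $2\zeta+\gamma>1$; this gives the rate $\alpha\cdot\tilde O(n^{-\zeta/(2\zeta+\gamma)})$. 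When $2\zeta+\gamma\leq 1$, taking $\lambda^*=n^{-1}$ makes the first two terms equal to $n^{-\zeta}$, while the capacity term is $n^{-(1-\gamma)/2}\leq n^{-\zeta}$ because $\gamma\leq 1-2\zeta$, giving the rate $\alpha\cdot\tilde O(n^{-\zeta})$.

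The main obstacle I anticipate is the bookkeeping around the Block-Nystr\"om block count: a literal reading of Theorem \ref{t:Nys_spectral_approximation_2} demands $q\gtrsim\sqrt{\beta\alpha}\log(n/\delta)$ to certify the $\sqrt{\beta\alpha}$-factor spectral approximation, whereas the corollary only allocates $q=\tilde O(\beta)$. Closing this gap will require either (i) observing that when $\beta=\alpha$ the two budgets coincide and the claim reduces to Corollary \ref{c:avg_nys_risk}, or (ii) revisiting the matrix-Chernoff step in Appendix \ref{a:concentration} to confirm that the combination of larger per-block sample size (since $d_{n\beta\alpha\lambda^*}(\K)$ is bigger than $d_{n\alpha^2\lambda^*}(\K)$ when $\beta<\alpha$) together with the regularization lift to $n\lambda'$ is what actually enforces the minimum-eigenvalue lower bound of order $\sqrt{\lambda^*/\lambda'}=1/\sqrt{\beta\alpha}$, so that a smaller block count suffices. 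All other elements of the proof are direct substitutions into previously established inequalities.
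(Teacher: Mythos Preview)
Your proposal correctly identifies all the moving parts and the one genuine obstacle, but the fix you sketch in (ii) is not the one that actually closes the gap. The paper does not rely on larger per-block sample size; instead it introduces an \emph{intermediate regularizer} $\tilde\lambda=\frac{\alpha}{\beta}\lambda^*$ and applies Theorem~\ref{t:Nys_spectral_approximation_2} at level $n\tilde\lambda$ rather than $n\lambda^*$. Concretely, the paper chains
\[
\K+n\lambda^*\I\;\preceq\;\K+n\tilde\lambda\I\;\preceq\;8\sqrt{\tfrac{\lambda'}{\tilde\lambda}}\,(\hat\K_{[q]}+n\tilde\lambda\I)\;\preceq\;8\sqrt{\tfrac{\lambda'}{\tilde\lambda}}\cdot\tfrac{\tilde\lambda}{\lambda^*}\,(\hat\K_{[q]}+n\lambda^*\I),
\]
where the second inequality is Theorem~\ref{t:Nys_spectral_approximation_2} applied with ratio $\lambda'/\tilde\lambda=\beta\alpha\lambda^*/(\alpha\lambda^*/\beta)=\beta^2$. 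This requires only $q>O(\beta\log(n/\delta))$ blocks, exactly matching the corollary's budget. The total approximation factor is then $\sqrt{\lambda'/\tilde\lambda}\cdot\tilde\lambda/\lambda^*=\beta\cdot\alpha/\beta=\alpha$, and the rest of your argument (plugging into Theorem~\ref{t:main_stat_risk_2} with $a=0$ and the case analysis on $\lambda^*$) goes through verbatim.

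So the missing idea is simple but essential: split the factor $\alpha$ as $\beta\cdot(\alpha/\beta)$, pay the first factor via Block-Nystr\"om concentration at the lifted regularizer $n\tilde\lambda$ (this is what buys the cheaper block count), and pay the second factor trivially via the Loewner inequality $\hat\K_{[q]}+n\tilde\lambda\I\preceq\frac{\tilde\lambda}{\lambda^*}(\hat\K_{[q]}+n\lambda^*\I)$. Your direct application of Theorem~\ref{t:Nys_spectral_approximation_2} at $n\lambda^*$ cannot work as stated, because its Chernoff step genuinely needs $q\gtrsim\sqrt{\lambda'/\lambda}$ and that ratio is $\sqrt{\beta\alpha}$, not $\beta$.
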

\begin{proof}\textbf{of Corollary \ref{c:avg_nys_risk_2}.}
    Let $\lambda^*$ be the optimal regularizer depending on the case if $2\zeta+\gamma>1$ or $2\zeta+\gamma \leq 1$. Consider $\tilde\lambda= \frac{\alpha}{\beta}\lambda^*$ and $\lambda' = \beta\alpha \lambda^*$. We have with probability $1-\delta$,
    \begin{align*}
        \K+n\lambda^*\I \preceq \K+n\tilde\lambda\I \preceq 8 \sqrt{\frac{\lambda'}{\tilde\lambda}}\Big(\frac{1}{q}\sum_{i=1}^{q}{\hat\K_i +n\tilde\lambda\I}\Big) \preceq 8 \sqrt{\frac{\lambda'}{\tilde\lambda}}\cdot\frac{\tilde\lambda}{\lambda^*}\Big(\frac{1}{q}\sum_{i=1}^{q}{\hat\K_i +n\lambda^*\I}\Big) \preceq 8\alpha(\K+n\lambda^*\I)
    \end{align*}
    For $q > O(\beta\log(n/\delta))$. In the second inequality, we used Theorem \ref{t:Nys_spectral_approximation_2}. Therefore,
    \begin{align*}
       (\K+n\lambda^*\I)^{-1} \preceq \Big(\frac{1}{q}\sum_{i=1}^{q}{\hat\K_i +n\lambda^*\I}\Big)^{-1} \preceq 8\alpha(\K+n\lambda^*\I)^{-1}
    \end{align*}
    implying $\Big\|(\K+n\lambda^*\I)^{1/2}\Big(\frac{1}{q}\sum_{i=1}^{q}{\hat\K_i +n\lambda^*\I}\Big)^{-1}(\K+n\lambda^*\I)^{1/2}\Big\| < 8\alpha$. Substituting this in the statement of Theorem \ref{t:main_stat_risk_2} finishes the proof.
\end{proof}
In the following lemma, we derive explicit expressions for the time complexity of classical Nystr\"om KRR for any $\alpha \geq 1$. In particular, we sample landmarks using $n\lambda'$-ridge leverage scores of $\K$ where $\lambda'=\alpha\lambda^*$, obtaining the Nystr\"om approximation $\hat\K$. We then add $n\lambda^*\I$ to $\hat \K$, obtaining an $n\lambda^*$-regularized $\alpha$-approximation to $\K$.
For better intuition, it is helpful to think of $\alpha$ as $n^\theta$ for small $\theta>0$. In particular, the form of the exponent $\theta$ changes as the regime changes from $2\zeta+\gamma>1$ to $2\zeta+\gamma \leq 1$. We capture this regime change by considering $\theta= \mu/\max\{1,2\zeta+\gamma\}$. We consider $a=0$ to obtain the time complexity for obtaining $\alpha$ multiplicative bound for the expected risk.
\begin{lemma}[Cost analysis of KRR with classical Nystr\"om]\label{l:cost_single_Nystrom}
Let $a=0$ and $\zeta < \frac{1}{2}$. Let $\lambda' = \alpha\lambda^*$, where $\lambda^* = \tilde O(n^{-\frac{1}{2\zeta+\gamma}})$ and $\alpha = n^{\frac{\mu}{2\zeta+\gamma}}$ if $2\zeta+\gamma >1$, and $\lambda^* = \tilde O(n^{-1})$ and $\alpha = n^{\mu}$ if $2\zeta+\gamma <1$. Furthermore, let $\mu \leq \zeta$. Then the total time complexity of classical Nystr\"om KRR in regime of $2\zeta+\gamma >1$ is given as:
 \begin{align*} \mathcal{T}_{nys}=\begin{cases} 
        \tilde O\Big(n^{\frac{(1+2\gamma)(1-\mu)}{2\zeta+\gamma}}\Big)
         \ \text{ if  $\mu \leq \frac{1-2\zeta}{1+\gamma}$}\\
        \tilde O\Big(n^{\frac{2\zeta +\gamma(2-\mu)}{2\zeta+\gamma}}\Big) \ \ \ \text{ otherwise}.
    \end{cases}
    \end{align*}
and if $2\zeta+\gamma \leq 1$,
      \begin{align*}
       \mathcal{T}_{nys}=
    \begin{cases}
       \tilde O\Big(n^{(1+2\gamma)(1-\mu)}\Big) \text{if} \ \mu \leq \frac{\gamma}{1+\gamma},\\
        \tilde O\Big(n^{1+\gamma(1-\mu)}\Big) \text {otherwise.} 
    \end{cases}
    \end{align*}
\end{lemma}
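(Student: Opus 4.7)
The plan is to express $\mathcal{T}_{nys}$ as the sum of (i) the cost of approximate leverage-score sampling and (ii) the cost of applying the Nystr\"om inverse $(\hat\K + n\lambda^*\I)^{-1}$ to $\y$, write each contribution as a monomial in $\lambda'$, and then determine by case analysis which term dominates as $\mu$ varies.

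First I will count landmarks. The construction uses $m = \tilde O(d_{n\lambda'}(\K))$ samples from $O(1)$-approximate $n\lambda'$-ridge leverage scores of $\K$. Combining the standard kernel effective-dimension bound $d_{n\lambda'}(\K) \leq 3\mathcal{N}(\lambda')$ (Proposition~1 of \cite{rudi2015less}, cited among the auxiliary lemmas) with Assumption~\ref{as:4} gives $m = \tilde O((\lambda')^{-\gamma})$. Since Assumption~\ref{as:1} ensures $\K_{ii} = \langle\x_i,\x_i\rangle_\calh = O(G^2) = O(1)$, Lemma~\ref{l:rls_bless} applied to $\A = \K$ at regularization $n\lambda'$ samples the landmarks in time at most $T_1 := \tilde O((\lambda')^{-1-2\gamma})$, using $\min(1/(n\lambda'), n) \leq 1/\lambda'$. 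Given these landmarks, Lemma~\ref{l:dmy_Nystrom_solve} with $\psi = n\lambda^*$ computes an $\epsilon$-accurate solution of $(\hat\K + n\lambda^*\I)\w = \y$ in additional time $T_2 + T_3$, where $T_2 := \tilde O(nm) = \tilde O(n(\lambda')^{-\gamma})$ and $T_3 := \tilde O(m^3) = \tilde O((\lambda')^{-3\gamma})$; the condition number involved contributes only hidden logarithmic factors.

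Next I will observe that $T_3$ is never dominant: since $\gamma \leq 1$ by Assumption~\ref{as:4} and $\lambda' \leq 1$ by the lemma hypothesis, $T_3 / T_1 = (\lambda')^{1-\gamma} \leq 1$, so $T_3 \leq T_1$. Hence $\mathcal{T}_{nys} = \tilde O(T_1 + T_2)$, and the transition is governed by the comparison $(\lambda')^{-(1+\gamma)}$ versus $n$. Substituting $\lambda' = \alpha \lambda^*$ in the regime $2\zeta+\gamma > 1$ yields $T_1 \geq T_2$ iff $(1-\mu)(1+\gamma) \geq 2\zeta + \gamma$, i.e., $\mu \leq (1-2\zeta)/(1+\gamma)$; in the regime $2\zeta+\gamma \leq 1$ (where $\lambda^* = n^{-1}$) the same comparison collapses to $T_1 \geq T_2$ iff $(1-\mu)(1+\gamma) \geq 1$, i.e., $\mu \leq \gamma/(1+\gamma)$. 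Plugging the exponent of the dominating term into $T_1$ or $T_2$ respectively gives exactly the four displayed expressions.

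The main technical point is the algebraic verification that the transitions in $\mu$ land precisely at $(1-2\zeta)/(1+\gamma)$ and $\gamma/(1+\gamma)$; once these are established, the proof is routine substitution of the two parameterizations $(\lambda^*, \alpha) = (n^{-1/(2\zeta+\gamma)}, n^{\mu/(2\zeta+\gamma)})$ and $(n^{-1}, n^\mu)$ into $\max(T_1, T_2)$. The hypothesis $\mu \leq \zeta < 1/2$ is used only to ensure $\lambda' \leq 1$, which in turn keeps the $T_3 \leq T_1$ argument valid.
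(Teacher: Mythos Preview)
Your proposal is correct and follows essentially the same route as the paper: decompose the cost into (i) leverage-score sampling $T_1=\tilde O(m^2/\lambda')$, (ii) the $\tilde O(nm)$ solve, and (iii) the $\tilde O(m^3)$ inversion, show $T_3\leq T_1$ (the paper phrases this via ``$\mu\leq\zeta$'' while you do it via $\gamma\leq 1$ and $\lambda'\leq 1$, which is equivalent), and then compare $T_1$ with $T_2$ to locate the threshold $\mu=(1-2\zeta)/(1+\gamma)$ (resp.\ $\gamma/(1+\gamma)$). The only cosmetic difference is that you package $T_2+T_3$ through Lemma~\ref{l:dmy_Nystrom_solve} whereas the paper lists them as separate PCG and submatrix-inversion costs.
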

\begin{proof}
    In the regime of $2\zeta+\gamma >1$, the optimal $\lambda^*$ is $\tilde O(n^{-\frac{1}{2\zeta+\gamma}})$. Let $\lambda' =\alpha\lambda^*$ where $\alpha = n^{\frac{\mu}{2\zeta+\gamma}}$. Let $m=d_{n\lambda'}(\K)$.
    The overall time complexity of classical Nystr\"om KRR is the sum of three costs as follows:
    \begin{itemize}
        \item Landmarks sampling using Lemma \ref{l:rls_bless} : $\tilde O(m^2/\lambda') = \tilde O(\frac{1}{(\lambda')^{1+2\gamma}}\big) = \tilde O(\frac{1}{\alpha^{1+2\gamma}}\cdot n^{\frac{1+2\gamma}{2\zeta+\gamma}}).$
        \item Cost of inverting $\tilde O(m)\times \tilde O(m)$  submatrix of $\K$: $\tilde O(m^3) = \tilde O(\frac{1}{\alpha^{3\gamma}}\cdot n^{\frac{3\gamma}{2\zeta+\gamma}}).$
        \item Cost of solving linear system $(\hat\K+n\lambda^*\I)\w=\y$  using preconditioned conjugate gradient: $ \tilde O(nm) = \tilde O(\frac{1}{\alpha^\gamma}\cdot n^{\frac{2\zeta+2\gamma}{2\zeta+\gamma}})$.
    \end{itemize}
    Under the assumption that $\mu \leq\zeta$, it is easy to show that cost of sampling landmarks is larger than the cost of inverting the principal submatrix. Now comparing the first and third costs it is easy to see that cost of landmarks sampling dominates the cost of solving the linear system as long as $\mu < \frac{1-2\zeta}{1+\gamma}.$
    Therefore overall time complexity is given as
    \begin{align*}\begin{cases} 
        \tilde O\Big(n^{\frac{(1+2\gamma)(1-\mu)}{2\zeta+\gamma}}\Big)
         \ \text{ if  $\mu \leq \frac{1-2\zeta}{1+\gamma}$}\\
        \tilde O\Big(n^{\frac{2\zeta +\gamma(2-\mu)}{2\zeta+\gamma}}\Big) \ \ \ \text{ otherwise}.
    \end{cases}
    \end{align*}
    Note that we always assume $\mu \leq \zeta$.
    Now consider the regime $2\zeta+\gamma <1$, the optimal $\lambda^* = \tilde O(1/n)$ and the corresponding learning rate $\tilde O(n^{-\zeta})$. Let $\lambda' = \alpha\lambda^*$ where now $\alpha = n^{\mu}$ for $0 \leq \mu \leq \zeta$. We have
    \begin{itemize}
        \item Nystr\"om landmarks sampling using Lemma \ref{l:rls_bless}: $\tilde O(m^2/\lambda') = \tilde O\Big(n^{(1+2\gamma)(1-\mu)}\Big).$
        \item  Cost of inverting $\tilde O(m)\times \tilde O(m)$ matrix : $\tilde O(m^3) = \tilde O(n^{3\gamma(1-\mu)})$
        \item Cost of solving linear system $(\hat\K+n\lambda^*\I)\w=\y$  using preconditioned conjugate gradient: $\tilde O(n^{1+\gamma(1-\mu)})$
    \end{itemize}
    The overall cost is then given as
    \begin{align*}
    \begin{cases}
        \tilde O\Big(n^{(1+2\gamma)(1-\mu)}\Big) \text{if} \ \mu \leq \frac{\gamma}{1+\gamma},\\
        \tilde O\Big(n^{1+\gamma(1-\mu)}\Big) \text {otherwise.} 
    \end{cases}
    \end{align*}
    Here also additional constraint is $\mu \leq \zeta$.
\end{proof}
We now provide explicit expressions for the time complexity of Block-Nystr\"om KRR. 
\begin{lemma}[Cost analysis of KRR with Block-Nystr\"om]\label{l:cost_avg_Nystrom}
Let $a=0$, $\zeta<\frac{1}{2}$ and $\lambda^*$ be the optimal regularizer depending on the regime if $2\zeta+\gamma>1$ or $2\zeta+\gamma\leq 1$. Let $\lambda' = \beta\alpha\lambda^*$ for some $1\leq \beta\leq \alpha$ and define $q$ as follows
    \begin{align*}
        q=\begin{cases}
            \lceil200\beta\log(n)/\delta\rceil \ \text{if} \ \beta>1\\
            1 \ \text{if} \ \beta=1.
        \end{cases}
    \end{align*}
If $2\zeta+\gamma>1$, let $\lambda^* = \tilde O(n^{-\frac{1}{2\zeta+\gamma}})$, $\alpha = n^{\frac{\mu}{2\zeta+\gamma}}$ and $\beta=n^{\frac{\nu}{2\zeta+\gamma}}$  where $0\leq \mu \leq \zeta$ and $\nu = \max\Big\{0,\min\big\{\mu,\frac{1-2\zeta-\mu(1.5+\gamma)}{\gamma+\phi+0.5}\big\}\Big\}$. 
Then the total time complexity of Block-Nystr\"om KRR is given as:
   \begin{align*}\mathcal{T}_{Blk}=
        \begin{cases}
            \tilde O\Big(n^{\frac{(1+2\gamma)(1-\mu)-2\gamma\nu}{2\zeta+\gamma}}\Big) \ \text{if} \ \mu < \frac{1-2\zeta}{\gamma+1.5}, \\
         \tilde O\Big(n^{\frac{(1+2\gamma)(1-\mu)}{2\zeta+\gamma}}\Big) \ \ \ \text{if} \ \frac{1-2\zeta}{\gamma+1.5} \leq \mu \leq \frac{1-2\zeta}{\gamma+1},\\
         \tilde O\Big(n^{\frac{2\zeta+\gamma(2-\mu)}{2\zeta+\gamma}}\Big) \text{otherwise}.
        \end{cases}
    \end{align*}
    If $2\zeta+\gamma \leq 1$, let $\lambda^* = \tilde O(n^{-1})$, $\alpha = n^{\mu}$, $\beta=n^{\nu}$ where $0 \leq \mu\leq \zeta$ and  $\nu = \max\Big\{0,\min\big\{\mu, \frac{\gamma-\mu(\gamma+1.5)}{\gamma+\phi+0.5}\big\}\Big\}$. Then the overall cost of Block-Nystr\"om is given as
    \begin{align*}
    \mathcal{T}_{Blk}=
        \begin{cases}
            \tilde O\Big(n^{(1+2\gamma)(1-\mu)-2\gamma\nu}\Big) \ \text{if} \ \mu < \frac{\gamma}{\gamma+1.5} \\
         \tilde O\Big(n^{(1+2\gamma)(1-\mu)}\Big) \ \ \ \text{if} \ \frac{\gamma}{\gamma+1.5} \leq \mu \leq \frac{\gamma}{1+\gamma}\\
         \tilde O\Big(n^{1+\gamma(1-\mu)}\Big) \text{otherwise}
        \end{cases}
    \end{align*}
\end{lemma}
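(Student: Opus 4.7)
I plan to decompose the total runtime of Block-Nystr\"om KRR into three contributions and then optimize the number of blocks $q = \tilde O(\beta)$ (equivalently, the exponent $\nu$) to balance them. The three contributions are (a) running approximate $n\lambda'$-ridge leverage score sampling for each of the $q$ blocks, (b) inverting the $q$ principal submatrices $\K_{S_i,S_i}$ needed to materialize the $\hat\K_i$'s, and (c) approximately solving the linear system $(\hat\K_{[q]}+n\lambda^*\I)\w = \y$ that defines $\hat\omega_{[q]}$.

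For (a), I would apply Lemma \ref{l:rls_bless} once per block with scale $\lambda' = \beta\alpha\lambda^*$, giving total cost $\tilde O(qm^2/\lambda') = \tilde O(\beta(\beta\alpha\lambda^*)^{-(1+2\gamma)})$, using $m = \tilde O((\lambda')^{-\gamma})$ from Assumption \ref{as:4}. For (b), $q$ dense inversions of $m\times m$ matrices cost $\tilde O(qm^3) = \tilde O(\beta(\beta\alpha\lambda^*)^{-3\gamma})$, which is dominated by (a) under the standing constraints $\mu \le \zeta < 1/2$ and $\gamma<1$, since the ratio $(\beta\alpha\lambda^*)^{\gamma-1}$ is then at least $1$. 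For (c), I invoke Theorem \ref{t:linear_system_solve_2} with $\lambda = \lambda^*$, intermediate scale $\tilde\lambda = \alpha\lambda^*/\beta$, and leverage-score parameter $\lambda' = \beta\alpha\lambda^*$; one checks that $\lambda \le \tilde\lambda \le \lambda'$ because $1 \le \beta \le \alpha$, that the theorem's implied $q = \tilde O(\sqrt{\lambda'/\tilde\lambda}) = \tilde O(\beta)$ matches our block count, and that the resulting solve cost simplifies to $\tilde O(\sqrt{\alpha/\beta}\cdot nm\cdot \beta^{1+\phi})$.

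After substituting $\alpha = n^{\mu/(2\zeta+\gamma)}$ and $\beta = n^{\nu/(2\zeta+\gamma)}$ (and their analogues in the easy regime), each cost becomes a power of $n$ with exponent affine in $\nu$. The sampling exponent $((1+2\gamma)(1-\mu) - 2\gamma\nu)/(2\zeta+\gamma)$ strictly decreases in $\nu$, while the solve exponent $(2\zeta + 2\gamma + \mu(1/2-\gamma) + \nu(1/2+\phi-\gamma))/(2\zeta+\gamma)$ strictly increases (using $\gamma<1$ and $\phi>0$ small). Equating the two yields the interior balance $\nu^* = (1-2\zeta - \mu(1.5+\gamma))/(\gamma+0.5+\phi)$, and minimizing the maximum over $\nu\in[0,\mu]$ produces three qualitatively distinct outcomes: when $\nu^*\in(0,\mu]$, we take $\nu = \nu^*$ (first case); when $\mu$ has grown so that $\nu^*\le 0$ but sampling is still the bottleneck at $\nu = 0$, we take $\nu=0$ (second case, collapsing to the classical Nystr\"om sampling cost); and when $\mu$ is large enough that the solve cost at $\nu=0$ overtakes sampling, we still take $\nu=0$ but now the solve cost dominates (third case, the classical Nystr\"om solve cost). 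The thresholds $\mu = (1-2\zeta)/(\gamma+1.5)$ and $(1-2\zeta)/(\gamma+1)$ fall out by setting $\nu^*=0$ and by equating the two exponents at $\nu=0$, respectively. The $2\zeta+\gamma\le 1$ regime is handled identically with $\lambda^*=\tilde O(n^{-1})$ replacing the $1/(2\zeta+\gamma)$-normalization, giving thresholds $\gamma/(\gamma+1.5)$ and $\gamma/(\gamma+1)$.

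The main obstacle is that cost (c) is nearly linear in $q$ only because of Theorem \ref{t:linear_system_solve_2}'s recursive preconditioning solver — a naive block-Woodbury inversion would cost $\Omega(q^3 m^3)$ and swamp any gain in (a) and (b). Once we invoke that theorem, the remaining work is algebraic bookkeeping to check that the announced case boundaries in $\mu$ coincide with the transitions of this one-dimensional optimization and that the boundary cases $\nu^*\le 0$ and $\nu^*\ge\mu$ collapse correctly onto $\nu=0$ and $\nu=\mu$.
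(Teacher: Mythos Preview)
Your proposal is correct and follows essentially the same route as the paper: decompose into (a) sampling via Lemma~\ref{l:rls_bless}, (b) block inversions, and (c) the recursive solve via Theorem~\ref{t:linear_system_solve_2}; show (b) is dominated by (a); then balance (a) against (c) over $\nu\in[0,\mu]$ to obtain the three regimes. One point worth making explicit (the paper does so in a single clause): when $\nu=0$ and hence $q=1$, the solve reverts to classical Nystr\"om with cost $\tilde O(nm)$ rather than the recursive-solver's $\tilde O(\sqrt{\alpha}\,nm)$, and it is this switch that produces the second threshold $(1-2\zeta)/(\gamma+1)$ as opposed to having it coincide with the first threshold $(1-2\zeta)/(\gamma+1.5)$.
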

\begin{proof}
We consider $\tilde\lambda= \frac{\alpha}{\beta}\lambda^*$ and $\lambda'=\beta\alpha\lambda^*$. For sampling Nystr\"om landmarks we use Lemma \ref{l:rls_bless} $q$ times leading to cost of $\tilde O(q\cdot m^2/\lambda')$, where $m=d_{n\lambda'}(\K)$. First consider the regime $2\zeta+\gamma>1$. Similar to Lemma \ref{l:cost_single_Nystrom} we break down the cost in three components:
    \begin{itemize}
        \item Nystr\"om landmarks sampling : $\tilde O(q m^2/\lambda') = \tilde O(\beta^{-2\gamma}\alpha^{-(1+2\gamma)} n^{\frac{1+2\gamma}{2\zeta+\gamma}}).$
     \item  Cost of inverting $q$ $\tilde O(m)\times \tilde O(m)$ principal submatrices of $\K$: $\tilde O(qm^3) =\tilde O\Big(\beta^{1-3\gamma}\alpha^{-3\gamma} n^{\frac{3\gamma}{2\zeta+\gamma}}\Big).$
          \item Cost of solving linear system $(\hat\K_{[q]}+n\lambda^*\I)\w=\y$ using techniques from Lemma \ref{t:linear_system_solve_2}: $\tilde O\Big(\sqrt{\tilde\lambda/\lambda^*}\cdot q^{1+\phi}nm\Big) = \tilde O\Big(\sqrt{\alpha/\beta}\cdot\beta^{1+\phi-\gamma}\alpha^{-\gamma}n^{\frac{2\zeta+2\gamma}{2\zeta+\gamma}}\Big) = \tilde O \Big(\beta^{1/2+\phi-\gamma}\alpha^{1/2-\gamma}n^{\frac{2\zeta+2\gamma}{2\zeta+\gamma}}\Big).$
    \end{itemize}
 Under the assumption that $\mu \leq \zeta$, one can show that the cost of landmarks sampling is larger than the cost of inverting the principal submatrices. Comparing the first and the third costs we get
    \begin{align*}
        &\beta^{-2\gamma}\alpha^{-(1+2\gamma)} n^{\frac{1+2\gamma}{2\zeta+\gamma}} \geq \beta^{1/2+\phi-\gamma}\alpha^{1/2-\gamma}n^{\frac{2\zeta+2\gamma}{2\zeta+\gamma}}\\
        \Rightarrow&  n^{\frac{1-2\zeta}{2\zeta+\gamma}} \geq \beta^{1/2+\phi+\gamma}\alpha^{3/2+\gamma}\\
         \Rightarrow& 1-2\zeta \geq \mu(1.5+\gamma) + \nu(0.5+\phi+\gamma)\\
         \Rightarrow& \nu \leq \frac{1-2\zeta-\mu(1.5+\gamma)}{\gamma+\phi+0.5}.
    \end{align*}
     We set $\nu = \max\Big\{0,\min\big\{\mu,\frac{1-2\zeta-\mu(1.5+\gamma)}{\gamma+\phi+0.5}\big\}\Big\}$. If $\nu >0$, then we run  Block-Nystr\"om KRR and if $\nu=0$ i.e. $\beta=1$ and consequently $q=1$, we run classical Nystr\"om KRR. The overall time complexity is given as
 \begin{align*}
        \begin{cases}
            \tilde O\Big(n^{\frac{(1+2\gamma)(1-\mu)-2\gamma\nu}{2\zeta+\gamma}}\Big) \ \text{if} \ \mu < \frac{1-2\zeta}{\gamma+1.5}, \\
         \tilde O\Big(n^{\frac{(1+2\gamma)(1-\mu)}{2\zeta+\gamma}}\Big) \ \ \ \text{if} \ \frac{1-2\zeta}{\gamma+1.5} \leq \mu \leq \frac{1-2\zeta}{\gamma+1},\\
         \tilde O\Big(n^{\frac{2\zeta+\gamma(2-\mu)}{2\zeta+\gamma}}\Big) \ \text{otherwise}.
        \end{cases}
    \end{align*}
    Now let $2\zeta+\gamma \leq 1$. The cost of  Block-Nystr\"om KRR is given as
   \begin{itemize}
       \item Landmarks sampling: $\tilde O(qm^2/\lambda') = \tilde O\Big(\beta^{-2\gamma}\alpha^{-(1+2\gamma)}n^{1+2\gamma}\Big).$
        \item  Cost of inverting $q$ $\tilde O(m)\times \tilde O(m)$ principal submatrices of $\K$: $\tilde O(qm^3) : \tilde O\Big(\beta^{1-3\gamma}\alpha^{-3\gamma}n^{3\gamma}\Big)$
        \item Cost of solving linear system $(\hat\K_{[q]}+n\lambda^*\I)\w=\y$ using techniques from Lemma \ref{t:linear_system_solve_2}: $\tilde O\Big(\sqrt{\tilde\lambda/\lambda^*}\cdot q^{1+\phi}nm\Big) = \tilde O\Big(\sqrt{\alpha/\beta}\cdot\beta^{1+\phi-\gamma}\alpha^{-\gamma}n^{1+\gamma}\Big) = \tilde O\Big(\beta^{1/2+\phi-\gamma}\alpha^{1/2-\gamma}n^{1+\gamma}\Big)$
   \end{itemize}
    Again, one can show that the cost of landmarks sampling is larger than the cost of inverting the principal submatrices. Comparing the first and third costs we get
    \begin{align*}
        \beta^{-2\gamma}\alpha^{-(1+2\gamma)}n^{1+2\gamma} &\geq \beta^{1/2+\phi-\gamma}\alpha^{1/2-\gamma}n^{1+\gamma}\\
        \Rightarrow n^\gamma &\geq \beta^{1/2+\phi+\gamma}\alpha^{3/2+\gamma}\\
        \Rightarrow \gamma &\geq \mu(\gamma+3/2) + \nu(\gamma+\phi+1/2)\\
        \Rightarrow \nu &\leq \frac{\gamma-\mu(\gamma+1.5)}{\gamma+\phi+0.5}
    \end{align*}
    As $0\leq\nu \leq \mu$ we set $\nu = \max\Big\{0,\min\big\{\mu, \frac{\gamma-\mu(\gamma+1.5)}{\gamma+\phi+0.5}\big\}\Big\}$. Again, $\nu >0$ corresponds to Block-Nystr\"om and if $\nu=0$ we run classical Nystr\"om KRR. The overall cost is therefore given as
    \begin{align*}
        \begin{cases}
            \tilde O\Big(n^{(1+2\gamma)(1-\mu)-2\gamma\nu}\Big) \ \text{if} \ \mu < \frac{\gamma}{\gamma+1.5} \\
         \tilde O\Big(n^{(1+2\gamma)(1-\mu)}\Big) \ \ \ \text{if} \ \frac{\gamma}{\gamma+1.5} \leq \mu \leq \frac{\gamma}{1+\gamma}\\
         \tilde O\Big(n^{1+\gamma(1-\mu)}\Big) \ \ \text{otherwise}.
        \end{cases}
    \end{align*}
\end{proof}
\paragraph{Detailed cost comparison between classical Nystr\"om and Block Nystr\"om KRR.}
As shown in Lemma \ref{l:cost_avg_Nystrom} there is a regime in which Block-Nystr\"om provides computational gains over classical Nystr\"om KRR. In particular, we have proven that as long as $\alpha=n^\theta$ for $\theta < \frac{1}{\gamma+1.5}\cdot\frac{\min\{\gamma,1-2\zeta\}}{\max\{1,2\zeta+\gamma\}}$, Block-Nystr\"om enjoys faster time complexity and achieves the same excess risk factor of $\alpha$ as obtained by classical Nystr\"om. Furthermore, within this range of values for $\theta$, the computational gains enjoyed by Block-Nystr\"om change their trend. As $\theta$ increases from $0$ to $\frac{1}{2\gamma+2+o(1)}\cdot\frac{\min\{\gamma,1-2\zeta\}}{\max\{1,2\zeta+\gamma\}}$, the ratio of time complexity of two methods, $\mathcal{T}_{Blk}/\mathcal{T}_{nys}$, scales as $1/n^{2\theta\gamma}$, whereas when $\theta$ increases from $\frac{1}{2\gamma+2+o(1)}\cdot\frac{\min\{\gamma,1-2\zeta\}}{\max\{1,2\zeta+\gamma\}}$ to $\frac{1}{\gamma+1.5}\cdot\frac{\min\{\gamma,1-2\zeta\}}{\max\{1,2\zeta+\gamma\}}$, we use a slightly different averaging scheme where we reduce the number of blocks to avoid computational overhead resulting due to potentially having to solve the linear system $(\hat\K_{[q]}+n\lambda^*\I)\w=\y$ with $q=\tilde O(\alpha)$ number of blocks. In this case we consider $q=\tilde O(\beta)$ with $\beta$ diminishing from $\alpha$ to $1$ as $\theta$ increases in this later regime. This idea of reduced averaging leads $\mathcal{T}_{Blk}/\mathcal{T}_{nys}$ to scale as $\frac{1}{n^{\xi}}$ where $\xi=\frac{1}{\gamma+o(1)+0.5}\Big(\frac{\min\{\gamma,1-2\zeta\}}{\max\{1,2\zeta+\gamma\}}-(\gamma+1.5)\theta\Big)$, meaning larger $\theta$ has diminishing computational gains. At the critical point of $\theta = \frac{1}{\gamma+1.5}\frac{\min\{\gamma,1-2\zeta\}}{\max\{1,2\zeta+\gamma\}}$, Block Nystr\"om KRR reduces to classical Nystr\"om KRR as $\beta$ becomes $1$. 
\section{Proof of Lemma \ref{l:rls-square-main}}
\label{a:preconditioner}
In this section we provide proof of Lemma \ref{l:rls-square-main}. We start by restating the lemma, replacing the notation $\bar\lambda$ with $\lambda$ for notational convenience. For psd matrices, we use $\A\approx_c\B$ to denote $c^{-1}\B\preceq\A\preceq c\B$, and $\omega \approx 2.372$ denotes the fast matrix multiplication exponent.

\begin{lemma}\label{l:rls-square}
    Given an $n\times n$ psd matrix $\A$ with at most $k$ eigenvalues larger than $O(1)$ times its smallest eigenvalue, consider $\lambda=\frac1k\sum_{i>k}\lambda_i(\A)$. Then, $d_\lambda(\A)\leq 2k$, and we can compute $O(1)$-approximations of all $\lambda$-ridge leverage scores of $\A$ in time $\tilde O(\nnz(\A) + k^\omega)$.
\end{lemma}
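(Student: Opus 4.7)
I would split
\begin{align*}
d_\lambda(\A) = \sum_{i\leq k}\frac{\lambda_i(\A)}{\lambda_i(\A)+\lambda} + \sum_{i>k}\frac{\lambda_i(\A)}{\lambda_i(\A)+\lambda}.
\end{align*}
The top-$k$ summands are each at most $1$, contributing at most $k$. For the tail, $\lambda_i(\A)/(\lambda_i(\A)+\lambda)\leq \lambda_i(\A)/\lambda$, and summing yields $(1/\lambda)\sum_{i>k}\lambda_i(\A) = k$ by the very definition of $\lambda$. Hence $d_\lambda(\A)\leq 2k$.

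\textbf{Fast leverage-score computation.} The plan is to bootstrap approximate ridge leverage scores from an oblivious Nystr\"om sketch, then extract them with a Johnson--Lindenstrauss trick. Let $\S\in\R^{\tilde O(k)\times n}$ be a CountSketch (or OSNAP) subspace embedding. The flat-tail assumption combined with the definition of $\lambda$ gives $\lambda_{k+1}(\A)/\lambda = O(k/n)$, so the ridge-regularized range of $\A$ has effective dimension $O(k)$---a regime in which oblivious subspace embeddings of width $\tilde O(k)$ produce a $\lambda$-regularized $O(1)$-approximation $\tilde\A=\C\W^\dagger\C^\top$ of $\A$ with $\C=\A\S^\top$ and $\W=\S\A\S^\top$. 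By Woodbury,
\begin{align*}
\ell_i(\tilde\A,\lambda) = [\C\,N\,\C^\top]_{ii} = \|N^{1/2}\c_i\|^2,\qquad N:=(\lambda\W+\C^\top\C)^{-1},
\end{align*}
where $\c_i := \S(\A\e_i)$ is the CountSketch of the $i$-th column of $\A$ and all $n$ sketched columns are available in $O(\nnz(\A))$ aggregate time. I would pre-compute $N^{1/2}$ together with an $O(\log n)\times\tilde O(k)$ Gaussian JL matrix $\G$ in $\tilde O(k^\omega)$, then assemble $\tilde\Z := \G\, N^{1/2}(\S\A) \in \R^{O(\log n)\times n}$ by applying the $O(\log n)\times\tilde O(k)$ matrix $\G N^{1/2}$ to every sketched column---total time $\tilde O(\nnz(\A) + k^\omega)$. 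The approximate scores $\tilde\ell_i := \|\tilde\Z\e_i\|^2$ are then read off in $O(\log n)$ each and yield constant-factor approximations of $\ell_i(\A,\lambda)$, whose sum is still $O(d_\lambda(\A)) = O(k)$ by the first part.

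\textbf{Main obstacle.} The key hurdle I anticipate is forming the $\tilde O(k)\times\tilde O(k)$ Gram matrix $\C^\top\C$ within budget, since a direct product would cost $\tilde O(nk^2)$. I plan to resolve this by sketching once more: replace $\C^\top\C$ with $(\T\C)^\top(\T\C)$ for a second CountSketch $\T$ of $\tilde O(k\,\polylog\,n)$ rows, which can be assembled via $\T(\A\S^\top) = (\T\A)\S^\top$ in $\tilde O(\nnz(\A)+k^\omega)$ time, and absorb the $(1\pm\epsilon)$ approximate-matrix-product error into the $O(1)$ slack of the regularized Nystr\"om guarantee. That guarantee itself---an oblivious subspace embedding for the at-most-$2k$-dimensional ridge range $(\A+\lambda\I)^{-1/2}\A(\A+\lambda\I)^{-1/2}$---is the main concentration step that remains, and it should follow from the effective dimension bound in the first part together with a standard matrix Chernoff / approximate-matrix-product inequality for sparse oblivious embeddings in the style of Lemma~\ref{rls-projection}.
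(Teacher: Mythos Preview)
Your effective dimension argument is correct and matches the paper. The leverage-score computation, however, has a genuine gap that cannot be patched within your framework.

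The quantity you compute, $\tilde\ell_i = \|N^{1/2}\c_i\|^2$ with $\c_i=\S\a_i$, is exactly $\ell_i(\tilde\A,\lambda)=[\tilde\A(\tilde\A+\lambda\I)^{-1}]_{ii}$. But a $\lambda$-regularized $O(1)$-approximation $\tilde\A+\lambda\I\approx_{O(1)}\A+\lambda\I$ does \emph{not} imply $\ell_i(\tilde\A,\lambda)\geq\Omega(1)\cdot\ell_i(\A,\lambda)$. The matrix $\tilde\A(\tilde\A+\lambda\I)^{-1}$ has rank only $\tilde O(k)$, while $\A(\A+\lambda\I)^{-1}$ is full rank, so the former cannot dominate the latter in Loewner order, and diagonal entries can be off by a factor of $n/k$. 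Concretely, take $\A=\I$ and any $k$: then $\lambda=(n-k)/k$, $\ell_i(\A,\lambda)=k/n$, while $\tilde\A=\S^\top(\S\S^\top)^{-1}\S$ is a rank-$\tilde O(k)$ projection and $\ell_i(\tilde\A,\lambda)=\frac{k}{n}[\tilde\A]_{ii}\approx \tilde O(k^2)/n^2$, smaller by a factor $\Theta(n/k)$. The problem is that your test vector $\c_i=\S\a_i$ has already been sketched, destroying per-coordinate information; the ``main obstacle'' you identify (forming $\C^\top\C$) is not the real issue.

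The paper's fix is to pass from $\ell_i(\A,\lambda)$ to $\ell_i(\A^2,\tilde\lambda)$ with $\tilde\lambda=\frac1k\sum_{i>k}\lambda_i^2$. A short eigenvalue calculation using the flat-tail hypothesis shows $\ell_i(\A,\lambda)\leq O(1)\cdot\ell_i(\A^2,\tilde\lambda)$ while $\sum_i\ell_i(\A^2,\tilde\lambda)\leq 2k$. The point of squaring is algebraic: $\ell_i(\A^2,\tilde\lambda)=\a_i^\top(\A^2+\tilde\lambda\I)^{-1}\a_i$ is an ordinary ridge leverage score of the \emph{unsketched} row $\a_i$ with respect to the data matrix $\A$. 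Now one may sketch only the middle, replacing $\A^2$ by $\M^\top\M$ for $\M=\Pi\A$ with a sparse $\tilde O(k)\times n$ embedding $\Pi$, and the approximation $\a_i^\top(\M^\top\M+\tilde\lambda\I)^{-1}\a_i$ is genuinely within $O(1)$ because the outer vectors $\a_i$ are preserved. The inverse is then applied via Woodbury through the $\tilde O(k)\times\tilde O(k)$ system $\M\M^\top+\tilde\lambda\I$, preconditioned by a second sketch, with a JL matrix on the left to read off all $n$ norms in $\tilde O(\nnz(\A))$.
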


\begin{proof}
    The fact that $d_\lambda(\A)\leq 2k$ is shown in Lemma 2.1 of \cite{dmy24}. Next, let $\lambda_1\geq\lambda_2\geq ...\geq\lambda_n$ be the eigenvalues of $\A$. We will show that it suffices to approximate the ridge leverage scores of $\A^2$ instead of $\A$. Let $\tilde\lambda = \frac1k\sum_{i>k}\lambda_i^2$, so that $d_{\tilde\lambda}(\A^2)\leq 2k$. Also, define $\bar\kappa_k(\A)=\frac1{n-k}\sum_{i>k}\frac{\lambda_i}{\lambda_n}$. We're going to show that:
   \begin{align}
     \bar\kappa_k(\A)\cdot \A(\A+\lambda\I)^{-1}\preceq \bar\kappa_k(\A^2)\cdot \A^2(\A^2+\tilde\lambda\I)^{-1}.
     \label{eq:lev-1}
   \end{align}
    Since the matrices commute, it is enough to show this for each
eigenvalue. Observe that we have:
\begin{align*}
  \frac{\bar\kappa_k(\A^2)}{\bar\kappa_k(\A)} =
  \frac{\sum_{j>k}\lambda_j^2}{\sum_{j>k}\lambda_n\lambda_j}\in [1,\lambda_{k+1}/\lambda_n].
\end{align*}
This means that
\begin{align*}
  \frac{\lambda_i}{\lambda_i+\frac1k\sum_{j>k}\lambda_j}
  &=\frac{\lambda_i^2}{\lambda_i^2+\frac1k\sum_{j>k}\lambda_i\lambda_j}
    \leq\frac{\lambda_i^2}{\lambda_i^2+\frac1k\sum_{j>k}\lambda_n\lambda_j}
  \\
  &=\frac{\lambda_i^2}{\lambda_i^2+\frac{\bar\kappa_k(\A)}{\bar\kappa_k(\A^2)}\frac1k\sum_{j>k}\lambda_j^2}
  \leq \frac{\bar\kappa_k(\A^2)}{\bar\kappa_k(\A)}\frac{\lambda_i^2}{\lambda_i^2+\frac1k\sum_{j>k}\lambda_j^2},
\end{align*}
which implies \eqref{eq:lev-1}. Since $\lambda$-ridge leverage scores of $\A$ are the diagonal entries of $\A(\A+\lambda\I)^{-1}$, this shows that for any $i$:
\begin{align*}
    \ell_{i}(\A,\lambda) 
    &= \e_i^\top\A(\A+\lambda\I)^{-1}\e_i
    \\
    &\leq \frac{\bar\kappa_k(\A^2)}{\bar\kappa_k(\A)}\e_i^\top\A^2(\A^2+\tilde\lambda\I)^{-1}\e_i
    \\
    &=O(1)\cdot \ell_{i}(\A^2,\tilde\lambda),
\end{align*}
where in the last step we used that $\bar\kappa_k(\A^2) = O(1)$ because all of the tail eigenvalues are within a constant of each other. This, together with the fact that $\sum_i\ell_{i}(\A^2,\tilde\lambda) = O(k)$, implies that it suffices to perform approximate ridge leverage score sampling with respect to $\ell_{i}(\A^2,\tilde\lambda)$. 

Next, we show how to construct approximations of $\ell_{i}(\A^2,\tilde\lambda)$ in $\tilde O(\nnz(\A)+k^\omega)$ time. First, notice that
\begin{align*}
    \ell_{i}(\A^2,\tilde\lambda) = \a_i^\top(\A^2+\tilde\lambda\I)^{-1}\a_i,
\end{align*}
where $\a_i$ is the $i$th row of $\A$. Next, we approximate the inner matrix by using a $\tilde O(k)\times n$ sparse sketching matrix $\Pi$ \citep{cohen2016optimal}, computing $\M=\Pi\A$ in $\tilde O(\nnz(\A))$ time, so that with high probability $\M^\top\M+\tilde\lambda\I\approx_2 \A^2+\tilde\lambda\I$. Thus, it suffices to approximate:
\begin{align*}
\a_i^\top(\M^\top\M+\tilde\lambda\I)^{-1}\a_i
&= \|(\M^\top\M+\tilde\lambda\I)^{-1/2}\a_i\|^2
\\
&= \|\B_{\tilde\lambda}(\M^\top\M+\tilde\lambda\I)^{-1}\a_i\|^2,\qquad\text{where}
    \quad\B_{\tilde\lambda} = \begin{bmatrix}\M\\\sqrt{\tilde\lambda}\I\end{bmatrix}.
\end{align*}
Note that if $\S$ is an $O(\log n)\times 2n$ Johnson-Lindenstrauss embedding, then we can approximate the above norms to within a constant factor with $\tilde\ell_i = \|\S\B_{\tilde\lambda}(\M^\top\M+\tilde\lambda\I)^{-1}\a_i\|^2$. In order to compute these estimates, we first pre-compute the $O(\log n)\times d$ matrix $\S\B_{\tilde\lambda}$, and then compute $\mathbf{R}=\S\B_{\tilde\lambda}\cdot(\M^\top\M+\tilde\lambda\I)^{-1}$ by solving $O(\log n)$ linear systems with $\M^\top\M+\tilde\lambda\I$. The linear systems can be solved by following the preconditioning strategy of \cite{dmy24}, observing that: 
  \begin{align*}
    (\M^\top\M+\tilde\lambda\I)^{-1}
    = \frac1{\tilde\lambda}\Big(\I -
    \M^\top(\M\M^\top+\tilde\lambda\I)^{-1}\M\Big).
  \end{align*}
Thus, it suffices to solve a linear system with $\M\M^\top+\tilde\lambda\I$. To do this, we construct a $k\times \tilde O(k)$ sketch $\tilde\M = \M\S_2^\top$ such that $\tilde\M\tilde\M^\top \approx_{O(1)} \M\M^\top$ via a subspace embedding \citep{cohen2016optimal}, and precondition that linear system with $\B^{-1}=(\tilde\M\tilde\M^\top+\tilde\lambda\I)^{-1}\approx_{O(1)}(\M\M^\top+\tilde\lambda\I)^{-1}$. Thus, after precomputing
$\B^{-1}$ at the cost of
$\tilde O(\nnz(\M)+k^\omega)$, we can solve
the linear system $(\M^\top\M+\tilde\lambda\I)^{-1}\v$ in time $\tilde O(\nnz(\A)+k^2)$.

Finally, we compute the $O(\log n)\times n$ matrix $\mathbf{R}\A$ and then let $\tilde\ell_i$ be the squared row norms of this matrix. The overall costs are $\tilde O(\nnz(\A)+k^\omega)$ for precomputing $\mathbf{R}$ and then $\tilde O(\nnz(\A))$ for computing the squared row norms.

\end{proof}

\end{document}